\definecolor{mblue}{RGB}{0,50,200}
\newcommand{\mca}{\mathcal}
\newcommand{\msc}{\mathscr}
\newcommand{\mbb}{\mathbb}
\newcommand{\bra}[1]{\left( #1 \right)}
\newcommand{\bras}[1]{\left[ #1 \right]}
\newcommand{\brab}[1]{\left\{ #1 \right\}}
\newcommand{\Tr}[1]{{\rm tr}\left\{#1\right\}}
\newcommand{\Kt}[1]{|#1\rangle}
\newcommand{\Mat}[1]{|#1\rangle\langle #1|}
\newcommand{\dMat}[2]{|#1\rangle\langle #2|}
\newcommand{\davgObs}[3]{\langle #1|#2|#3\rangle}
\newcommand{\sectionprl}[1]{{\em #1}\/.---}
\begin{document}
\title{Finite-Time Quantum Landauer Principle and Quantum Coherence}

\author{Tan Van Vu}
\email{tanvu@rk.phys.keio.ac.jp}

\affiliation{Department of Physics, Keio University, 3-14-1 Hiyoshi, Kohoku-ku, Yokohama 223-8522, Japan}

\author{Keiji Saito}
\email{saitoh@rk.phys.keio.ac.jp}

\affiliation{Department of Physics, Keio University, 3-14-1 Hiyoshi, Kohoku-ku, Yokohama 223-8522, Japan}

\date{\today}

\begin{abstract}
The Landauer principle states that any logically irreversible information processing must be accompanied by dissipation into the environment.
In this study, we investigate the heat dissipation associated with finite-time information erasure and the effect of quantum coherence in such processes.
By considering a scenario wherein information is encoded in an open quantum system whose dynamics are described by the Markovian Lindblad equation, we show that the dissipated heat is lower-bounded by the conventional Landauer cost, as well as a correction term inversely proportional to the operational time.
To clarify the relation between quantum coherence and dissipation, we derive a lower bound for heat dissipation in terms of quantum coherence.
This bound quantitatively implies that the creation of quantum coherence in the energy eigenbasis during the erasure process inevitably leads to additional heat costs.
The obtained bounds hold for arbitrary operational time and control protocol.
By following an optimal control theory, we numerically present an optimal protocol and illustrate our findings by using a single-qubit system.
\end{abstract}

\pacs{}
\maketitle

\sectionprl{Introduction}Any irreversible information processing unavoidably incurs a thermodynamic cost.
This fundamental relationship between information and thermodynamics is embodied in the Landauer principle \cite{Landauer.1961.JRD}.
The principle states that the amount of heat dissipation $Q$ required to erase information is lower-bounded by the entropy change $\Delta S$ of information-bearing degrees of freedom, $\beta Q\ge\Delta S$.
Here, $\beta$ is the inverse temperature of the environment.
This inequality -- referred to as the Landauer bound or limit -- lays a foundation for the thermodynamics of information \cite{Sagawa.2012.PTP,Parrondo.2015.NP,Goold.2016.JPA} and computation \cite{Bennett.1982.IJTP,Wolpert.2019.JPA} and provides a resolution to the paradox of Maxwell's demon \cite{Maruyama.2009.RMP}.
The attainability of the Landauer bound in the slow quasistatic limit has been experimentally verified in various systems \cite{Brut.2012.N,Jun.2014.PRL,Yan.2018.PRL,Hong.2016.SA,Saira.2020.PRR,Dago.2021.PRL}.
Nonetheless, modern computing requires fast memory erasure, which generally comes with a thermodynamic cost far beyond the Landauer limit.
Thus, improving our understanding of heat dissipation in finite-time information processing is relevant to the development of efficient computing devices.

In recent years, the Landauer principle has been extensively studied in the framework of stochastic thermodynamics from the classical to the quantum regime \cite{Sagawa.2009.PRL,Esposito.2011.EPL,Hilt.2011.PRE,Deffner.2013.PRX,Reeb.2014.NJP,Browne.2014.PRL,Lorenzo.2015.PRL,Goold.2015.PRL,Alhambra.2017.PRA,Campbell.2017.PRA,Guarnieri.2017.NJP,Boyd.2018.PRX,Klaers.2019.PRL,Shiraishi.2019.PRL,Dechant.2019.arxiv,Buscemi.2020.PRA,Timpanaro.2020.PRL,Wolpert.2020.arxiv,Vu.2021.PRL2,Riechers.2021.PRA}. 
One central issue is investigating how much heat needs to be dissipated to erase information in far-from-equilibrium situations. 
When the erasure fidelity is predetermined, a tradeoff between dissipation and operation time apparently occurs, as indicated in slow driving protocols \cite{Zulkowski.2014.PRE,Zulkowski.2015.PRE,Zulkowski.2015.PRE2}. 
Recently, Proesmans and coworkers have derived a tradeoff relation for arbitrary driving speed in classical bits modeled by a double-well potential \cite{Proesmans.2020.PRL,Proesmans.2020.PRE}. 
They showed that the minimum dissipation for erasing a classical bit is bounded from below by the Landauer cost, as well as a term inversely proportional to the operational time, which is somewhat reminiscent of a speed limit \cite{Shiraishi.2018.PRL,Funo.2019.NJP,Vo.2020.PRE,Vu.2021.PRL}. 
In the short-time limit, the finite-time correction term is dominant over the Landauer cost. 
From this development, quantum extensions relevant to information erasure in qubits are strongly desired.

In addition, in the quantum regime, quantum coherence is one of the crucial aspects. Recently, the role of quantum coherence has been intensively discussed in the context of finite-time thermodynamics \cite{Horodecki.2013.NC,Uzdin.2015.PRX,Lostaglio.2015.PRX,Korzekwa.2016.NJP,Francica.2019.PRE,Santos.2019.npjQI,Francica.2020.PRL,Tajima.2021.PRL}. In the case of conventional heat engines driven by heat baths with different temperatures, quantum coherence hinders the thermodynamic performance in the linear response regime with respect to small driving amplitudes \cite{Brandner.2017.PRL} or small driving speeds \cite{Brandner.2020.PRL}. On the other hand, power outputs can be enhanced by coherence in several far-from-equilibrium models \cite{Scully.2011.PNAS,Uzdin.2015.PRX,Watanabe.2017.PRL,Menczel.2020.PRA}. Thus, the role of quantum coherence in nonequilibrium thermodynamics is highly elusive. Concerning the Landauer principle, Miller and coworkers have examined the slow-driving case and have found that quantum coherence generates additional dissipation \cite{Miller.2020.PRL.QLP}. This property is consistent with the aforementioned slow-driving case in heat engines.
Hence, as the next step toward a complete understanding of the quantum Landauer principle, determining the effect of an arbitrary driving speed on the relation between quantum coherence and dissipation is clearly important.

From these two backgrounds, in this paper, we address the following questions: (i) What is the effect of a finite-time protocol with a finite erasure error on heat dissipation? (ii) What is the role of quantum coherence in heat dissipation for an arbitrary driving speed? These two are clearly fundamental for the in-depth understanding of the quantum Landauer principle.
To answer these questions, we consider a dynamical class described by the Markovian Lindblad equations, which guarantee thermodynamically consistent dynamics.
We then rigorously provide quantitative answers to these questions. The result for the first question is presented below in Ineq.~\eqref{eq:main.result.1}, which is identified as a quantum extension of the classical version obtained in Ref.~\cite{Proesmans.2020.PRL}. 
For the second question, we show the inevitable heat dissipation caused by quantum coherence as below in Ineq.~\eqref{eq:main.result.2}, which explicitly shows that quantum coherence in the qudit always induces additional heat dissipation compared to classical protocols. 
The obtained results hold for an arbitrary control protocol and the driving speed. 
By using a single-qubit system, we present an optimal control protocol obtained numerically by optimizing the dissipation and erasure fidelity, which supports our findings.

\sectionprl{Erasure setup and first main result}We consider an information erasure process realized by using a controllable open qudit system with an arbitrary dimension $d$ and an infinite heat bath at the inverse temperature $\beta=1/k_{\rm B}T$. 
The former and latter are referred to as information-bearing and non-information-bearing degrees of freedom, respectively. 
The information content we want to erase is encoded in the density matrix of the system, which is typically set to a maximally mixed state $\varrho_0=\mbb{I}/d$. This mixed state is regarded as an ensemble of many initial pure states that are subject to reset.
The maximally mixed state is sufficient to understand the average dissipated heat of the erasure process for all initial pure states, and if an erasure protocol can reliably reset the qudit from the maximally mixed state, then it does so for an arbitrary pure state (see Supplemental Material (SM) \cite{Supp.PhysRev} for details).
The information is erased in a finite time $\tau$ by varying the control Hamiltonian $H_t$. 
Here, we focus on a dynamical class in which the system is always weakly attached to the heat bath, and its dynamics are described by the Lindblad master equation \cite{Lindblad.1976.CMP,Gorini.1976.JMP}. Let $\varrho_t$ be the density matrix of the system at time $t$; then, the dynamics are described as 
\begin{equation}\label{eq:Lindblad.eq}
\begin{aligned}
\dot{\varrho}_t & =\mca{L}_t(\varrho_t)\, ,\\
\mca{L}_t(\varrho) &\coloneqq -i\hbar^{-1}[H_t,\varrho]+\sum_k\mca{D}[L_k(t)]\varrho,
\end{aligned}
\end{equation}
with the dissipator given by $\mca{D}[L]\varrho\coloneqq L\varrho L^\dagger-\brab{L^\dagger L,\varrho}/2$. 
Here, $L_k(t)$ are time-dependent jump operators that account for transitions between different energy eigenstates. 
The dot indicates the time derivative, and $[\circ,\star]$ and $\{\circ,\star\}$ denote the commutator and anticommutator of the two operators, respectively.
Assume that the generator $\mca{L}_t$ obeys the quantum detailed balance with respect to the Hamiltonian $H_t$ at all times \cite{Alicki.1976.RMP}.
This condition ensures that the thermal state $\pi_t:=e^{-\beta H_t}/\Tr{e^{-\beta H_t}}$ is the instantaneous stationary state of the Lindblad equation, $\mca{L}_t(\pi_t)=0$.
Hereinafter, both the Planck constant and the Boltzmann constant are set to unity: $\hbar=k_{\rm B}=1$.

Resetting the system state to a specific state results in a change in the system entropy $\Delta S$, which is quantified by the von Neumann entropy. 
The entropy decrease in the information-bearing degrees of freedom is compensated by the amount of heat transferred to the environment $Q$. 
The change in the system entropy and the average dissipated heat are, respectively, written as follows \cite{Alicki.1979.JPA}:
\begin{equation}
\begin{aligned}
\Delta S &\coloneqq -\Tr{\varrho_0\ln\varrho_0}+\Tr{\varrho_\tau\ln\varrho_\tau},\\
Q &\coloneqq -\int_0^\tau\Tr{H_t\dot{\varrho}_t}dt.
\end{aligned}
\end{equation}
The Landauer bound can be immediately derived from the second law of thermodynamics, which states that the irreversible entropy production $\Sigma_\tau$ during period $\tau$ is always nonnegative \cite{Spohn.1978.JMP}, $\Sigma_\tau=-\Delta S+\beta Q\ge 0$.

Under the given setup, we now show the first main result, leaving the details of the proof in the SM \cite{Supp.PhysRev}:
\begin{equation}\label{eq:main.result.1}
  \beta Q\ge\Delta S+\frac{\|\varrho_0-\varrho_\tau\|_1^2}{2\tau\overline{\gamma}_\tau}\ge \Delta S+\frac{[2(1-1/d)-\epsilon]^2}{2\tau\overline{\gamma}_\tau}.
\end{equation}
Here, $\|...\|_1$ is the trace norm, $\epsilon\coloneqq\|\varrho_\tau-\Mat{0}\|_1$ quantifies the distance error to the ground state, and $\overline{\gamma}_\tau$ is the time-averaged dynamical activity that characterizes the timescale of the thermal relaxation, defined as $\overline{\gamma}_\tau\coloneqq\tau^{-1}\int_0^\tau\sum_k\Tr{L_k(t)\varrho_tL_k(t)^\dagger}dt$ \cite{Shiraishi.2018.PRL,Maes.2020.PR}. 
The obtained bound implies that the cost of erasing the information within a finite time is at least the Landauer cost plus a distance term proportional to $\tau^{-1}$. 
The result holds for {\it arbitrary} operational time, driving speed, and the final state, which can be far from the ground state.

Physically important aspects of this relation are now in order. 
First, suppose that we use a protocol in which the stored information is fully erased -- that is, $\varrho_\tau$ is equal to $\Mat{0}$.
Then, the bound Ineq.~\eqref{eq:main.result.1} becomes
\begin{equation}
\beta Q\ge \ln d+\frac{2(1-1/d)^2}{\tau\overline{\gamma}_\tau}.\label{eq:full.erase.bound}
\end{equation}
For a fast erasure $\tau\bar{\gamma}_{\tau}\ll 1$, the second term in the lower bound becomes dominant. 
On the other hand, the lower bound reduces exactly to the Landauer cost in the slow-erasure limit $\tau\bar{\gamma}_{\tau}\gg 1$. 
The inequality \eqref{eq:full.erase.bound} is the quantum extension of the classical finite-time Landauer bound derived in Ref.~\cite{Proesmans.2020.PRL} (see Ineq.~(9) therein).
Next, consider the case of imperfect information erasure; that is, $\varrho_\tau$ is not equal to $\Mat{0}$. 
In this case, we can discuss how quantum coherence remaining at the final time is related to dissipation, thus revealing an intrinsic difference between classical and quantum protocols. 
To this end, let $\Lambda(\circ)\coloneqq\sum_n\Pi_n(\circ)\Pi_n$ be the dephasing map in the energy eigenbasis of the ending Hamiltonian $H_\tau$, in which $\Pi_n$ is the projection operator onto the $n$th eigenspace. 
Because the trace norm is contractive under completely positive trace-preserving maps, we can decompose $\|\varrho_0-\varrho_\tau\|_1^2$ into classical and quantum parts as
\begin{equation}
\|\varrho_0-\varrho_\tau\|_1^2=\|\Lambda(\varrho_0)-\Lambda(\varrho_\tau)\|_1^2+C_{\rm res},
\end{equation}
where $C_{\rm res}\ge 0$ is a quantum term quantifying the residual coherence in the final state with respect to the energy eigenbasis $\{\Kt{n_\tau}\}$.
Analogously, the von Neumann entropy production can also be decomposed as
\begin{equation}
\Delta S=\Delta S_{\rm cl}+C_{\rm rel},
\end{equation}
where $\Delta S_{\rm cl}\coloneqq S(\Lambda(\varrho_0))-S(\Lambda(\varrho_\tau))$ is the classical entropy change in terms of the population distribution and $C_{\rm rel}\coloneqq S(\Lambda(\varrho_\tau))-S(\varrho_\tau)\ge 0$ is the relative entropy of coherence in the final state \cite{Baumgratz.2014.PRL}.
Consequently, the lower bound in Ineq.~\eqref{eq:main.result.1} can be separated into classical and quantum terms as
\begin{equation}
\beta Q\ge \underbrace{\Delta S_{\rm cl}+\frac{\|\Lambda(\varrho_0)-\Lambda(\varrho_\tau)\|_1^2}{2\tau\overline{\gamma}_\tau}}_{\text{classical}}+\underbrace{C_{\rm rel}+\frac{C_{\rm res}}{2\tau\overline{\gamma}_\tau}}_{\text{quantum}}.\label{eq:class.quantum.decomp}
\end{equation}
The classical term is a bound that can also be derived by using a classical probabilistic process. 
The above expression indicates that the quantum coherence left in the final state contributes a nonnegative term in the lower bound.

\sectionprl{Second main result}The relation (\ref{eq:class.quantum.decomp}) describes the quantum coherence in the final state. 
However, this is not enough to capture the role of coherence because quantum coherence is generated during the time it takes to reach the final state. 
Therefore, in addition to the above argument, we discuss the general relation between quantum coherence and heat dissipation during a finite time $\tau$. 
To achieve a general result, we assume a general initial density matrix and consider the time evolution obeying the dynamics (\ref{eq:Lindblad.eq}). 
Quantum coherence in the energy eigenbasis is typically generated because of the presence of non-commuting terms in the Hamiltonian. 
For the present aim, using the following intuitive $\ell_1$ norm of coherence \cite{Baumgratz.2014.PRL} conveniently helps quantify the amount of coherence contained in the state $\varrho_t$:
\begin{equation}
C_{\ell_1}(\varrho_t)\coloneqq\sum_{m\neq n}|\davgObs{m_t}{\varrho_t}{n_t}|.
\end{equation}
Here, $\{\Kt{n_t}\}$ is the instantaneous energy eigenbasis of the Hamiltonian $H_t$ \cite{fnt1}. 
Mathematically, $C_{\ell_1}$ is the sum of the absolute values of all the off-diagonal elements with respect to the energy eigenbasis. 
This quantifier was shown to be a suitable measure of coherence \cite{Baumgratz.2014.PRL} and is widely used in literature \cite{Streltsov.2017.RMP}. 
The amount of coherence accumulated during period $\tau$ can thus be defined naturally:
\begin{equation}
\msc{C}_\tau\coloneqq\int_0^\tau C_{\ell_1}(\varrho_t)dt.
\end{equation}
The second main result of this study is that the average dissipated heat in a finite time is lower-bounded by the Landauer cost plus a nonnegative coherence term (see the SM \cite{Supp.PhysRev}):
\begin{equation}
\beta Q\ge \Delta S+\frac{\overline{\gamma}_\tau^{\rm R}\msc{C}_\tau^2}{2\tau}.\label{eq:main.result.2}
\end{equation}
In this equation, $\overline{\gamma}_\tau^{\rm R}$ is the time average of the characteristic relaxation rate, given by $(\overline{\gamma}_\tau^{\rm R})^{-1}=\tau^{-1}\int_0^\tau\bra{\sum_k\|L_k(t)\|_\infty^2}^{-1}dt$ in the $d=2$ case (see the SM \cite{Supp.PhysRev} for the form of $\overline{\gamma}_\tau^{\rm R}$ in the generic case). 
Here, $\|...\|_{\infty}$ is the spectral norm.
We emphasize that the result holds for the {\it arbitrary} operational time and driving speed.
The obtained relation sets a lower bound on dissipation in terms of information-theoretic entropy and quantum coherence. 
The inequality \eqref{eq:main.result.2} implies that the quantum coherence produced during information erasure must be accompanied by additional heat. 
The greater the generation of coherence, the more heat is dissipated. 
The relation \eqref{eq:main.result.2} is valid for the entire driving speed regime, thus covering the slow-driving limit \cite{Miller.2020.PRL.QLP}. 
Combining this with the first main result (\ref{eq:main.result.1}), the Landauer bound can be strengthened as $\beta Q\ge \Delta S+\max\{\|\varrho_0-\varrho_\tau\|_1^2 /\overline{\gamma}_\tau, \overline{\gamma}_\tau^{\rm R}\msc{C}_\tau^2\}/(2\tau)$, when we use the maximally mixed state for the initial state.

\sectionprl{Numerical demonstration with the optimal control theory}We exemplify our findings with a simple model of information erasure by using a single qubit.
Two-level qubit systems are relevant in quantum computation and are commonly used to store memory in measurement-driven engines \cite{Elouard.2017.PRL,Bresque.2021.PRL}.
The qubit can be viewed as a spin-$1/2$ particle weakly coupled to a large bath of bosonic harmonic oscillators \cite{Leggett.1987.RMP}, evolving according to the Hamiltonian
\begin{equation}
H_t=\frac{\epsilon_t}{2}\bras{\cos(\theta_t)\sigma_z+\sin(\theta_t)\sigma_x}
\end{equation}
and two jump operators $L_1(t)=\sqrt{\alpha\epsilon_t(N_t+1)}\dMat{0_t}{1_t}$, $L_2(t)=\sqrt{\alpha\epsilon_tN_t}\dMat{1_t}{0_t}$, in which $\sigma_{x,y,z}$ are the Pauli matrices, $\alpha$ is the coupling strength, $N_t\coloneqq 1/(e^{\beta\epsilon_t}-1)$ is the Planck distribution, and $\epsilon_t$ and $\theta_t$ are time-dependent control parameters.
The quantity $\epsilon_t$ is the energy gap between the instantaneous energy eigenstates, whereas $\theta_t$ controls the relative strength of coherent tunneling to energy bias \cite{Leggett.1987.RMP}.
If $\theta_t$ is fixed at all times, it corresponds to a classical protocol.
Otherwise, quantum coherence in the energy eigenbasis is generated, implying a genuine quantum protocol.

Resetting the qubit to the ground state $\Kt{0}$ with a probability close to $1$ can be achieved with various control protocols.
For example, we can either gradually increase the energy gap $\epsilon_t$ from an initial value $\epsilon_0\approx 0$ to a final value $\beta\epsilon_\tau\gg 1$ while also changing $\theta_t$ \cite{Miller.2020.PRL.QLP}, or we can quench the Hamiltonian at $t=0$ and let the system relax to an equilibrium state close to $\Mat{0}$.
Here we particularly consider the Pareto-optimal protocols \cite{Miettinen.1999}, which optimize two incompatible objectives: the success probability and the average dissipated heat.
Specifically, we solve the optimization problem of minimizing the following multi-objective functional:
\begin{equation}
\mca{F}[\{\epsilon_t,\theta_t\}]\coloneqq \lambda\tau\beta Q-(1-\lambda)F(\varrho_\tau,\Mat{0}).\label{eq:opt.functional}
\end{equation}
Here, $\lambda\in[0,1)$ is a weighting factor, and $F(\varrho,\sigma)=\Tr{\sqrt{\sqrt{\varrho}\sigma\sqrt{\varrho}}}^2$ is the fidelity of the two quantum states $\varrho$ and $\sigma$ \cite{Jozsa.1994.JMO}.
The first and second terms of the functional correspond to the average dissipated heat and erasure fidelity, respectively.
When $\lambda=0$, the reliability of the information erasure takes precedence over dissipation, and the final state is optimized to be as close to the ground state as possible.
The $\lambda>0$ case indicates that the dissipation is also minimized under a given allowable error of the final state.
Because of physical limitations, imposing constraints on the control parameters is reasonable.
Hereinafter, we set the following lower and upper bounds on the parameters $\beta\epsilon_t\in [0.4,10]$ and $\theta_t\in[-\pi,\pi]$.

\begin{figure}[t]
\centering
\includegraphics[width=1.0\linewidth]{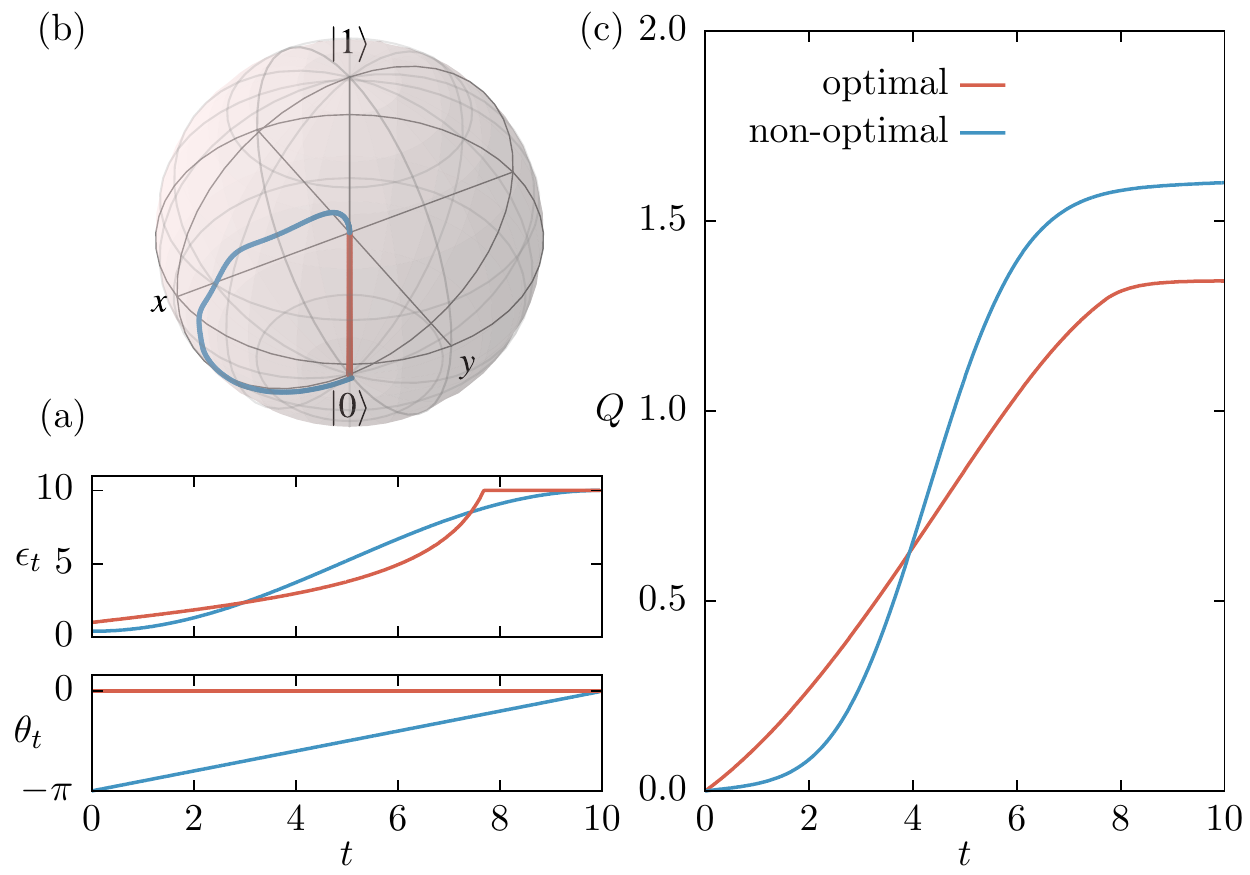}
\protect\caption{Numerical results obtained with the optimal and non-optimal protocols, depicted by the red and blue lines, respectively. (a) Time variations of the control parameters $\epsilon_t$ and $\theta_t$. (b) Geometrical representation of the time evolution of the qubit on the Bloch sphere. (c) Average dissipated heat over time. Other parameters are given by $\alpha=0.2$, $\beta=1$, $\epsilon_0=0.4$, $\epsilon_\tau=10$, and $\tau=10$.}\label{fig:result1}
\end{figure}

Obtaining the analytical solution for the optimization problem in Eq.~\eqref{eq:opt.functional} under the given constraints is a daunting task.
Hence, we numerically solve the optimal protocol by discretizing the protocol and minimizing the functional $\mca{F}$ by using a nonlinear programming method \cite{Supp.PhysRev}.
To demonstrate the effect of quantum coherence on information erasure, we consider $\varrho_0=\mbb{I}/2$.
The initial state thus does not contain any amount of coherence.
We examine two control protocols: the {\it optimal protocol} found through the nonlinear programming method for $\lambda>0$ satisfying $(1-\lambda)/\lambda=10^4$ and the {\it non-optimal protocol} used in Ref.~\cite{Miller.2020.PRL.QLP}, in which $\epsilon_t=\epsilon_0+(\epsilon_\tau-\epsilon_0)\sin(\pi t/2\tau)^2$ and $\theta_t=\pi(t/\tau-1)$.
Both protocols drive the system to the ground state with the same order of error at the final time.
Figure~\ref{fig:result1}(a) shows the time variation of the control parameters $\epsilon_t$ and $\theta_t$ for each protocol.
The energy gap $\epsilon_t$ in each protocol increases gradually in different ways and eventually reaches the same value.
Interestingly, in the optimal protocol, $\theta_t$ is always fixed at $0$; hence, no coherence is created during the period $\tau$.
On the other hand, the non-optimal protocol constantly changes $\theta_t$ and generates quantum coherence.
We express the density matrix in the Bloch representation and plot the time evolution of the Bloch vector in Fig.~\ref{fig:result1}(b).
Notice that the qubit evolves through completely different paths toward the ground state for each protocol.
Figure \ref{fig:result1}(c) shows the average dissipated heat for each protocol at each time.
Note also that the optimal protocol clearly dissipates less heat than the non-optimal protocol, which is consistent with our finding that the generation of coherence incurs additional heat costs.

From an energetic point of view, quantum coherence has been shown to be detrimental to the erasure of information.
To optimize dissipation, the qubit should behave as a classical bit.
Further, we numerically find that this is the case even when dissipation is not minimized -- that is, when $\lambda=0$.
In other words, quenching the Hamiltonian to $H_0=\epsilon_\tau\sigma_z/2$ at time $t=0$ and relaxing the system to equilibrium is the best protocol to bring the qubit as close as possible to the ground state.
We conjecture that the quench protocol, namely, thermal relaxation, is the optimal protocol in terms of erasure reliability.
Note that the conjecture is restricted to the $\varrho_0=\mbb{I}/2$ case because the shortcut-to-equilibration protocol \cite{Dann.2019.PRL} or an optimal protocol \cite{Supp.PhysRev} may outperform the quench protocol for the $\varrho_0\neq\mbb{I}/2$ case.
When the initial state is not described by the maximally mixed state and contains some coherence, we also find that the optimal protocol hardly produces quantum coherence as compared to the non-optimal protocol (see the SM \cite{Supp.PhysRev} for details).
From the analytical and numerical evidence, we can conclude that the creation of quantum coherence should be avoided when erasing information.

\begin{figure}[b]
\centering
\includegraphics[width=1.0\linewidth]{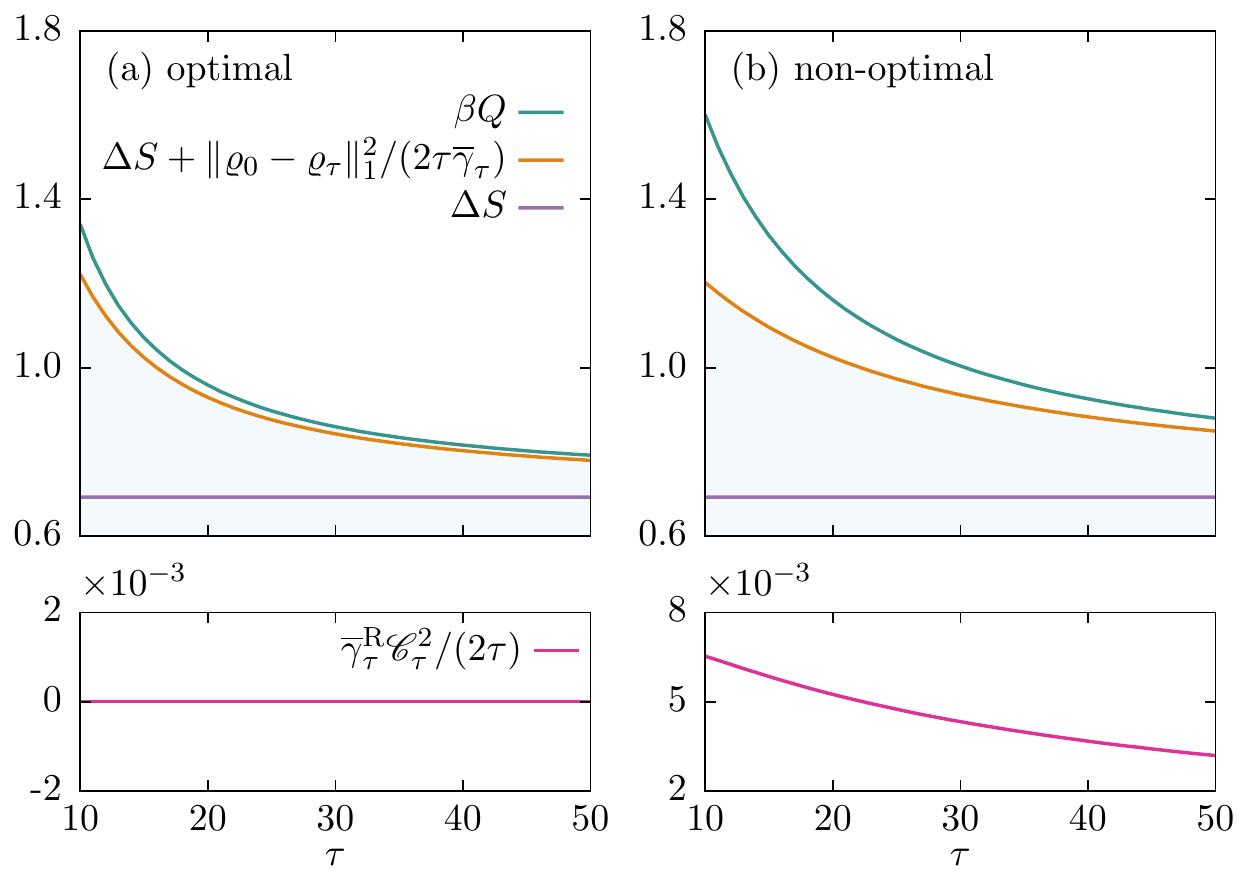}
\protect\caption{Numerical illustration of the bound Ineq.~\eqref{eq:main.result.1} with the (a) optimal and (b) non-optimal protocols. The upper panels show the average dissipated heat $\beta Q$, the derived bound $\Delta S+\|\varrho_0-\varrho_\tau\|_1^2/(2\tau\overline{\gamma}_\tau)$, and the Landauer cost $\Delta S$ for each operational time. The lower panels show the amount of quantum coherence generated in the energy eigenbasis. The operational time $\tau$ is varied, while the other parameters are set to $\alpha=0.2$, $\beta=1$, $\epsilon_0=0.4$, and $\epsilon_\tau=10$.}\label{fig:result2}
\end{figure}

Next, we investigate the performance of the bound Ineq.~\eqref{eq:main.result.1} with optimal and non-optimal protocols.
We vary the operational time and plot the dissipated heat $\beta Q$, the derived bound $\Delta S+\|\varrho_0-\varrho_\tau\|_1^2/(2\tau\overline{\gamma}_\tau)$, the Landauer cost $\Delta S$, and the coherence term $\overline{\gamma}_\tau^{\rm R}\msc{C}_\tau^2/(2\tau)$ as functions of $\tau$ in Fig.~\ref{fig:result2}.
Notice that the dissipated heat is always bounded from below by the bound Ineq.~\eqref{eq:main.result.1} and is far beyond the Landauer cost.
Particularly, in the case of the optimal protocol, the derived bound is tight and asymptotically saturated as $\tau$ increases.
The optimal protocol is also less dissipative than the non-optimal protocol for all operational times.
Simultaneously, the optimal protocol does not create coherence, whereas a positive amount of coherence is generated in the non-optimal protocol.
Regarding the tightness of the derived bounds, Ineq.~\eqref{eq:main.result.1} can be tighter or looser than Ineq.~\eqref{eq:main.result.2}.
If little or no coherence is produced, the former is generally tighter than the latter.
Conversely, when a large amount of coherence is generated -- that is, when $\msc{C}_\tau\gg 1$ -- the latter is stronger than the former (see the SM \cite{Supp.PhysRev} for numerical illustrations).

\sectionprl{Summary}We derived the lower bound on the thermodynamic cost associated with finite-time information erasure for Markovian open quantum dynamics.
The bound is far beyond the Landauer cost for fast control protocols.
We also revealed the relation between quantum coherence and heat dissipation for the entire driving speed regime, stating that the creation of quantum coherence inevitably causes additional heat costs. 
In the context of the Landauer principle, this relation implies that quantum coherence is detrimental to erasing information from an energetic viewpoint.
We confirmed the results with both optimal and non-optimal protocols.
Our findings are not only fundamentally critical but also helpful in establishing a design principle for efficient memory erasure.
The generalization of the results obtained here to other cases, such as the finite-size environments and the non-Markovian regime \cite{Rivas.2020.PRL}, is a future study.

\begin{acknowledgments}
\sectionprl{Acknowledgments}We are grateful to K. Funo and H. Tajima for the fruitful discussion. 
We also thank K. Brandner for telling us about his study on quantum heat engines.
This work was supported by Grants-in-Aid for Scientific Research (JP19H05603 and JP19H05791).
\end{acknowledgments}


\begin{thebibliography}{83}%
\makeatletter
\providecommand \@ifxundefined [1]{%
 \@ifx{#1\undefined}
}%
\providecommand \@ifnum [1]{%
 \ifnum #1\expandafter \@firstoftwo
 \else \expandafter \@secondoftwo
 \fi
}%
\providecommand \@ifx [1]{%
 \ifx #1\expandafter \@firstoftwo
 \else \expandafter \@secondoftwo
 \fi
}%
\providecommand \natexlab [1]{#1}%
\providecommand \enquote  [1]{``#1''}%
\providecommand \bibnamefont  [1]{#1}%
\providecommand \bibfnamefont [1]{#1}%
\providecommand \citenamefont [1]{#1}%
\providecommand \href@noop [0]{\@secondoftwo}%
\providecommand \href [0]{\begingroup \@sanitize@url \@href}%
\providecommand \@href[1]{\@@startlink{#1}\@@href}%
\providecommand \@@href[1]{\endgroup#1\@@endlink}%
\providecommand \@sanitize@url [0]{\catcode `\\12\catcode `\$12\catcode
  `\&12\catcode `\#12\catcode `\^12\catcode `\_12\catcode `\%12\relax}%
\providecommand \@@startlink[1]{}%
\providecommand \@@endlink[0]{}%
\providecommand \url  [0]{\begingroup\@sanitize@url \@url }%
\providecommand \@url [1]{\endgroup\@href {#1}{\urlprefix }}%
\providecommand \urlprefix  [0]{URL }%
\providecommand \Eprint [0]{\href }%
\providecommand \doibase [0]{https://doi.org/}%
\providecommand \selectlanguage [0]{\@gobble}%
\providecommand \bibinfo  [0]{\@secondoftwo}%
\providecommand \bibfield  [0]{\@secondoftwo}%
\providecommand \translation [1]{[#1]}%
\providecommand \BibitemOpen [0]{}%
\providecommand \bibitemStop [0]{}%
\providecommand \bibitemNoStop [0]{.\EOS\space}%
\providecommand \EOS [0]{\spacefactor3000\relax}%
\providecommand \BibitemShut  [1]{\csname bibitem#1\endcsname}%
\let\auto@bib@innerbib\@empty
\bibitem [{\citenamefont {Landauer}(1961)}]{Landauer.1961.JRD}%
  \BibitemOpen
  \bibfield  {author} {\bibinfo {author} {\bibfnamefont {R.}~\bibnamefont
  {Landauer}},\ }\bibfield  {title} {\bibinfo {title} {{Irreversibility and
  heat generation in the computing process}},\ }\href
  {https://doi.org/10.1147/rd.53.0183} {\bibfield  {journal} {\bibinfo
  {journal} {IBM J. Res. Dev.}\ }\textbf {\bibinfo {volume} {5}},\ \bibinfo
  {pages} {183} (\bibinfo {year} {1961})}\BibitemShut {NoStop}%
\bibitem [{\citenamefont {Sagawa}(2012)}]{Sagawa.2012.PTP}%
  \BibitemOpen
  \bibfield  {author} {\bibinfo {author} {\bibfnamefont {T.}~\bibnamefont
  {Sagawa}},\ }\bibfield  {title} {\bibinfo {title} {{Thermodynamics of
  information processing in small systems}},\ }\href
  {https://doi.org/10.1143/ptp.127.1} {\bibfield  {journal} {\bibinfo
  {journal} {Prog. Theor. Phys.}\ }\textbf {\bibinfo {volume} {127}},\ \bibinfo
  {pages} {1} (\bibinfo {year} {2012})}\BibitemShut {NoStop}%
\bibitem [{\citenamefont {Parrondo}\ \emph {et~al.}(2015)\citenamefont
  {Parrondo}, \citenamefont {Horowitz},\ and\ \citenamefont
  {Sagawa}}]{Parrondo.2015.NP}%
  \BibitemOpen
  \bibfield  {author} {\bibinfo {author} {\bibfnamefont {J.~M.}\ \bibnamefont
  {Parrondo}}, \bibinfo {author} {\bibfnamefont {J.~M.}\ \bibnamefont
  {Horowitz}},\ and\ \bibinfo {author} {\bibfnamefont {T.}~\bibnamefont
  {Sagawa}},\ }\bibfield  {title} {\bibinfo {title} {{Thermodynamics of
  information}},\ }\href {https://doi.org/10.1038/nphys3230} {\bibfield
  {journal} {\bibinfo  {journal} {Nat. Phys.}\ }\textbf {\bibinfo {volume}
  {11}},\ \bibinfo {pages} {131} (\bibinfo {year} {2015})}\BibitemShut
  {NoStop}%
\bibitem [{\citenamefont {Goold}\ \emph {et~al.}(2016)\citenamefont {Goold},
  \citenamefont {Huber}, \citenamefont {Riera}, \citenamefont {del Rio},\ and\
  \citenamefont {Skrzypczyk}}]{Goold.2016.JPA}%
  \BibitemOpen
  \bibfield  {author} {\bibinfo {author} {\bibfnamefont {J.}~\bibnamefont
  {Goold}}, \bibinfo {author} {\bibfnamefont {M.}~\bibnamefont {Huber}},
  \bibinfo {author} {\bibfnamefont {A.}~\bibnamefont {Riera}}, \bibinfo
  {author} {\bibfnamefont {L.}~\bibnamefont {del Rio}},\ and\ \bibinfo {author}
  {\bibfnamefont {P.}~\bibnamefont {Skrzypczyk}},\ }\bibfield  {title}
  {\bibinfo {title} {{The role of quantum information in
  thermodynamics{\textemdash}a topical review}},\ }\href
  {https://doi.org/10.1088/1751-8113/49/14/143001} {\bibfield  {journal}
  {\bibinfo  {journal} {J. Phys. A}\ }\textbf {\bibinfo {volume} {49}},\
  \bibinfo {pages} {143001} (\bibinfo {year} {2016})}\BibitemShut {NoStop}%
\bibitem [{\citenamefont {Bennett}(1982)}]{Bennett.1982.IJTP}%
  \BibitemOpen
  \bibfield  {author} {\bibinfo {author} {\bibfnamefont {C.~H.}\ \bibnamefont
  {Bennett}},\ }\bibfield  {title} {\bibinfo {title} {{The thermodynamics of
  computation{\textemdash}a review}},\ }\href
  {https://doi.org/10.1007/bf02084158} {\bibfield  {journal} {\bibinfo
  {journal} {Int. J. Theor. Phys.}\ }\textbf {\bibinfo {volume} {21}},\
  \bibinfo {pages} {905} (\bibinfo {year} {1982})}\BibitemShut {NoStop}%
\bibitem [{\citenamefont {Wolpert}(2019)}]{Wolpert.2019.JPA}%
  \BibitemOpen
  \bibfield  {author} {\bibinfo {author} {\bibfnamefont {D.~H.}\ \bibnamefont
  {Wolpert}},\ }\bibfield  {title} {\bibinfo {title} {{The stochastic
  thermodynamics of computation}},\ }\href
  {https://doi.org/10.1088/1751-8121/ab0850} {\bibfield  {journal} {\bibinfo
  {journal} {J. Phys. A}\ }\textbf {\bibinfo {volume} {52}},\ \bibinfo {pages}
  {193001} (\bibinfo {year} {2019})}\BibitemShut {NoStop}%
\bibitem [{\citenamefont {Maruyama}\ \emph {et~al.}(2009)\citenamefont
  {Maruyama}, \citenamefont {Nori},\ and\ \citenamefont
  {Vedral}}]{Maruyama.2009.RMP}%
  \BibitemOpen
  \bibfield  {author} {\bibinfo {author} {\bibfnamefont {K.}~\bibnamefont
  {Maruyama}}, \bibinfo {author} {\bibfnamefont {F.}~\bibnamefont {Nori}},\
  and\ \bibinfo {author} {\bibfnamefont {V.}~\bibnamefont {Vedral}},\
  }\bibfield  {title} {\bibinfo {title} {{Colloquium: The physics of Maxwell's
  demon and information}},\ }\href {https://doi.org/10.1103/RevModPhys.81.1}
  {\bibfield  {journal} {\bibinfo  {journal} {Rev. Mod. Phys.}\ }\textbf
  {\bibinfo {volume} {81}},\ \bibinfo {pages} {1} (\bibinfo {year}
  {2009})}\BibitemShut {NoStop}%
\bibitem [{\citenamefont {B{\'{e}}rut}\ \emph {et~al.}(2012)\citenamefont
  {B{\'{e}}rut}, \citenamefont {Arakelyan}, \citenamefont {Petrosyan},
  \citenamefont {Ciliberto}, \citenamefont {Dillenschneider},\ and\
  \citenamefont {Lutz}}]{Brut.2012.N}%
  \BibitemOpen
  \bibfield  {author} {\bibinfo {author} {\bibfnamefont {A.}~\bibnamefont
  {B{\'{e}}rut}}, \bibinfo {author} {\bibfnamefont {A.}~\bibnamefont
  {Arakelyan}}, \bibinfo {author} {\bibfnamefont {A.}~\bibnamefont
  {Petrosyan}}, \bibinfo {author} {\bibfnamefont {S.}~\bibnamefont
  {Ciliberto}}, \bibinfo {author} {\bibfnamefont {R.}~\bibnamefont
  {Dillenschneider}},\ and\ \bibinfo {author} {\bibfnamefont {E.}~\bibnamefont
  {Lutz}},\ }\bibfield  {title} {\bibinfo {title} {{Experimental verification
  of Landauer's principle linking information and thermodynamics}},\ }\href
  {https://doi.org/10.1038/nature10872} {\bibfield  {journal} {\bibinfo
  {journal} {Nature}\ }\textbf {\bibinfo {volume} {483}},\ \bibinfo {pages}
  {187} (\bibinfo {year} {2012})}\BibitemShut {NoStop}%
\bibitem [{\citenamefont {Jun}\ \emph {et~al.}(2014)\citenamefont {Jun},
  \citenamefont {Gavrilov},\ and\ \citenamefont {Bechhoefer}}]{Jun.2014.PRL}%
  \BibitemOpen
  \bibfield  {author} {\bibinfo {author} {\bibfnamefont {Y.}~\bibnamefont
  {Jun}}, \bibinfo {author} {\bibfnamefont {M.~c.~v.}\ \bibnamefont
  {Gavrilov}},\ and\ \bibinfo {author} {\bibfnamefont {J.}~\bibnamefont
  {Bechhoefer}},\ }\bibfield  {title} {\bibinfo {title} {{High-precision test
  of Landauer's principle in a feedback trap}},\ }\href
  {https://doi.org/10.1103/PhysRevLett.113.190601} {\bibfield  {journal}
  {\bibinfo  {journal} {Phys. Rev. Lett.}\ }\textbf {\bibinfo {volume} {113}},\
  \bibinfo {pages} {190601} (\bibinfo {year} {2014})}\BibitemShut {NoStop}%
\bibitem [{\citenamefont {Yan}\ \emph {et~al.}(2018)\citenamefont {Yan},
  \citenamefont {Xiong}, \citenamefont {Rehan}, \citenamefont {Zhou},
  \citenamefont {Liang}, \citenamefont {Chen}, \citenamefont {Zhang},
  \citenamefont {Yang}, \citenamefont {Ma},\ and\ \citenamefont
  {Feng}}]{Yan.2018.PRL}%
  \BibitemOpen
  \bibfield  {author} {\bibinfo {author} {\bibfnamefont {L.~L.}\ \bibnamefont
  {Yan}}, \bibinfo {author} {\bibfnamefont {T.~P.}\ \bibnamefont {Xiong}},
  \bibinfo {author} {\bibfnamefont {K.}~\bibnamefont {Rehan}}, \bibinfo
  {author} {\bibfnamefont {F.}~\bibnamefont {Zhou}}, \bibinfo {author}
  {\bibfnamefont {D.~F.}\ \bibnamefont {Liang}}, \bibinfo {author}
  {\bibfnamefont {L.}~\bibnamefont {Chen}}, \bibinfo {author} {\bibfnamefont
  {J.~Q.}\ \bibnamefont {Zhang}}, \bibinfo {author} {\bibfnamefont {W.~L.}\
  \bibnamefont {Yang}}, \bibinfo {author} {\bibfnamefont {Z.~H.}\ \bibnamefont
  {Ma}},\ and\ \bibinfo {author} {\bibfnamefont {M.}~\bibnamefont {Feng}},\
  }\bibfield  {title} {\bibinfo {title} {{Single-atom demonstration of the
  quantum Landauer principle}},\ }\href
  {https://doi.org/10.1103/PhysRevLett.120.210601} {\bibfield  {journal}
  {\bibinfo  {journal} {Phys. Rev. Lett.}\ }\textbf {\bibinfo {volume} {120}},\
  \bibinfo {pages} {210601} (\bibinfo {year} {2018})}\BibitemShut {NoStop}%
\bibitem [{\citenamefont {Hong}\ \emph {et~al.}(2016)\citenamefont {Hong},
  \citenamefont {Lambson}, \citenamefont {Dhuey},\ and\ \citenamefont
  {Bokor}}]{Hong.2016.SA}%
  \BibitemOpen
  \bibfield  {author} {\bibinfo {author} {\bibfnamefont {J.}~\bibnamefont
  {Hong}}, \bibinfo {author} {\bibfnamefont {B.}~\bibnamefont {Lambson}},
  \bibinfo {author} {\bibfnamefont {S.}~\bibnamefont {Dhuey}},\ and\ \bibinfo
  {author} {\bibfnamefont {J.}~\bibnamefont {Bokor}},\ }\bibfield  {title}
  {\bibinfo {title} {{Experimental test of Landauer's principle in single-bit
  operations on nanomagnetic memory bits}},\ }\href
  {https://doi.org/10.1126/sciadv.1501492} {\bibfield  {journal} {\bibinfo
  {journal} {Sci. Adv.}\ }\textbf {\bibinfo {volume} {2}},\ \bibinfo {pages}
  {e1501492} (\bibinfo {year} {2016})}\BibitemShut {NoStop}%
\bibitem [{\citenamefont {Saira}\ \emph {et~al.}(2020)\citenamefont {Saira},
  \citenamefont {Matheny}, \citenamefont {Katti}, \citenamefont {Fon},
  \citenamefont {Wimsatt}, \citenamefont {Crutchfield}, \citenamefont {Han},\
  and\ \citenamefont {Roukes}}]{Saira.2020.PRR}%
  \BibitemOpen
  \bibfield  {author} {\bibinfo {author} {\bibfnamefont {O.-P.}\ \bibnamefont
  {Saira}}, \bibinfo {author} {\bibfnamefont {M.~H.}\ \bibnamefont {Matheny}},
  \bibinfo {author} {\bibfnamefont {R.}~\bibnamefont {Katti}}, \bibinfo
  {author} {\bibfnamefont {W.}~\bibnamefont {Fon}}, \bibinfo {author}
  {\bibfnamefont {G.}~\bibnamefont {Wimsatt}}, \bibinfo {author} {\bibfnamefont
  {J.~P.}\ \bibnamefont {Crutchfield}}, \bibinfo {author} {\bibfnamefont
  {S.}~\bibnamefont {Han}},\ and\ \bibinfo {author} {\bibfnamefont {M.~L.}\
  \bibnamefont {Roukes}},\ }\bibfield  {title} {\bibinfo {title}
  {{Nonequilibrium thermodynamics of erasure with superconducting flux
  logic}},\ }\href {https://doi.org/10.1103/PhysRevResearch.2.013249}
  {\bibfield  {journal} {\bibinfo  {journal} {Phys. Rev. Research}\ }\textbf
  {\bibinfo {volume} {2}},\ \bibinfo {pages} {013249} (\bibinfo {year}
  {2020})}\BibitemShut {NoStop}%
\bibitem [{\citenamefont {Dago}\ \emph {et~al.}(2021)\citenamefont {Dago},
  \citenamefont {Pereda}, \citenamefont {Barros}, \citenamefont {Ciliberto},\
  and\ \citenamefont {Bellon}}]{Dago.2021.PRL}%
  \BibitemOpen
  \bibfield  {author} {\bibinfo {author} {\bibfnamefont {S.}~\bibnamefont
  {Dago}}, \bibinfo {author} {\bibfnamefont {J.}~\bibnamefont {Pereda}},
  \bibinfo {author} {\bibfnamefont {N.}~\bibnamefont {Barros}}, \bibinfo
  {author} {\bibfnamefont {S.}~\bibnamefont {Ciliberto}},\ and\ \bibinfo
  {author} {\bibfnamefont {L.}~\bibnamefont {Bellon}},\ }\bibfield  {title}
  {\bibinfo {title} {{Information and thermodynamics: Fast and precise approach
  to Landauer's bound in an underdamped micromechanical oscillator}},\ }\href
  {https://doi.org/10.1103/PhysRevLett.126.170601} {\bibfield  {journal}
  {\bibinfo  {journal} {Phys. Rev. Lett.}\ }\textbf {\bibinfo {volume} {126}},\
  \bibinfo {pages} {170601} (\bibinfo {year} {2021})}\BibitemShut {NoStop}%
\bibitem [{\citenamefont {Sagawa}\ and\ \citenamefont
  {Ueda}(2009)}]{Sagawa.2009.PRL}%
  \BibitemOpen
  \bibfield  {author} {\bibinfo {author} {\bibfnamefont {T.}~\bibnamefont
  {Sagawa}}\ and\ \bibinfo {author} {\bibfnamefont {M.}~\bibnamefont {Ueda}},\
  }\bibfield  {title} {\bibinfo {title} {{Minimal energy cost for thermodynamic
  information processing: Measurement and information erasure}},\ }\href
  {https://doi.org/10.1103/PhysRevLett.102.250602} {\bibfield  {journal}
  {\bibinfo  {journal} {Phys. Rev. Lett.}\ }\textbf {\bibinfo {volume} {102}},\
  \bibinfo {pages} {250602} (\bibinfo {year} {2009})}\BibitemShut {NoStop}%
\bibitem [{\citenamefont {Esposito}\ and\ \citenamefont {den
  Broeck}(2011)}]{Esposito.2011.EPL}%
  \BibitemOpen
  \bibfield  {author} {\bibinfo {author} {\bibfnamefont {M.}~\bibnamefont
  {Esposito}}\ and\ \bibinfo {author} {\bibfnamefont {C.~V.}\ \bibnamefont {den
  Broeck}},\ }\bibfield  {title} {\bibinfo {title} {{Second law and Landauer
  principle far from equilibrium}},\ }\href
  {https://doi.org/10.1209/0295-5075/95/40004} {\bibfield  {journal} {\bibinfo
  {journal} {Europhys. Lett.}\ }\textbf {\bibinfo {volume} {95}},\ \bibinfo
  {pages} {40004} (\bibinfo {year} {2011})}\BibitemShut {NoStop}%
\bibitem [{\citenamefont {Hilt}\ \emph {et~al.}(2011)\citenamefont {Hilt},
  \citenamefont {Shabbir}, \citenamefont {Anders},\ and\ \citenamefont
  {Lutz}}]{Hilt.2011.PRE}%
  \BibitemOpen
  \bibfield  {author} {\bibinfo {author} {\bibfnamefont {S.}~\bibnamefont
  {Hilt}}, \bibinfo {author} {\bibfnamefont {S.}~\bibnamefont {Shabbir}},
  \bibinfo {author} {\bibfnamefont {J.}~\bibnamefont {Anders}},\ and\ \bibinfo
  {author} {\bibfnamefont {E.}~\bibnamefont {Lutz}},\ }\bibfield  {title}
  {\bibinfo {title} {{Landauer's principle in the quantum regime}},\ }\href
  {https://doi.org/10.1103/PhysRevE.83.030102} {\bibfield  {journal} {\bibinfo
  {journal} {Phys. Rev. E}\ }\textbf {\bibinfo {volume} {83}},\ \bibinfo
  {pages} {030102} (\bibinfo {year} {2011})}\BibitemShut {NoStop}%
\bibitem [{\citenamefont {Deffner}\ and\ \citenamefont
  {Jarzynski}(2013)}]{Deffner.2013.PRX}%
  \BibitemOpen
  \bibfield  {author} {\bibinfo {author} {\bibfnamefont {S.}~\bibnamefont
  {Deffner}}\ and\ \bibinfo {author} {\bibfnamefont {C.}~\bibnamefont
  {Jarzynski}},\ }\bibfield  {title} {\bibinfo {title} {{Information processing
  and the second law of thermodynamics: An inclusive, Hamiltonian approach}},\
  }\href {https://doi.org/10.1103/PhysRevX.3.041003} {\bibfield  {journal}
  {\bibinfo  {journal} {Phys. Rev. X}\ }\textbf {\bibinfo {volume} {3}},\
  \bibinfo {pages} {041003} (\bibinfo {year} {2013})}\BibitemShut {NoStop}%
\bibitem [{\citenamefont {Reeb}\ and\ \citenamefont
  {Wolf}(2014)}]{Reeb.2014.NJP}%
  \BibitemOpen
  \bibfield  {author} {\bibinfo {author} {\bibfnamefont {D.}~\bibnamefont
  {Reeb}}\ and\ \bibinfo {author} {\bibfnamefont {M.~M.}\ \bibnamefont
  {Wolf}},\ }\bibfield  {title} {\bibinfo {title} {{An improved Landauer
  principle with finite-size corrections}},\ }\href
  {https://doi.org/10.1088/1367-2630/16/10/103011} {\bibfield  {journal}
  {\bibinfo  {journal} {New J. Phys.}\ }\textbf {\bibinfo {volume} {16}},\
  \bibinfo {pages} {103011} (\bibinfo {year} {2014})}\BibitemShut {NoStop}%
\bibitem [{\citenamefont {Browne}\ \emph {et~al.}(2014)\citenamefont {Browne},
  \citenamefont {Garner}, \citenamefont {Dahlsten},\ and\ \citenamefont
  {Vedral}}]{Browne.2014.PRL}%
  \BibitemOpen
  \bibfield  {author} {\bibinfo {author} {\bibfnamefont {C.}~\bibnamefont
  {Browne}}, \bibinfo {author} {\bibfnamefont {A.~J.~P.}\ \bibnamefont
  {Garner}}, \bibinfo {author} {\bibfnamefont {O.~C.~O.}\ \bibnamefont
  {Dahlsten}},\ and\ \bibinfo {author} {\bibfnamefont {V.}~\bibnamefont
  {Vedral}},\ }\bibfield  {title} {\bibinfo {title} {{Guaranteed
  energy-efficient bit reset in finite time}},\ }\href
  {https://doi.org/10.1103/PhysRevLett.113.100603} {\bibfield  {journal}
  {\bibinfo  {journal} {Phys. Rev. Lett.}\ }\textbf {\bibinfo {volume} {113}},\
  \bibinfo {pages} {100603} (\bibinfo {year} {2014})}\BibitemShut {NoStop}%
\bibitem [{\citenamefont {Lorenzo}\ \emph {et~al.}(2015)\citenamefont
  {Lorenzo}, \citenamefont {McCloskey}, \citenamefont {Ciccarello},
  \citenamefont {Paternostro},\ and\ \citenamefont {Palma}}]{Lorenzo.2015.PRL}%
  \BibitemOpen
  \bibfield  {author} {\bibinfo {author} {\bibfnamefont {S.}~\bibnamefont
  {Lorenzo}}, \bibinfo {author} {\bibfnamefont {R.}~\bibnamefont {McCloskey}},
  \bibinfo {author} {\bibfnamefont {F.}~\bibnamefont {Ciccarello}}, \bibinfo
  {author} {\bibfnamefont {M.}~\bibnamefont {Paternostro}},\ and\ \bibinfo
  {author} {\bibfnamefont {G.~M.}\ \bibnamefont {Palma}},\ }\bibfield  {title}
  {\bibinfo {title} {{Landauer's principle in multipartite open quantum system
  dynamics}},\ }\href {https://doi.org/10.1103/PhysRevLett.115.120403}
  {\bibfield  {journal} {\bibinfo  {journal} {Phys. Rev. Lett.}\ }\textbf
  {\bibinfo {volume} {115}},\ \bibinfo {pages} {120403} (\bibinfo {year}
  {2015})}\BibitemShut {NoStop}%
\bibitem [{\citenamefont {Goold}\ \emph {et~al.}(2015)\citenamefont {Goold},
  \citenamefont {Paternostro},\ and\ \citenamefont {Modi}}]{Goold.2015.PRL}%
  \BibitemOpen
  \bibfield  {author} {\bibinfo {author} {\bibfnamefont {J.}~\bibnamefont
  {Goold}}, \bibinfo {author} {\bibfnamefont {M.}~\bibnamefont {Paternostro}},\
  and\ \bibinfo {author} {\bibfnamefont {K.}~\bibnamefont {Modi}},\ }\bibfield
  {title} {\bibinfo {title} {{Nonequilibrium quantum Landauer principle}},\
  }\href {https://doi.org/10.1103/PhysRevLett.114.060602} {\bibfield  {journal}
  {\bibinfo  {journal} {Phys. Rev. Lett.}\ }\textbf {\bibinfo {volume} {114}},\
  \bibinfo {pages} {060602} (\bibinfo {year} {2015})}\BibitemShut {NoStop}%
\bibitem [{\citenamefont {Alhambra}\ and\ \citenamefont
  {Woods}(2017)}]{Alhambra.2017.PRA}%
  \BibitemOpen
  \bibfield  {author} {\bibinfo {author} {\bibfnamefont {A.~M.}\ \bibnamefont
  {Alhambra}}\ and\ \bibinfo {author} {\bibfnamefont {M.~P.}\ \bibnamefont
  {Woods}},\ }\bibfield  {title} {\bibinfo {title} {{Dynamical maps, quantum
  detailed balance, and the Petz recovery map}},\ }\href
  {https://doi.org/10.1103/PhysRevA.96.022118} {\bibfield  {journal} {\bibinfo
  {journal} {Phys. Rev. A}\ }\textbf {\bibinfo {volume} {96}},\ \bibinfo
  {pages} {022118} (\bibinfo {year} {2017})}\BibitemShut {NoStop}%
\bibitem [{\citenamefont {Campbell}\ \emph {et~al.}(2017)\citenamefont
  {Campbell}, \citenamefont {Guarnieri}, \citenamefont {Paternostro},\ and\
  \citenamefont {Vacchini}}]{Campbell.2017.PRA}%
  \BibitemOpen
  \bibfield  {author} {\bibinfo {author} {\bibfnamefont {S.}~\bibnamefont
  {Campbell}}, \bibinfo {author} {\bibfnamefont {G.}~\bibnamefont {Guarnieri}},
  \bibinfo {author} {\bibfnamefont {M.}~\bibnamefont {Paternostro}},\ and\
  \bibinfo {author} {\bibfnamefont {B.}~\bibnamefont {Vacchini}},\ }\bibfield
  {title} {\bibinfo {title} {{Nonequilibrium quantum bounds to Landauer's
  principle: Tightness and effectiveness}},\ }\href
  {https://doi.org/10.1103/PhysRevA.96.042109} {\bibfield  {journal} {\bibinfo
  {journal} {Phys. Rev. A}\ }\textbf {\bibinfo {volume} {96}},\ \bibinfo
  {pages} {042109} (\bibinfo {year} {2017})}\BibitemShut {NoStop}%
\bibitem [{\citenamefont {Guarnieri}\ \emph {et~al.}(2017)\citenamefont
  {Guarnieri}, \citenamefont {Campbell}, \citenamefont {Goold}, \citenamefont
  {Pigeon}, \citenamefont {Vacchini},\ and\ \citenamefont
  {Paternostro}}]{Guarnieri.2017.NJP}%
  \BibitemOpen
  \bibfield  {author} {\bibinfo {author} {\bibfnamefont {G.}~\bibnamefont
  {Guarnieri}}, \bibinfo {author} {\bibfnamefont {S.}~\bibnamefont {Campbell}},
  \bibinfo {author} {\bibfnamefont {J.}~\bibnamefont {Goold}}, \bibinfo
  {author} {\bibfnamefont {S.}~\bibnamefont {Pigeon}}, \bibinfo {author}
  {\bibfnamefont {B.}~\bibnamefont {Vacchini}},\ and\ \bibinfo {author}
  {\bibfnamefont {M.}~\bibnamefont {Paternostro}},\ }\bibfield  {title}
  {\bibinfo {title} {{Full counting statistics approach to the quantum
  non-equilibrium Landauer bound}},\ }\href
  {https://doi.org/10.1088/1367-2630/aa8cf1} {\bibfield  {journal} {\bibinfo
  {journal} {New J. Phys.}\ }\textbf {\bibinfo {volume} {19}},\ \bibinfo
  {pages} {103038} (\bibinfo {year} {2017})}\BibitemShut {NoStop}%
\bibitem [{\citenamefont {Boyd}\ \emph {et~al.}(2018)\citenamefont {Boyd},
  \citenamefont {Mandal},\ and\ \citenamefont {Crutchfield}}]{Boyd.2018.PRX}%
  \BibitemOpen
  \bibfield  {author} {\bibinfo {author} {\bibfnamefont {A.~B.}\ \bibnamefont
  {Boyd}}, \bibinfo {author} {\bibfnamefont {D.}~\bibnamefont {Mandal}},\ and\
  \bibinfo {author} {\bibfnamefont {J.~P.}\ \bibnamefont {Crutchfield}},\
  }\bibfield  {title} {\bibinfo {title} {{Thermodynamics of modularity:
  Structural costs beyond the Landauer bound}},\ }\href
  {https://doi.org/10.1103/PhysRevX.8.031036} {\bibfield  {journal} {\bibinfo
  {journal} {Phys. Rev. X}\ }\textbf {\bibinfo {volume} {8}},\ \bibinfo {pages}
  {031036} (\bibinfo {year} {2018})}\BibitemShut {NoStop}%
\bibitem [{\citenamefont {Klaers}(2019)}]{Klaers.2019.PRL}%
  \BibitemOpen
  \bibfield  {author} {\bibinfo {author} {\bibfnamefont {J.}~\bibnamefont
  {Klaers}},\ }\bibfield  {title} {\bibinfo {title} {{Landauer's erasure
  principle in a squeezed thermal memory}},\ }\href
  {https://doi.org/10.1103/PhysRevLett.122.040602} {\bibfield  {journal}
  {\bibinfo  {journal} {Phys. Rev. Lett.}\ }\textbf {\bibinfo {volume} {122}},\
  \bibinfo {pages} {040602} (\bibinfo {year} {2019})}\BibitemShut {NoStop}%
\bibitem [{\citenamefont {Shiraishi}\ and\ \citenamefont
  {Saito}(2019)}]{Shiraishi.2019.PRL}%
  \BibitemOpen
  \bibfield  {author} {\bibinfo {author} {\bibfnamefont {N.}~\bibnamefont
  {Shiraishi}}\ and\ \bibinfo {author} {\bibfnamefont {K.}~\bibnamefont
  {Saito}},\ }\bibfield  {title} {\bibinfo {title} {{Information-theoretical
  bound of the irreversibility in thermal relaxation processes}},\ }\href
  {https://doi.org/10.1103/PhysRevLett.123.110603} {\bibfield  {journal}
  {\bibinfo  {journal} {Phys. Rev. Lett.}\ }\textbf {\bibinfo {volume} {123}},\
  \bibinfo {pages} {110603} (\bibinfo {year} {2019})}\BibitemShut {NoStop}%
\bibitem [{\citenamefont {Dechant}\ and\ \citenamefont
  {Sakurai}(2019)}]{Dechant.2019.arxiv}%
  \BibitemOpen
  \bibfield  {author} {\bibinfo {author} {\bibfnamefont {A.}~\bibnamefont
  {Dechant}}\ and\ \bibinfo {author} {\bibfnamefont {Y.}~\bibnamefont
  {Sakurai}},\ }\bibfield  {title} {\bibinfo {title} {{Thermodynamic
  interpretation of Wasserstein distance}},\ }\href
  {https://arxiv.org/abs/1912.08405} {\bibfield  {journal} {\bibinfo  {journal}
  {arXiv preprint arXiv:1912.08405}\ } (\bibinfo {year} {2019})}\BibitemShut
  {NoStop}%
\bibitem [{\citenamefont {Buscemi}\ \emph {et~al.}(2020)\citenamefont
  {Buscemi}, \citenamefont {Fujiwara}, \citenamefont {Mitsui},\ and\
  \citenamefont {Rotondo}}]{Buscemi.2020.PRA}%
  \BibitemOpen
  \bibfield  {author} {\bibinfo {author} {\bibfnamefont {F.}~\bibnamefont
  {Buscemi}}, \bibinfo {author} {\bibfnamefont {D.}~\bibnamefont {Fujiwara}},
  \bibinfo {author} {\bibfnamefont {N.}~\bibnamefont {Mitsui}},\ and\ \bibinfo
  {author} {\bibfnamefont {M.}~\bibnamefont {Rotondo}},\ }\bibfield  {title}
  {\bibinfo {title} {{Thermodynamic reverse bounds for general open quantum
  processes}},\ }\href {https://doi.org/10.1103/PhysRevA.102.032210} {\bibfield
   {journal} {\bibinfo  {journal} {Phys. Rev. A}\ }\textbf {\bibinfo {volume}
  {102}},\ \bibinfo {pages} {032210} (\bibinfo {year} {2020})}\BibitemShut
  {NoStop}%
\bibitem [{\citenamefont {Timpanaro}\ \emph {et~al.}(2020)\citenamefont
  {Timpanaro}, \citenamefont {Santos},\ and\ \citenamefont
  {Landi}}]{Timpanaro.2020.PRL}%
  \BibitemOpen
  \bibfield  {author} {\bibinfo {author} {\bibfnamefont {A.~M.}\ \bibnamefont
  {Timpanaro}}, \bibinfo {author} {\bibfnamefont {J.~P.}\ \bibnamefont
  {Santos}},\ and\ \bibinfo {author} {\bibfnamefont {G.~T.}\ \bibnamefont
  {Landi}},\ }\bibfield  {title} {\bibinfo {title} {{Landauer's principle at
  zero temperature}},\ }\href {https://doi.org/10.1103/PhysRevLett.124.240601}
  {\bibfield  {journal} {\bibinfo  {journal} {Phys. Rev. Lett.}\ }\textbf
  {\bibinfo {volume} {124}},\ \bibinfo {pages} {240601} (\bibinfo {year}
  {2020})}\BibitemShut {NoStop}%
\bibitem [{\citenamefont {Wolpert}(2020)}]{Wolpert.2020.arxiv}%
  \BibitemOpen
  \bibfield  {author} {\bibinfo {author} {\bibfnamefont {D.~H.}\ \bibnamefont
  {Wolpert}},\ }\bibfield  {title} {\bibinfo {title} {{Strengthened Landauer
  bound for composite systems}},\ }\href {https://arxiv.org/abs/2007.10950}
  {\bibfield  {journal} {\bibinfo  {journal} {arXiv preprint arXiv:2007.10950}\
  } (\bibinfo {year} {2020})}\BibitemShut {NoStop}%
\bibitem [{\citenamefont {Van~Vu}\ and\ \citenamefont
  {Hasegawa}(2021{\natexlab{a}})}]{Vu.2021.PRL2}%
  \BibitemOpen
  \bibfield  {author} {\bibinfo {author} {\bibfnamefont {T.}~\bibnamefont
  {Van~Vu}}\ and\ \bibinfo {author} {\bibfnamefont {Y.}~\bibnamefont
  {Hasegawa}},\ }\bibfield  {title} {\bibinfo {title} {{Lower bound on
  irreversibility in thermal relaxation of open quantum systems}},\ }\href
  {https://doi.org/10.1103/PhysRevLett.127.190601} {\bibfield  {journal}
  {\bibinfo  {journal} {Phys. Rev. Lett.}\ }\textbf {\bibinfo {volume} {127}},\
  \bibinfo {pages} {190601} (\bibinfo {year} {2021}{\natexlab{a}})}\BibitemShut
  {NoStop}%
\bibitem [{\citenamefont {Riechers}\ and\ \citenamefont
  {Gu}(2021)}]{Riechers.2021.PRA}%
  \BibitemOpen
  \bibfield  {author} {\bibinfo {author} {\bibfnamefont {P.~M.}\ \bibnamefont
  {Riechers}}\ and\ \bibinfo {author} {\bibfnamefont {M.}~\bibnamefont {Gu}},\
  }\bibfield  {title} {\bibinfo {title} {Impossibility of achieving landauer's
  bound for almost every quantum state},\ }\href
  {https://doi.org/10.1103/PhysRevA.104.012214} {\bibfield  {journal} {\bibinfo
   {journal} {Phys. Rev. A}\ }\textbf {\bibinfo {volume} {104}},\ \bibinfo
  {pages} {012214} (\bibinfo {year} {2021})}\BibitemShut {NoStop}%
\bibitem [{\citenamefont {Zulkowski}\ and\ \citenamefont
  {DeWeese}(2014)}]{Zulkowski.2014.PRE}%
  \BibitemOpen
  \bibfield  {author} {\bibinfo {author} {\bibfnamefont {P.~R.}\ \bibnamefont
  {Zulkowski}}\ and\ \bibinfo {author} {\bibfnamefont {M.~R.}\ \bibnamefont
  {DeWeese}},\ }\bibfield  {title} {\bibinfo {title} {{Optimal finite-time
  erasure of a classical bit}},\ }\href
  {https://doi.org/10.1103/PhysRevE.89.052140} {\bibfield  {journal} {\bibinfo
  {journal} {Phys. Rev. E}\ }\textbf {\bibinfo {volume} {89}},\ \bibinfo
  {pages} {052140} (\bibinfo {year} {2014})}\BibitemShut {NoStop}%
\bibitem [{\citenamefont {Zulkowski}\ and\ \citenamefont
  {DeWeese}(2015{\natexlab{a}})}]{Zulkowski.2015.PRE}%
  \BibitemOpen
  \bibfield  {author} {\bibinfo {author} {\bibfnamefont {P.~R.}\ \bibnamefont
  {Zulkowski}}\ and\ \bibinfo {author} {\bibfnamefont {M.~R.}\ \bibnamefont
  {DeWeese}},\ }\bibfield  {title} {\bibinfo {title} {{Optimal control of
  overdamped systems}},\ }\href {https://doi.org/10.1103/PhysRevE.92.032117}
  {\bibfield  {journal} {\bibinfo  {journal} {Phys. Rev. E}\ }\textbf {\bibinfo
  {volume} {92}},\ \bibinfo {pages} {032117} (\bibinfo {year}
  {2015}{\natexlab{a}})}\BibitemShut {NoStop}%
\bibitem [{\citenamefont {Zulkowski}\ and\ \citenamefont
  {DeWeese}(2015{\natexlab{b}})}]{Zulkowski.2015.PRE2}%
  \BibitemOpen
  \bibfield  {author} {\bibinfo {author} {\bibfnamefont {P.~R.}\ \bibnamefont
  {Zulkowski}}\ and\ \bibinfo {author} {\bibfnamefont {M.~R.}\ \bibnamefont
  {DeWeese}},\ }\bibfield  {title} {\bibinfo {title} {{Optimal protocols for
  slowly driven quantum systems}},\ }\href
  {https://doi.org/10.1103/PhysRevE.92.032113} {\bibfield  {journal} {\bibinfo
  {journal} {Phys. Rev. E}\ }\textbf {\bibinfo {volume} {92}},\ \bibinfo
  {pages} {032113} (\bibinfo {year} {2015}{\natexlab{b}})}\BibitemShut
  {NoStop}%
\bibitem [{\citenamefont {Proesmans}\ \emph
  {et~al.}(2020{\natexlab{a}})\citenamefont {Proesmans}, \citenamefont
  {Ehrich},\ and\ \citenamefont {Bechhoefer}}]{Proesmans.2020.PRL}%
  \BibitemOpen
  \bibfield  {author} {\bibinfo {author} {\bibfnamefont {K.}~\bibnamefont
  {Proesmans}}, \bibinfo {author} {\bibfnamefont {J.}~\bibnamefont {Ehrich}},\
  and\ \bibinfo {author} {\bibfnamefont {J.}~\bibnamefont {Bechhoefer}},\
  }\bibfield  {title} {\bibinfo {title} {{Finite-time Landauer principle}},\
  }\href {https://doi.org/10.1103/PhysRevLett.125.100602} {\bibfield  {journal}
  {\bibinfo  {journal} {Phys. Rev. Lett.}\ }\textbf {\bibinfo {volume} {125}},\
  \bibinfo {pages} {100602} (\bibinfo {year} {2020}{\natexlab{a}})}\BibitemShut
  {NoStop}%
\bibitem [{\citenamefont {Proesmans}\ \emph
  {et~al.}(2020{\natexlab{b}})\citenamefont {Proesmans}, \citenamefont
  {Ehrich},\ and\ \citenamefont {Bechhoefer}}]{Proesmans.2020.PRE}%
  \BibitemOpen
  \bibfield  {author} {\bibinfo {author} {\bibfnamefont {K.}~\bibnamefont
  {Proesmans}}, \bibinfo {author} {\bibfnamefont {J.}~\bibnamefont {Ehrich}},\
  and\ \bibinfo {author} {\bibfnamefont {J.}~\bibnamefont {Bechhoefer}},\
  }\bibfield  {title} {\bibinfo {title} {{Optimal finite-time bit erasure under
  full control}},\ }\href {https://doi.org/10.1103/PhysRevE.102.032105}
  {\bibfield  {journal} {\bibinfo  {journal} {Phys. Rev. E}\ }\textbf {\bibinfo
  {volume} {102}},\ \bibinfo {pages} {032105} (\bibinfo {year}
  {2020}{\natexlab{b}})}\BibitemShut {NoStop}%
\bibitem [{\citenamefont {Shiraishi}\ \emph {et~al.}(2018)\citenamefont
  {Shiraishi}, \citenamefont {Funo},\ and\ \citenamefont
  {Saito}}]{Shiraishi.2018.PRL}%
  \BibitemOpen
  \bibfield  {author} {\bibinfo {author} {\bibfnamefont {N.}~\bibnamefont
  {Shiraishi}}, \bibinfo {author} {\bibfnamefont {K.}~\bibnamefont {Funo}},\
  and\ \bibinfo {author} {\bibfnamefont {K.}~\bibnamefont {Saito}},\ }\bibfield
   {title} {\bibinfo {title} {{Speed limit for classical stochastic
  processes}},\ }\href {https://doi.org/10.1103/PhysRevLett.121.070601}
  {\bibfield  {journal} {\bibinfo  {journal} {Phys. Rev. Lett.}\ }\textbf
  {\bibinfo {volume} {121}},\ \bibinfo {pages} {070601} (\bibinfo {year}
  {2018})}\BibitemShut {NoStop}%
\bibitem [{\citenamefont {Funo}\ \emph {et~al.}(2019)\citenamefont {Funo},
  \citenamefont {Shiraishi},\ and\ \citenamefont {Saito}}]{Funo.2019.NJP}%
  \BibitemOpen
  \bibfield  {author} {\bibinfo {author} {\bibfnamefont {K.}~\bibnamefont
  {Funo}}, \bibinfo {author} {\bibfnamefont {N.}~\bibnamefont {Shiraishi}},\
  and\ \bibinfo {author} {\bibfnamefont {K.}~\bibnamefont {Saito}},\ }\bibfield
   {title} {\bibinfo {title} {{Speed limit for open quantum systems}},\ }\href
  {https://doi.org/10.1088/1367-2630/aaf9f5} {\bibfield  {journal} {\bibinfo
  {journal} {New J. Phys.}\ }\textbf {\bibinfo {volume} {21}},\ \bibinfo
  {pages} {013006} (\bibinfo {year} {2019})}\BibitemShut {NoStop}%
\bibitem [{\citenamefont {Vo}\ \emph {et~al.}(2020)\citenamefont {Vo},
  \citenamefont {Van~Vu},\ and\ \citenamefont {Hasegawa}}]{Vo.2020.PRE}%
  \BibitemOpen
  \bibfield  {author} {\bibinfo {author} {\bibfnamefont {V.~T.}\ \bibnamefont
  {Vo}}, \bibinfo {author} {\bibfnamefont {T.}~\bibnamefont {Van~Vu}},\ and\
  \bibinfo {author} {\bibfnamefont {Y.}~\bibnamefont {Hasegawa}},\ }\bibfield
  {title} {\bibinfo {title} {{Unified approach to classical speed limit and
  thermodynamic uncertainty relation}},\ }\href
  {https://doi.org/10.1103/PhysRevE.102.062132} {\bibfield  {journal} {\bibinfo
   {journal} {Phys. Rev. E}\ }\textbf {\bibinfo {volume} {102}},\ \bibinfo
  {pages} {062132} (\bibinfo {year} {2020})}\BibitemShut {NoStop}%
\bibitem [{\citenamefont {Van~Vu}\ and\ \citenamefont
  {Hasegawa}(2021{\natexlab{b}})}]{Vu.2021.PRL}%
  \BibitemOpen
  \bibfield  {author} {\bibinfo {author} {\bibfnamefont {T.}~\bibnamefont
  {Van~Vu}}\ and\ \bibinfo {author} {\bibfnamefont {Y.}~\bibnamefont
  {Hasegawa}},\ }\bibfield  {title} {\bibinfo {title} {{Geometrical bounds of
  the irreversibility in Markovian systems}},\ }\href
  {https://doi.org/10.1103/PhysRevLett.126.010601} {\bibfield  {journal}
  {\bibinfo  {journal} {Phys. Rev. Lett.}\ }\textbf {\bibinfo {volume} {126}},\
  \bibinfo {pages} {010601} (\bibinfo {year} {2021}{\natexlab{b}})}\BibitemShut
  {NoStop}%
\bibitem [{\citenamefont {Horodecki}\ and\ \citenamefont
  {Oppenheim}(2013)}]{Horodecki.2013.NC}%
  \BibitemOpen
  \bibfield  {author} {\bibinfo {author} {\bibfnamefont {M.}~\bibnamefont
  {Horodecki}}\ and\ \bibinfo {author} {\bibfnamefont {J.}~\bibnamefont
  {Oppenheim}},\ }\bibfield  {title} {\bibinfo {title} {{Fundamental
  limitations for quantum and nanoscale thermodynamics}},\ }\href
  {https://doi.org/10.1038/ncomms3059} {\bibfield  {journal} {\bibinfo
  {journal} {Nat. Commun.}\ }\textbf {\bibinfo {volume} {4}},\ \bibinfo {pages}
  {2059} (\bibinfo {year} {2013})}\BibitemShut {NoStop}%
\bibitem [{\citenamefont {Uzdin}\ \emph {et~al.}(2015)\citenamefont {Uzdin},
  \citenamefont {Levy},\ and\ \citenamefont {Kosloff}}]{Uzdin.2015.PRX}%
  \BibitemOpen
  \bibfield  {author} {\bibinfo {author} {\bibfnamefont {R.}~\bibnamefont
  {Uzdin}}, \bibinfo {author} {\bibfnamefont {A.}~\bibnamefont {Levy}},\ and\
  \bibinfo {author} {\bibfnamefont {R.}~\bibnamefont {Kosloff}},\ }\bibfield
  {title} {\bibinfo {title} {{Equivalence of quantum heat machines, and
  quantum-thermodynamic signatures}},\ }\href
  {https://doi.org/10.1103/PhysRevX.5.031044} {\bibfield  {journal} {\bibinfo
  {journal} {Phys. Rev. X}\ }\textbf {\bibinfo {volume} {5}},\ \bibinfo {pages}
  {031044} (\bibinfo {year} {2015})}\BibitemShut {NoStop}%
\bibitem [{\citenamefont {Lostaglio}\ \emph {et~al.}(2015)\citenamefont
  {Lostaglio}, \citenamefont {Korzekwa}, \citenamefont {Jennings},\ and\
  \citenamefont {Rudolph}}]{Lostaglio.2015.PRX}%
  \BibitemOpen
  \bibfield  {author} {\bibinfo {author} {\bibfnamefont {M.}~\bibnamefont
  {Lostaglio}}, \bibinfo {author} {\bibfnamefont {K.}~\bibnamefont {Korzekwa}},
  \bibinfo {author} {\bibfnamefont {D.}~\bibnamefont {Jennings}},\ and\
  \bibinfo {author} {\bibfnamefont {T.}~\bibnamefont {Rudolph}},\ }\bibfield
  {title} {\bibinfo {title} {{Quantum coherence, time-translation symmetry, and
  thermodynamics}},\ }\href {https://doi.org/10.1103/PhysRevX.5.021001}
  {\bibfield  {journal} {\bibinfo  {journal} {Phys. Rev. X}\ }\textbf {\bibinfo
  {volume} {5}},\ \bibinfo {pages} {021001} (\bibinfo {year}
  {2015})}\BibitemShut {NoStop}%
\bibitem [{\citenamefont {Korzekwa}\ \emph {et~al.}(2016)\citenamefont
  {Korzekwa}, \citenamefont {Lostaglio}, \citenamefont {Oppenheim},\ and\
  \citenamefont {Jennings}}]{Korzekwa.2016.NJP}%
  \BibitemOpen
  \bibfield  {author} {\bibinfo {author} {\bibfnamefont {K.}~\bibnamefont
  {Korzekwa}}, \bibinfo {author} {\bibfnamefont {M.}~\bibnamefont {Lostaglio}},
  \bibinfo {author} {\bibfnamefont {J.}~\bibnamefont {Oppenheim}},\ and\
  \bibinfo {author} {\bibfnamefont {D.}~\bibnamefont {Jennings}},\ }\bibfield
  {title} {\bibinfo {title} {{The extraction of work from quantum coherence}},\
  }\href {https://doi.org/10.1088/1367-2630/18/2/023045} {\bibfield  {journal}
  {\bibinfo  {journal} {New J. Phys.}\ }\textbf {\bibinfo {volume} {18}},\
  \bibinfo {pages} {023045} (\bibinfo {year} {2016})}\BibitemShut {NoStop}%
\bibitem [{\citenamefont {Francica}\ \emph {et~al.}(2019)\citenamefont
  {Francica}, \citenamefont {Goold},\ and\ \citenamefont
  {Plastina}}]{Francica.2019.PRE}%
  \BibitemOpen
  \bibfield  {author} {\bibinfo {author} {\bibfnamefont {G.}~\bibnamefont
  {Francica}}, \bibinfo {author} {\bibfnamefont {J.}~\bibnamefont {Goold}},\
  and\ \bibinfo {author} {\bibfnamefont {F.}~\bibnamefont {Plastina}},\
  }\bibfield  {title} {\bibinfo {title} {{Role of coherence in the
  nonequilibrium thermodynamics of quantum systems}},\ }\href
  {https://doi.org/10.1103/PhysRevE.99.042105} {\bibfield  {journal} {\bibinfo
  {journal} {Phys. Rev. E}\ }\textbf {\bibinfo {volume} {99}},\ \bibinfo
  {pages} {042105} (\bibinfo {year} {2019})}\BibitemShut {NoStop}%
\bibitem [{\citenamefont {Santos}\ \emph {et~al.}(2019)\citenamefont {Santos},
  \citenamefont {C{\'{e}}leri}, \citenamefont {Landi},\ and\ \citenamefont
  {Paternostro}}]{Santos.2019.npjQI}%
  \BibitemOpen
  \bibfield  {author} {\bibinfo {author} {\bibfnamefont {J.~P.}\ \bibnamefont
  {Santos}}, \bibinfo {author} {\bibfnamefont {L.~C.}\ \bibnamefont
  {C{\'{e}}leri}}, \bibinfo {author} {\bibfnamefont {G.~T.}\ \bibnamefont
  {Landi}},\ and\ \bibinfo {author} {\bibfnamefont {M.}~\bibnamefont
  {Paternostro}},\ }\bibfield  {title} {\bibinfo {title} {{The role of quantum
  coherence in non-equilibrium entropy production}},\ }\href
  {https://doi.org/10.1038/s41534-019-0138-y} {\bibfield  {journal} {\bibinfo
  {journal} {npj Quantum Inf.}\ }\textbf {\bibinfo {volume} {5}},\ \bibinfo
  {pages} {23} (\bibinfo {year} {2019})}\BibitemShut {NoStop}%
\bibitem [{\citenamefont {Francica}\ \emph {et~al.}(2020)\citenamefont
  {Francica}, \citenamefont {Binder}, \citenamefont {Guarnieri}, \citenamefont
  {Mitchison}, \citenamefont {Goold},\ and\ \citenamefont
  {Plastina}}]{Francica.2020.PRL}%
  \BibitemOpen
  \bibfield  {author} {\bibinfo {author} {\bibfnamefont {G.}~\bibnamefont
  {Francica}}, \bibinfo {author} {\bibfnamefont {F.~C.}\ \bibnamefont
  {Binder}}, \bibinfo {author} {\bibfnamefont {G.}~\bibnamefont {Guarnieri}},
  \bibinfo {author} {\bibfnamefont {M.~T.}\ \bibnamefont {Mitchison}}, \bibinfo
  {author} {\bibfnamefont {J.}~\bibnamefont {Goold}},\ and\ \bibinfo {author}
  {\bibfnamefont {F.}~\bibnamefont {Plastina}},\ }\bibfield  {title} {\bibinfo
  {title} {{Quantum coherence and ergotropy}},\ }\href
  {https://doi.org/10.1103/PhysRevLett.125.180603} {\bibfield  {journal}
  {\bibinfo  {journal} {Phys. Rev. Lett.}\ }\textbf {\bibinfo {volume} {125}},\
  \bibinfo {pages} {180603} (\bibinfo {year} {2020})}\BibitemShut {NoStop}%
\bibitem [{\citenamefont {Tajima}\ and\ \citenamefont
  {Funo}(2021)}]{Tajima.2021.PRL}%
  \BibitemOpen
  \bibfield  {author} {\bibinfo {author} {\bibfnamefont {H.}~\bibnamefont
  {Tajima}}\ and\ \bibinfo {author} {\bibfnamefont {K.}~\bibnamefont {Funo}},\
  }\bibfield  {title} {\bibinfo {title} {{Superconducting-like heat current:
  Effective cancellation of current-dissipation trade-off by quantum
  coherence}},\ }\href {https://doi.org/10.1103/PhysRevLett.127.190604}
  {\bibfield  {journal} {\bibinfo  {journal} {Phys. Rev. Lett.}\ }\textbf
  {\bibinfo {volume} {127}},\ \bibinfo {pages} {190604} (\bibinfo {year}
  {2021})}\BibitemShut {NoStop}%
\bibitem [{\citenamefont {Brandner}\ \emph {et~al.}(2017)\citenamefont
  {Brandner}, \citenamefont {Bauer},\ and\ \citenamefont
  {Seifert}}]{Brandner.2017.PRL}%
  \BibitemOpen
  \bibfield  {author} {\bibinfo {author} {\bibfnamefont {K.}~\bibnamefont
  {Brandner}}, \bibinfo {author} {\bibfnamefont {M.}~\bibnamefont {Bauer}},\
  and\ \bibinfo {author} {\bibfnamefont {U.}~\bibnamefont {Seifert}},\
  }\bibfield  {title} {\bibinfo {title} {{Universal coherence-induced power
  losses of quantum heat engines in linear response}},\ }\href
  {https://doi.org/10.1103/PhysRevLett.119.170602} {\bibfield  {journal}
  {\bibinfo  {journal} {Phys. Rev. Lett.}\ }\textbf {\bibinfo {volume} {119}},\
  \bibinfo {pages} {170602} (\bibinfo {year} {2017})}\BibitemShut {NoStop}%
\bibitem [{\citenamefont {Brandner}\ and\ \citenamefont
  {Saito}(2020)}]{Brandner.2020.PRL}%
  \BibitemOpen
  \bibfield  {author} {\bibinfo {author} {\bibfnamefont {K.}~\bibnamefont
  {Brandner}}\ and\ \bibinfo {author} {\bibfnamefont {K.}~\bibnamefont
  {Saito}},\ }\bibfield  {title} {\bibinfo {title} {{Thermodynamic geometry of
  microscopic heat engines}},\ }\href
  {https://doi.org/10.1103/PhysRevLett.124.040602} {\bibfield  {journal}
  {\bibinfo  {journal} {Phys. Rev. Lett.}\ }\textbf {\bibinfo {volume} {124}},\
  \bibinfo {pages} {040602} (\bibinfo {year} {2020})}\BibitemShut {NoStop}%
\bibitem [{\citenamefont {Scully}\ \emph {et~al.}(2011)\citenamefont {Scully},
  \citenamefont {Chapin}, \citenamefont {Dorfman}, \citenamefont {Kim},\ and\
  \citenamefont {Svidzinsky}}]{Scully.2011.PNAS}%
  \BibitemOpen
  \bibfield  {author} {\bibinfo {author} {\bibfnamefont {M.~O.}\ \bibnamefont
  {Scully}}, \bibinfo {author} {\bibfnamefont {K.~R.}\ \bibnamefont {Chapin}},
  \bibinfo {author} {\bibfnamefont {K.~E.}\ \bibnamefont {Dorfman}}, \bibinfo
  {author} {\bibfnamefont {M.~B.}\ \bibnamefont {Kim}},\ and\ \bibinfo {author}
  {\bibfnamefont {A.}~\bibnamefont {Svidzinsky}},\ }\bibfield  {title}
  {\bibinfo {title} {{Quantum heat engine power can be increased by
  noise-induced coherence}},\ }\href {https://doi.org/10.1073/pnas.1110234108}
  {\bibfield  {journal} {\bibinfo  {journal} {Proc. Natl. Acad. Sci. U.S.A.}\
  }\textbf {\bibinfo {volume} {108}},\ \bibinfo {pages} {15097} (\bibinfo
  {year} {2011})}\BibitemShut {NoStop}%
\bibitem [{\citenamefont {Watanabe}\ \emph {et~al.}(2017)\citenamefont
  {Watanabe}, \citenamefont {Venkatesh}, \citenamefont {Talkner},\ and\
  \citenamefont {del Campo}}]{Watanabe.2017.PRL}%
  \BibitemOpen
  \bibfield  {author} {\bibinfo {author} {\bibfnamefont {G.}~\bibnamefont
  {Watanabe}}, \bibinfo {author} {\bibfnamefont {B.~P.}\ \bibnamefont
  {Venkatesh}}, \bibinfo {author} {\bibfnamefont {P.}~\bibnamefont {Talkner}},\
  and\ \bibinfo {author} {\bibfnamefont {A.}~\bibnamefont {del Campo}},\
  }\bibfield  {title} {\bibinfo {title} {{Quantum performance of thermal
  machines over many cycles}},\ }\href
  {https://doi.org/10.1103/PhysRevLett.118.050601} {\bibfield  {journal}
  {\bibinfo  {journal} {Phys. Rev. Lett.}\ }\textbf {\bibinfo {volume} {118}},\
  \bibinfo {pages} {050601} (\bibinfo {year} {2017})}\BibitemShut {NoStop}%
\bibitem [{\citenamefont {Menczel}\ \emph {et~al.}(2020)\citenamefont
  {Menczel}, \citenamefont {Flindt},\ and\ \citenamefont
  {Brandner}}]{Menczel.2020.PRA}%
  \BibitemOpen
  \bibfield  {author} {\bibinfo {author} {\bibfnamefont {P.}~\bibnamefont
  {Menczel}}, \bibinfo {author} {\bibfnamefont {C.}~\bibnamefont {Flindt}},\
  and\ \bibinfo {author} {\bibfnamefont {K.}~\bibnamefont {Brandner}},\
  }\bibfield  {title} {\bibinfo {title} {{Thermodynamics of cyclic quantum
  amplifiers}},\ }\href {https://doi.org/10.1103/PhysRevA.101.052106}
  {\bibfield  {journal} {\bibinfo  {journal} {Phys. Rev. A}\ }\textbf {\bibinfo
  {volume} {101}},\ \bibinfo {pages} {052106} (\bibinfo {year}
  {2020})}\BibitemShut {NoStop}%
\bibitem [{\citenamefont {Miller}\ \emph {et~al.}(2020)\citenamefont {Miller},
  \citenamefont {Guarnieri}, \citenamefont {Mitchison},\ and\ \citenamefont
  {Goold}}]{Miller.2020.PRL.QLP}%
  \BibitemOpen
  \bibfield  {author} {\bibinfo {author} {\bibfnamefont {H.~J.~D.}\
  \bibnamefont {Miller}}, \bibinfo {author} {\bibfnamefont {G.}~\bibnamefont
  {Guarnieri}}, \bibinfo {author} {\bibfnamefont {M.~T.}\ \bibnamefont
  {Mitchison}},\ and\ \bibinfo {author} {\bibfnamefont {J.}~\bibnamefont
  {Goold}},\ }\bibfield  {title} {\bibinfo {title} {{Quantum fluctuations
  hinder finite-time information erasure near the Landauer limit}},\ }\href
  {https://doi.org/10.1103/PhysRevLett.125.160602} {\bibfield  {journal}
  {\bibinfo  {journal} {Phys. Rev. Lett.}\ }\textbf {\bibinfo {volume} {125}},\
  \bibinfo {pages} {160602} (\bibinfo {year} {2020})}\BibitemShut {NoStop}%
\bibitem [{Sup()}]{Supp.PhysRev}%
  \BibitemOpen
  \href@noop {} {}\bibinfo {note} {See Supplemental Material for details of
  analytical and numerical calculations, which includes
  Refs.~\cite{Maas.2011.JFA,Garrahan.2017.PRE,Vu.2019.PRE.UnderdampedTUR,Cavina.2017.PRL,Shivakumar.1974.PAMS,Koenemann.2019,Schmiedl.2007.PRL,Then.2008.PRE,Aurell.2011.PRL,Solon.2018.PRL}.}\BibitemShut
  {Stop}%
\bibitem [{\citenamefont {Lindblad}(1976)}]{Lindblad.1976.CMP}%
  \BibitemOpen
  \bibfield  {author} {\bibinfo {author} {\bibfnamefont {G.}~\bibnamefont
  {Lindblad}},\ }\bibfield  {title} {\bibinfo {title} {{On the generators of
  quantum dynamical semigroups}},\ }\href {https://doi.org/10.1007/BF01608499}
  {\bibfield  {journal} {\bibinfo  {journal} {Commun. Math. Phys.}\ }\textbf
  {\bibinfo {volume} {48}},\ \bibinfo {pages} {119} (\bibinfo {year}
  {1976})}\BibitemShut {NoStop}%
\bibitem [{\citenamefont {Gorini}\ \emph {et~al.}(1976)\citenamefont {Gorini},
  \citenamefont {Kossakowski},\ and\ \citenamefont
  {Sudarshan}}]{Gorini.1976.JMP}%
  \BibitemOpen
  \bibfield  {author} {\bibinfo {author} {\bibfnamefont {V.}~\bibnamefont
  {Gorini}}, \bibinfo {author} {\bibfnamefont {A.}~\bibnamefont
  {Kossakowski}},\ and\ \bibinfo {author} {\bibfnamefont {E.~C.~G.}\
  \bibnamefont {Sudarshan}},\ }\bibfield  {title} {\bibinfo {title}
  {{Completely positive dynamical semigroups of N‐level systems}},\ }\href
  {https://doi.org/10.1063/1.522979} {\bibfield  {journal} {\bibinfo  {journal}
  {J. Math. Phys.}\ }\textbf {\bibinfo {volume} {17}},\ \bibinfo {pages} {821}
  (\bibinfo {year} {1976})}\BibitemShut {NoStop}%
\bibitem [{\citenamefont {Alicki}(1976)}]{Alicki.1976.RMP}%
  \BibitemOpen
  \bibfield  {author} {\bibinfo {author} {\bibfnamefont {R.}~\bibnamefont
  {Alicki}},\ }\bibfield  {title} {\bibinfo {title} {{On the detailed balance
  condition for non-Hamiltonian systems}},\ }\href
  {https://doi.org/10.1016/0034-4877(76)90046-x} {\bibfield  {journal}
  {\bibinfo  {journal} {Rep. Math. Phys.}\ }\textbf {\bibinfo {volume} {10}},\
  \bibinfo {pages} {249} (\bibinfo {year} {1976})}\BibitemShut {NoStop}%
\bibitem [{\citenamefont {Alicki}(1979)}]{Alicki.1979.JPA}%
  \BibitemOpen
  \bibfield  {author} {\bibinfo {author} {\bibfnamefont {R.}~\bibnamefont
  {Alicki}},\ }\bibfield  {title} {\bibinfo {title} {{The quantum open system
  as a model of the heat engine}},\ }\href
  {https://doi.org/10.1088/0305-4470/12/5/007} {\bibfield  {journal} {\bibinfo
  {journal} {J. Phys. A}\ }\textbf {\bibinfo {volume} {12}},\ \bibinfo {pages}
  {L103} (\bibinfo {year} {1979})}\BibitemShut {NoStop}%
\bibitem [{\citenamefont {Spohn}(1978)}]{Spohn.1978.JMP}%
  \BibitemOpen
  \bibfield  {author} {\bibinfo {author} {\bibfnamefont {H.}~\bibnamefont
  {Spohn}},\ }\bibfield  {title} {\bibinfo {title} {{Entropy production for
  quantum dynamical semigroups}},\ }\href {https://doi.org/10.1063/1.523789}
  {\bibfield  {journal} {\bibinfo  {journal} {J. Math. Phys.}\ }\textbf
  {\bibinfo {volume} {19}},\ \bibinfo {pages} {1227} (\bibinfo {year}
  {1978})}\BibitemShut {NoStop}%
\bibitem [{\citenamefont {Maes}(2020)}]{Maes.2020.PR}%
  \BibitemOpen
  \bibfield  {author} {\bibinfo {author} {\bibfnamefont {C.}~\bibnamefont
  {Maes}},\ }\bibfield  {title} {\bibinfo {title} {{Frenesy: Time-symmetric
  dynamical activity in nonequilibria}},\ }\href
  {https://doi.org/10.1016/j.physrep.2020.01.002} {\bibfield  {journal}
  {\bibinfo  {journal} {Phys. Rep.}\ }\textbf {\bibinfo {volume} {850}},\
  \bibinfo {pages} {1} (\bibinfo {year} {2020})}\BibitemShut {NoStop}%
\bibitem [{\citenamefont {Baumgratz}\ \emph {et~al.}(2014)\citenamefont
  {Baumgratz}, \citenamefont {Cramer},\ and\ \citenamefont
  {Plenio}}]{Baumgratz.2014.PRL}%
  \BibitemOpen
  \bibfield  {author} {\bibinfo {author} {\bibfnamefont {T.}~\bibnamefont
  {Baumgratz}}, \bibinfo {author} {\bibfnamefont {M.}~\bibnamefont {Cramer}},\
  and\ \bibinfo {author} {\bibfnamefont {M.~B.}\ \bibnamefont {Plenio}},\
  }\bibfield  {title} {\bibinfo {title} {{Quantifying coherence}},\ }\href
  {https://doi.org/10.1103/PhysRevLett.113.140401} {\bibfield  {journal}
  {\bibinfo  {journal} {Phys. Rev. Lett.}\ }\textbf {\bibinfo {volume} {113}},\
  \bibinfo {pages} {140401} (\bibinfo {year} {2014})}\BibitemShut {NoStop}%
\bibitem [{fnt()}]{fnt1}%
  \BibitemOpen
  \href@noop {} {}In the case of energy degeneracy, essentially the same result
  can be obtained by using quantum coherence between energy eigenstates with
  different energies \cite{Supp.PhysRev}.\BibitemShut {Stop}%
\bibitem [{\citenamefont {Streltsov}\ \emph {et~al.}(2017)\citenamefont
  {Streltsov}, \citenamefont {Adesso},\ and\ \citenamefont
  {Plenio}}]{Streltsov.2017.RMP}%
  \BibitemOpen
  \bibfield  {author} {\bibinfo {author} {\bibfnamefont {A.}~\bibnamefont
  {Streltsov}}, \bibinfo {author} {\bibfnamefont {G.}~\bibnamefont {Adesso}},\
  and\ \bibinfo {author} {\bibfnamefont {M.~B.}\ \bibnamefont {Plenio}},\
  }\bibfield  {title} {\bibinfo {title} {{Colloquium: Quantum coherence as a
  resource}},\ }\href {https://doi.org/10.1103/RevModPhys.89.041003} {\bibfield
   {journal} {\bibinfo  {journal} {Rev. Mod. Phys.}\ }\textbf {\bibinfo
  {volume} {89}},\ \bibinfo {pages} {041003} (\bibinfo {year}
  {2017})}\BibitemShut {NoStop}%
\bibitem [{\citenamefont {Elouard}\ \emph {et~al.}(2017)\citenamefont
  {Elouard}, \citenamefont {Herrera-Mart\'{\i}}, \citenamefont {Huard},\ and\
  \citenamefont {Auff\`eves}}]{Elouard.2017.PRL}%
  \BibitemOpen
  \bibfield  {author} {\bibinfo {author} {\bibfnamefont {C.}~\bibnamefont
  {Elouard}}, \bibinfo {author} {\bibfnamefont {D.}~\bibnamefont
  {Herrera-Mart\'{\i}}}, \bibinfo {author} {\bibfnamefont {B.}~\bibnamefont
  {Huard}},\ and\ \bibinfo {author} {\bibfnamefont {A.}~\bibnamefont
  {Auff\`eves}},\ }\bibfield  {title} {\bibinfo {title} {{Extracting work from
  quantum measurement in Maxwell's demon engines}},\ }\href
  {https://doi.org/10.1103/PhysRevLett.118.260603} {\bibfield  {journal}
  {\bibinfo  {journal} {Phys. Rev. Lett.}\ }\textbf {\bibinfo {volume} {118}},\
  \bibinfo {pages} {260603} (\bibinfo {year} {2017})}\BibitemShut {NoStop}%
\bibitem [{\citenamefont {Bresque}\ \emph {et~al.}(2021)\citenamefont
  {Bresque}, \citenamefont {Camati}, \citenamefont {Rogers}, \citenamefont
  {Murch}, \citenamefont {Jordan},\ and\ \citenamefont
  {Auff\`eves}}]{Bresque.2021.PRL}%
  \BibitemOpen
  \bibfield  {author} {\bibinfo {author} {\bibfnamefont {L.}~\bibnamefont
  {Bresque}}, \bibinfo {author} {\bibfnamefont {P.~A.}\ \bibnamefont {Camati}},
  \bibinfo {author} {\bibfnamefont {S.}~\bibnamefont {Rogers}}, \bibinfo
  {author} {\bibfnamefont {K.}~\bibnamefont {Murch}}, \bibinfo {author}
  {\bibfnamefont {A.~N.}\ \bibnamefont {Jordan}},\ and\ \bibinfo {author}
  {\bibfnamefont {A.}~\bibnamefont {Auff\`eves}},\ }\bibfield  {title}
  {\bibinfo {title} {{Two-qubit engine fueled by entanglement and local
  measurements}},\ }\href {https://doi.org/10.1103/PhysRevLett.126.120605}
  {\bibfield  {journal} {\bibinfo  {journal} {Phys. Rev. Lett.}\ }\textbf
  {\bibinfo {volume} {126}},\ \bibinfo {pages} {120605} (\bibinfo {year}
  {2021})}\BibitemShut {NoStop}%
\bibitem [{\citenamefont {Leggett}\ \emph {et~al.}(1987)\citenamefont
  {Leggett}, \citenamefont {Chakravarty}, \citenamefont {Dorsey}, \citenamefont
  {Fisher}, \citenamefont {Garg},\ and\ \citenamefont
  {Zwerger}}]{Leggett.1987.RMP}%
  \BibitemOpen
  \bibfield  {author} {\bibinfo {author} {\bibfnamefont {A.~J.}\ \bibnamefont
  {Leggett}}, \bibinfo {author} {\bibfnamefont {S.}~\bibnamefont
  {Chakravarty}}, \bibinfo {author} {\bibfnamefont {A.~T.}\ \bibnamefont
  {Dorsey}}, \bibinfo {author} {\bibfnamefont {M.~P.~A.}\ \bibnamefont
  {Fisher}}, \bibinfo {author} {\bibfnamefont {A.}~\bibnamefont {Garg}},\ and\
  \bibinfo {author} {\bibfnamefont {W.}~\bibnamefont {Zwerger}},\ }\bibfield
  {title} {\bibinfo {title} {{Dynamics of the dissipative two-state system}},\
  }\href {https://doi.org/10.1103/RevModPhys.59.1} {\bibfield  {journal}
  {\bibinfo  {journal} {Rev. Mod. Phys.}\ }\textbf {\bibinfo {volume} {59}},\
  \bibinfo {pages} {1} (\bibinfo {year} {1987})}\BibitemShut {NoStop}%
\bibitem [{\citenamefont {Miettinen}(1999)}]{Miettinen.1999}%
  \BibitemOpen
  \bibfield  {author} {\bibinfo {author} {\bibfnamefont {K.}~\bibnamefont
  {Miettinen}},\ }\href@noop {} {\emph {\bibinfo {title} {{Nonlinear
  Multiobjective Optimization}}}},\ Vol.~\bibinfo {volume} {12}\ (\bibinfo
  {publisher} {Springer},\ \bibinfo {address} {New York},\ \bibinfo {year}
  {1999})\BibitemShut {NoStop}%
\bibitem [{\citenamefont {Jozsa}(1994)}]{Jozsa.1994.JMO}%
  \BibitemOpen
  \bibfield  {author} {\bibinfo {author} {\bibfnamefont {R.}~\bibnamefont
  {Jozsa}},\ }\bibfield  {title} {\bibinfo {title} {{Fidelity for mixed quantum
  states}},\ }\href {https://doi.org/10.1080/09500349414552171} {\bibfield
  {journal} {\bibinfo  {journal} {J. Mod. Opt.}\ }\textbf {\bibinfo {volume}
  {41}},\ \bibinfo {pages} {2315} (\bibinfo {year} {1994})}\BibitemShut
  {NoStop}%
\bibitem [{\citenamefont {Dann}\ \emph {et~al.}(2019)\citenamefont {Dann},
  \citenamefont {Tobalina},\ and\ \citenamefont {Kosloff}}]{Dann.2019.PRL}%
  \BibitemOpen
  \bibfield  {author} {\bibinfo {author} {\bibfnamefont {R.}~\bibnamefont
  {Dann}}, \bibinfo {author} {\bibfnamefont {A.}~\bibnamefont {Tobalina}},\
  and\ \bibinfo {author} {\bibfnamefont {R.}~\bibnamefont {Kosloff}},\
  }\bibfield  {title} {\bibinfo {title} {{Shortcut to equilibration of an open
  quantum system}},\ }\href {https://doi.org/10.1103/PhysRevLett.122.250402}
  {\bibfield  {journal} {\bibinfo  {journal} {Phys. Rev. Lett.}\ }\textbf
  {\bibinfo {volume} {122}},\ \bibinfo {pages} {250402} (\bibinfo {year}
  {2019})}\BibitemShut {NoStop}%
\bibitem [{\citenamefont {Rivas}(2020)}]{Rivas.2020.PRL}%
  \BibitemOpen
  \bibfield  {author} {\bibinfo {author} {\bibfnamefont {A.}~\bibnamefont
  {Rivas}},\ }\bibfield  {title} {\bibinfo {title} {{Strong coupling
  thermodynamics of open quantum systems}},\ }\href
  {https://doi.org/10.1103/PhysRevLett.124.160601} {\bibfield  {journal}
  {\bibinfo  {journal} {Phys. Rev. Lett.}\ }\textbf {\bibinfo {volume} {124}},\
  \bibinfo {pages} {160601} (\bibinfo {year} {2020})}\BibitemShut {NoStop}%
\bibitem [{\citenamefont {Maas}(2011)}]{Maas.2011.JFA}%
  \BibitemOpen
  \bibfield  {author} {\bibinfo {author} {\bibfnamefont {J.}~\bibnamefont
  {Maas}},\ }\bibfield  {title} {\bibinfo {title} {{Gradient flows of the
  entropy for finite Markov chains}},\ }\href
  {https://doi.org/10.1016/j.jfa.2011.06.009} {\bibfield  {journal} {\bibinfo
  {journal} {J. Funct. Anal.}\ }\textbf {\bibinfo {volume} {261}},\ \bibinfo
  {pages} {2250} (\bibinfo {year} {2011})}\BibitemShut {NoStop}%
\bibitem [{\citenamefont {Garrahan}(2017)}]{Garrahan.2017.PRE}%
  \BibitemOpen
  \bibfield  {author} {\bibinfo {author} {\bibfnamefont {J.~P.}\ \bibnamefont
  {Garrahan}},\ }\bibfield  {title} {\bibinfo {title} {{Simple bounds on
  fluctuations and uncertainty relations for first-passage times of counting
  observables}},\ }\href {https://doi.org/10.1103/PhysRevE.95.032134}
  {\bibfield  {journal} {\bibinfo  {journal} {Phys. Rev. E}\ }\textbf {\bibinfo
  {volume} {95}},\ \bibinfo {pages} {032134} (\bibinfo {year}
  {2017})}\BibitemShut {NoStop}%
\bibitem [{\citenamefont {Van~Vu}\ and\ \citenamefont
  {Hasegawa}(2019)}]{Vu.2019.PRE.UnderdampedTUR}%
  \BibitemOpen
  \bibfield  {author} {\bibinfo {author} {\bibfnamefont {T.}~\bibnamefont
  {Van~Vu}}\ and\ \bibinfo {author} {\bibfnamefont {Y.}~\bibnamefont
  {Hasegawa}},\ }\bibfield  {title} {\bibinfo {title} {{Uncertainty relations
  for underdamped Langevin dynamics}},\ }\href
  {https://doi.org/10.1103/PhysRevE.100.032130} {\bibfield  {journal} {\bibinfo
   {journal} {Phys. Rev. E}\ }\textbf {\bibinfo {volume} {100}},\ \bibinfo
  {pages} {032130} (\bibinfo {year} {2019})}\BibitemShut {NoStop}%
\bibitem [{\citenamefont {Cavina}\ \emph {et~al.}(2017)\citenamefont {Cavina},
  \citenamefont {Mari},\ and\ \citenamefont {Giovannetti}}]{Cavina.2017.PRL}%
  \BibitemOpen
  \bibfield  {author} {\bibinfo {author} {\bibfnamefont {V.}~\bibnamefont
  {Cavina}}, \bibinfo {author} {\bibfnamefont {A.}~\bibnamefont {Mari}},\ and\
  \bibinfo {author} {\bibfnamefont {V.}~\bibnamefont {Giovannetti}},\
  }\bibfield  {title} {\bibinfo {title} {{Slow dynamics and thermodynamics of
  open quantum systems}},\ }\href
  {https://doi.org/10.1103/PhysRevLett.119.050601} {\bibfield  {journal}
  {\bibinfo  {journal} {Phys. Rev. Lett.}\ }\textbf {\bibinfo {volume} {119}},\
  \bibinfo {pages} {050601} (\bibinfo {year} {2017})}\BibitemShut {NoStop}%
\bibitem [{\citenamefont {Shivakumar}\ and\ \citenamefont
  {Chew}(1974)}]{Shivakumar.1974.PAMS}%
  \BibitemOpen
  \bibfield  {author} {\bibinfo {author} {\bibfnamefont {P.~N.}\ \bibnamefont
  {Shivakumar}}\ and\ \bibinfo {author} {\bibfnamefont {K.~H.}\ \bibnamefont
  {Chew}},\ }\bibfield  {title} {\bibinfo {title} {{A sufficient condition for
  nonvanishing of determinants}},\ }\href
  {https://doi.org/10.1090/s0002-9939-1974-0332820-0} {\bibfield  {journal}
  {\bibinfo  {journal} {Proc. Amer. Math. Soc.}\ }\textbf {\bibinfo {volume}
  {43}},\ \bibinfo {pages} {63} (\bibinfo {year} {1974})}\BibitemShut {NoStop}%
\bibitem [{\citenamefont {Koenemann}\ \emph {et~al.}(2019)\citenamefont
  {Koenemann}, \citenamefont {Licitra}, \citenamefont {Alp},\ and\
  \citenamefont {Diehl}}]{Koenemann.2019}%
  \BibitemOpen
  \bibfield  {author} {\bibinfo {author} {\bibfnamefont {J.}~\bibnamefont
  {Koenemann}}, \bibinfo {author} {\bibfnamefont {G.}~\bibnamefont {Licitra}},
  \bibinfo {author} {\bibfnamefont {M.}~\bibnamefont {Alp}},\ and\ \bibinfo
  {author} {\bibfnamefont {M.}~\bibnamefont {Diehl}},\ }\href@noop {} {\bibinfo
  {title} {{OpenOCL - Open Optimal Control Library}}},\ \bibinfo {howpublished}
  {\url{https://openocl.github.io/}} (\bibinfo {year} {2019})\BibitemShut
  {NoStop}%
\bibitem [{\citenamefont {Schmiedl}\ and\ \citenamefont
  {Seifert}(2007)}]{Schmiedl.2007.PRL}%
  \BibitemOpen
  \bibfield  {author} {\bibinfo {author} {\bibfnamefont {T.}~\bibnamefont
  {Schmiedl}}\ and\ \bibinfo {author} {\bibfnamefont {U.}~\bibnamefont
  {Seifert}},\ }\bibfield  {title} {\bibinfo {title} {{Optimal finite-time
  processes in stochastic thermodynamics}},\ }\href
  {https://doi.org/10.1103/PhysRevLett.98.108301} {\bibfield  {journal}
  {\bibinfo  {journal} {Phys. Rev. Lett.}\ }\textbf {\bibinfo {volume} {98}},\
  \bibinfo {pages} {108301} (\bibinfo {year} {2007})}\BibitemShut {NoStop}%
\bibitem [{\citenamefont {Then}\ and\ \citenamefont
  {Engel}(2008)}]{Then.2008.PRE}%
  \BibitemOpen
  \bibfield  {author} {\bibinfo {author} {\bibfnamefont {H.}~\bibnamefont
  {Then}}\ and\ \bibinfo {author} {\bibfnamefont {A.}~\bibnamefont {Engel}},\
  }\bibfield  {title} {\bibinfo {title} {{Computing the optimal protocol for
  finite-time processes in stochastic thermodynamics}},\ }\href
  {https://doi.org/10.1103/PhysRevE.77.041105} {\bibfield  {journal} {\bibinfo
  {journal} {Phys. Rev. E}\ }\textbf {\bibinfo {volume} {77}},\ \bibinfo
  {pages} {041105} (\bibinfo {year} {2008})}\BibitemShut {NoStop}%
\bibitem [{\citenamefont {Aurell}\ \emph {et~al.}(2011)\citenamefont {Aurell},
  \citenamefont {Mej\'{\i}a-Monasterio},\ and\ \citenamefont
  {Muratore-Ginanneschi}}]{Aurell.2011.PRL}%
  \BibitemOpen
  \bibfield  {author} {\bibinfo {author} {\bibfnamefont {E.}~\bibnamefont
  {Aurell}}, \bibinfo {author} {\bibfnamefont {C.}~\bibnamefont
  {Mej\'{\i}a-Monasterio}},\ and\ \bibinfo {author} {\bibfnamefont
  {P.}~\bibnamefont {Muratore-Ginanneschi}},\ }\bibfield  {title} {\bibinfo
  {title} {{Optimal protocols and optimal transport in stochastic
  thermodynamics}},\ }\href {https://doi.org/10.1103/PhysRevLett.106.250601}
  {\bibfield  {journal} {\bibinfo  {journal} {Phys. Rev. Lett.}\ }\textbf
  {\bibinfo {volume} {106}},\ \bibinfo {pages} {250601} (\bibinfo {year}
  {2011})}\BibitemShut {NoStop}%
\bibitem [{\citenamefont {Solon}\ and\ \citenamefont
  {Horowitz}(2018)}]{Solon.2018.PRL}%
  \BibitemOpen
  \bibfield  {author} {\bibinfo {author} {\bibfnamefont {A.~P.}\ \bibnamefont
  {Solon}}\ and\ \bibinfo {author} {\bibfnamefont {J.~M.}\ \bibnamefont
  {Horowitz}},\ }\bibfield  {title} {\bibinfo {title} {{Phase transition in
  protocols minimizing work fluctuations}},\ }\href
  {https://doi.org/10.1103/PhysRevLett.120.180605} {\bibfield  {journal}
  {\bibinfo  {journal} {Phys. Rev. Lett.}\ }\textbf {\bibinfo {volume} {120}},\
  \bibinfo {pages} {180605} (\bibinfo {year} {2018})}\BibitemShut {NoStop}%
\end{thebibliography}
\end{document}


\title{Supplemental Material for \\ ``Finite-Time Quantum Landauer Principle and Quantum Coherence''}

\author{Tan Van Vu}

\affiliation{Department of Physics, Keio University, 3-14-1 Hiyoshi, Kohoku-ku, Yokohama 223-8522, Japan}

\author{Keiji Saito}

\affiliation{Department of Physics, Keio University, 3-14-1 Hiyoshi, Kohoku-ku, Yokohama 223-8522, Japan}

\begin{abstract}
This supplemental material describes the details of analytical and numerical calculations presented in the main text. 
The equations and figure numbers are prefixed with S [e.g., Eq.~(S1) or Fig.~S1]. 
Numbers without this prefix [e.g., Eq.~(1) or Fig.~1] refer to items in the main text.

\end{abstract}

\pacs{}
\maketitle

\tableofcontents

\section{Erasure process with the initial maximally mixed state}
Here we discuss some crucial implications of considering the erasure process with the initial maximally mixed state, i.e., $\varrho_0=\mbb{I}/d$.
Suppose that we want to reset an unknown qudit (which can be in an arbitrary pure state) to the ground state $\Kt{0}$.
It is worth noting that the erasure process may be imperfect (i.e., the qudit may not be reset exactly to the ground state) because we are considering finite-time information erasure.
We define the completely positive trace-preserving map that represents the erasure protocol as $\Lambda(\cdot):\varrho_0\mapsto\varrho_\tau$.
In the case of the Lindblad dynamics, this map reads $\Lambda(\varrho_0)={\mca{T}}\exp\bra{\int_0^\tau \mca{L}_tdt}\varrho_0$, where $\mca{T}$ denotes the time-ordering operator.

First, we show that studying the erasure process with the initial maximally mixed state suffices to understand the average heat dissipation required to erase an unknown qudit.
It is evident that $\mbb{E}[\Mat{\psi}]=\mbb{I}/d$, where the average $\mbb{E}[\cdot]$ is taken over all possible pure states $\Kt{\psi}=U\Kt{0}$ with $U$ is a Haar-random unitary operator.
Let $\mca{Q}(\varrho_0)$ be the dissipated heat of the protocol for the initial state $\varrho_0$, then the average dissipated heat over all pure states can be calculated as $\overline{Q}=\mbb{E}[\mca{Q}(\Mat{\psi})]$.
Using the linearity of $\Lambda$ and $\mca{Q}$ [i.e., $\Lambda(\lambda\varrho+(1-\lambda)\sigma)=\lambda\Lambda(\varrho)+(1-\lambda)\Lambda(\sigma)$ and $\mca{Q}(\lambda\varrho+(1-\lambda)\sigma)=\lambda\mca{Q}(\varrho)+(1-\lambda)\mca{Q}(\sigma)$], we obtain
\begin{equation}
\overline{Q}=\mbb{E}[\mca{Q}(\Mat{\psi})]=\mca{Q}(\mbb{E}[\Mat{\psi}])=\mca{Q}(\mbb{I}/d).\label{eq:avg.heat}
\end{equation}
Equation \eqref{eq:avg.heat} implies that the dissipated heat in the case of the initial maximally mixed state is exactly the average dissipated heat of the erasure protocol over all pure states.

Next, we prove that if an erasure protocol erases the maximally mixed state with the error order of $O(\epsilon)$, then the protocol erases an arbitrary pure state with the error order of $O(\epsilon^{1/2})$.
Let $\|A\|_{\rm F}=\sqrt{\Tr{A^\dagger A}}=\sqrt{\sum_{m,n}|a_{mn}|^2}$ be the Frobenius norm of a matrix $A=[a_{mn}]$, the statement is explicitly embodied in the following lemma.
\begin{lemma}
If an erasure protocol satisfies $\|\Lambda(\mbb{I}/d)-\Mat{0}\|_{\rm F}^2\le\epsilon$ (where $\epsilon>0$ is a sufficiently small number), then for an arbitrary pure state $\Kt{\psi}$, the following inequality holds:
\begin{equation}
\|\Lambda(\Mat{\psi})-\Mat{0}\|_{\rm F}^2\le 2\sqrt{d(d-1)\epsilon}\xrightarrow{\epsilon\to 0} 0.
\end{equation}
\end{lemma}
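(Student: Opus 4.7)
The plan is to exploit the freedom in decomposing $\mbb{I}/d$ as a uniform mixture of any orthonormal basis. First I would extend $\Kt{\psi}$ to an orthonormal basis $\set{\Kt{\psi_0}=\Kt{\psi},\Kt{\psi_1},\ldots,\Kt{\psi_{d-1}}}$, so that $\mbb{I}/d=\tfrac{1}{d}\sum_{n=0}^{d-1}\Mat{\psi_n}$. Defining $\sigma_n:=\Lambda(\Mat{\psi_n})$ and using the linearity of $\Lambda$ gives $\Lambda(\mbb{I}/d)=\tfrac{1}{d}\sum_n\sigma_n$, which links the hypothesis on the maximally mixed state to the target quantity $\|\sigma_0-\Mat{0}\|_{\rm F}^2$.

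The second step is to control the Frobenius distance $\|\sigma_0-\Mat{0}\|_{\rm F}^2$ by a single matrix element. Expanding the norm and using $\Tr{\sigma_0^2}\le 1$ (since $\sigma_0$ is a density matrix) yields the simple bound
\begin{equation}
\|\sigma_0-\Mat{0}\|_{\rm F}^2\le 2\bra{1-\avgObs{0}{\sigma_0}}.
\end{equation}
Thus it suffices to show $1-\avgObs{0}{\sigma_0}\le\sqrt{d(d-1)\epsilon}$.

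To do so, I would leverage that each $\sigma_n$ satisfies $\avgObs{0}{\sigma_n}\le 1$. Writing $d\avgObs{0}{\Lambda(\mbb{I}/d)}=\sum_n\avgObs{0}{\sigma_n}$ and isolating the $n=0$ term gives $\avgObs{0}{\sigma_0}\ge d\,\avgObs{0}{\Lambda(\mbb{I}/d)}-(d-1)$, so $1-\avgObs{0}{\sigma_0}\le d\bra{1-\avgObs{0}{\Lambda(\mbb{I}/d)}}$. The remaining task is to bound $1-\avgObs{0}{\Lambda(\mbb{I}/d)}$ sharply. Decomposing $\Lambda(\mbb{I}/d)=\Mat{0}+\Delta$ with $\Delta$ Hermitian and traceless, the assumption reads $\|\Delta\|_{\rm F}^2\le\epsilon$; positivity of $\Lambda(\mbb{I}/d)$ forces $\Delta_{nn}\ge 0$ for $n\ge 1$ together with $\Delta_{00}=-\sum_{n\ge 1}\Delta_{nn}\le 0$.

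The key—and probably the only nontrivial—step is extracting the factor $\sqrt{d(d-1)}$ (rather than the looser $d$): I would apply the Cauchy--Schwarz inequality to the $d-1$ nonnegative diagonal entries,
\begin{equation}
\Delta_{00}^2=\Bigl(\sum_{n\ge 1}\Delta_{nn}\Bigr)^{\!2}\le(d-1)\sum_{n\ge 1}\Delta_{nn}^2\le(d-1)\bra{\|\Delta\|_{\rm F}^2-\Delta_{00}^2},
\end{equation}
which rearranges to $\Delta_{00}^2\le\tfrac{d-1}{d}\|\Delta\|_{\rm F}^2\le\tfrac{d-1}{d}\epsilon$. Substituting $1-\avgObs{0}{\Lambda(\mbb{I}/d)}=|\Delta_{00}|\le\sqrt{(d-1)\epsilon/d}$ into the chain assembled above closes the argument. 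I expect no real obstacle beyond keeping track of the sign and positivity constraints on $\Delta$; everything else is a routine combination of linearity, Cauchy--Schwarz, and the unit-trace/unit-norm properties of density matrices.
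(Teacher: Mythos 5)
Your proof is correct and arrives at exactly the paper's constant $2\sqrt{d(d-1)\epsilon}$, but by a genuinely different and more elementary route. The paper maps density matrices to points in a ball of radius $r=\sqrt{1-1/d}$ via the generalized Gell-Mann basis, rotates coordinates so that $\Mat{0}$ sits at $[-r,0,\dots,0]^\top$, and uses the decomposition $\mbb{I}=\Mat{\psi}+(d-1)\psi'$ with the single mixed complement $\psi'=(\mbb{I}-\Mat{\psi})/(d-1)$; the hypothesis then pins down the first coordinate of $\mbm{x}(\Lambda(\Mat{\psi}))$ to within $d\sqrt{\epsilon}$ of $-r$, and the purity constraint $\|\mbm{x}\|^2\le r^2$ controls the remaining coordinates. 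You instead complete $\Kt{\psi}$ to an orthonormal basis (your uniform mixture $\frac{1}{d}\sum_n\Mat{\psi_n}$ is the same complement decomposition in disguise), reduce the whole problem to the single fidelity $\avgObs{0}{\Lambda(\Mat{\psi})}$ via $\|\sigma_0-\Mat{0}\|_{\rm F}^2\le 2(1-\avgObs{0}{\sigma_0})$ together with $\Tr{\sigma_0^2}\le 1$, and then extract the sharp factor $\sqrt{(d-1)/d}$ by Cauchy--Schwarz on the nonnegative diagonal entries of $\Delta=\Lambda(\mbb{I}/d)-\Mat{0}$ (keeping the $-\Delta_{00}^2$ term in $\sum_{n\ge1}\Delta_{nn}^2\le\|\Delta\|_{\rm F}^2-\Delta_{00}^2$ is indeed the step that saves the extra $\sqrt{d}$, and you identified it correctly). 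What each approach buys: yours avoids the Gell-Mann machinery and the coordinate rotation entirely and isolates the role of positivity and unit trace very cleanly; the paper's version makes the geometry explicit (pure states on the boundary of the ball) and its coordinate bound Eq.~(S12) is reused conceptually elsewhere. Your fidelity bound $1-\avgObs{0}{\Lambda(\Mat{\psi})}\le\sqrt{d(d-1)\epsilon}$ is in fact a slightly stronger intermediate statement than the Frobenius-norm conclusion, which is a small bonus of your route.
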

\begin{proof}
Let $\{\chi_n\}_{0\le n\le d^2-1}$ denote an orthonormal basis for the linear space of operators in the complex Hilbert space $\mca{H}^d$.
Specifically, $\chi_{n}$ with $n=(j-1)d+(k-1)~(1\le j,k\le d)$ can be the generalized Gell-Mann matrices, given as follows:
\begin{equation}
\chi_n=\begin{cases}
(E_{k,j}+E_{j,k})/\sqrt{2}, & {\text{if}}~ j<k, \\
i(E_{k,j}-E_{j,k})/\sqrt{2}, & {\text{if}}~ j>k, \\
\sqrt{\frac{2}{j(j+1)}}\bra{\sum_{l=1}^{j}E_{l,l}-jE_{j+1,j+1}}/\sqrt{2}, & {\text{if}}~ j=k<d, \\
\mbb{I}/\sqrt{d}, & {\text{if}}~ j=k=d, \\
\end{cases}
\end{equation}
where $E_{j,k}$ denotes a matrix with $1$ in the $jk$-th entry and $0$ elsewhere.
Note that each $\chi_n$ is a Hermitian matrix, $\Tr{\chi_n}=0$ for all $n\neq d^2-1$, and $\Tr{\chi_n\chi_m}=\delta_{nm}$.
An arbitrary density matrix $\varrho$ can be expressed in the following form:
\begin{equation}
\varrho=\frac{1}{\sqrt{d}}\chi_{d^2-1}+\sum_{n=0}^{d^2-2}x_n\chi_n,
\end{equation}
where $x_n=\Tr{\varrho\chi_n}\in\mbb{R}$.
Since
\begin{equation}
1\ge \Tr{\varrho^2}=\frac{1}{d}+\sum_{n=0}^{d^2-2}x_n^2,
\end{equation}
each density matrix $\varrho$ can be mapped with a point $\mbm{x}(\varrho)\coloneqq[x_0,\dots,x_{d^2-2}]^\top$ in a sphere with the radius $r=\sqrt{1-1/d}$. 
A pure state corresponds to a point on the boundary of the sphere, whereas a mixed state is represented by a point inside the sphere.
In addition, the Frobenius norm of the difference of two density matrices $\varrho$ and $\sigma$ can be written in the terms of their sphere points as follows:
\begin{equation}
\|\varrho-\sigma\|_{\rm F}^2={\sum_{n=0}^{d^2-2}|x_n(\varrho)-x_n(\sigma)|^2}=\|\mbm{x}(\varrho)-\mbm{x}(\sigma)\|_F^2.
\end{equation}

Using this sphere representation, we are going to prove the lemma in the following.
Without loss of generality, we can assume that the ground state $\Kt{0}$ correspond to the point $\mbm{g}\coloneqq\mbm{x}(\Mat{0})=[-r,0,\dots,0]^\top$.
This is always possible by applying an appropriate coordinate transformation $\mbm{x}\to \msf{T}\mbm{x}$, where $\msf{T}^\top\msf{T}=\msf{1}$.
Note that the Frobenius norm is invariant under such unitary transforms.
For each pure state $\Kt{\psi}$, there always exists a mixed state ${\psi'}$ such that $\Mat{\psi}+(d-1){\psi'}=\mbb{I}$.
Indeed, the mixed state ${\psi'}$ can be chosen as $\psi'=(\mbb{I}-\Mat{\psi})/(d-1)$.
Let $\mbm{a}\coloneqq\mbm{x}(\Lambda(\Mat{\psi}))$ and $\mbm{a}'\coloneqq\mbm{x}(\Lambda({\psi'}))$, then the sphere point corresponding to the state $\Lambda(\mbb{I}/d)$ can be expressed in terms of $\mbm{a}$ and $\mbm{a}'$ as
\begin{equation}
\mbm{x}(\Lambda(\mbb{I}/d))=\mbm{x}(\Lambda([\Mat{\psi}+(d-1){\psi'}]/d))=[\mbm{x}(\Lambda(\Mat{\psi}))+(d-1)\mbm{x}(\Lambda({\psi'}))]/d=[\mbm{a}+(d-1)\mbm{a}']/d.
\end{equation}
Therefore, the condition $\|\Lambda(\mbb{I}/d)-\Mat{0}\|_{\rm F}^2\le\epsilon$ is equivalent to
\begin{equation}
\|[\mbm{a}+(d-1)\mbm{a}']/d-\mbm{g}\|_{\rm F}^2\le\epsilon.\label{eq:fnorm.tmp1}
\end{equation}
It suffices to prove that $\|\mbm{a}-\mbm{g}\|_{\rm F}^2\le 2\sqrt{d(d-1)\epsilon}$.
From Eq.~\eqref{eq:fnorm.tmp1}, we have
\begin{equation}
|[a_0+(d-1)a_0']/d-g_0|^2\le \|[\mbm{a}+(d-1)\mbm{a}']/d-\mbm{g}\|_{\rm F}^2\le\epsilon\Rightarrow \frac{r+a_0}{d}+(d-1)\frac{r+a_0'}{d}\le\sqrt{\epsilon}\Rightarrow -r\le a_0\le -r+d\sqrt{\epsilon}.\label{eq:fnorm.tmp2}
\end{equation}
Equation \eqref{eq:fnorm.tmp2} immediately derives that $|a_0+r|\le d\sqrt{\epsilon}$ and $|a_0|\ge r-d\sqrt{\epsilon}\ge 0$.
From the inequality $\sum_{n=0}^{d^2-2}|a_n|^2\le r^2$, the partial sum $\sum_{n=1}^{d^2-2}|a_n|^2$ can be upper-bounded as
\begin{equation}
\sum_{n=1}^{d^2-2}|a_n|^2\le r^2-|a_0|^2\le r^2-(r-d\sqrt{\epsilon})^2=2rd\sqrt{\epsilon}-d^2\epsilon.
\end{equation}
Combining these inequalities, we finally have
\begin{equation}
\|\mbm{a}-\mbm{g}\|_{\rm F}^2=|a_0+r|^2+\sum_{n=1}^{d^2-2}|a_n|^2\le d^2\epsilon+2rd\sqrt{\epsilon}-d^2\epsilon=2rd\sqrt{\epsilon}=2\sqrt{d(d-1)\epsilon},
\end{equation}
which completes the proof.
\end{proof}
Roughly speaking, the above lemma implies that the error of an erasure protocol can be estimated by investigating the case of the initial maximally mixed state. If an erasure protocol erases the maximally mixed state perfectly, then it does so for an arbitrary pure state.

\section{Alternative expression of the irreversible entropy production}

We define $U_t\coloneqq \mca{T}\exp\bra{-i\int_0^t H_sds}$, where $\mca{T}$ denotes the time-ordering operator.
This time-dependent operator satisfies the following differential equation:
\begin{equation}
\dot{U}_t=-iH_tU_t.
\end{equation}
For each time-dependent operator $X_t$, we also define the corresponding operator in the interaction picture, $\tilde{X}_t\coloneqq U_t^\dagger X_tU_t$.

First, we show that the density matrix $\tilde{\varrho}_t$ obeys the following Lindblad master equation:
\begin{equation}
\dot{\tilde{\varrho}}_t=\sum_k\mca{D}[\tilde{L}_k(t)]\tilde{\varrho}_t.\label{eq:Lind.eq.int.pic}
\end{equation}
Taking the time derivative of $\tilde{\varrho}_t$, Eq.~\eqref{eq:Lind.eq.int.pic} can be obtained as
\begin{equation}
\begin{aligned}[b]
\dot{\tilde{\varrho}}_t&=U_t^\dagger i[H_t,\varrho_t]U_t+U_t^\dagger\dot{\varrho}_tU_t\\
&=U_t^\dagger\sum_k\bras{L_k(t)\varrho_tL_k(t)^\dagger-\frac{1}{2}\{L_k(t)^\dagger L_k(t),\varrho_t\}}U_t\\
&=\sum_k\bras{\tilde{L}_k(t)\tilde{\varrho}_t\tilde{L}_k(t)^\dagger-\frac{1}{2}\{\tilde{L}_k(t)^\dagger \tilde{L}_k(t),\tilde{\varrho}_t\}}\\
&=\sum_k\mca{D}[\tilde{L}_k(t)]\tilde{\varrho}_t.
\end{aligned}
\end{equation}
It should be noted that the jump operators in the interaction picture satisfy $[\tilde{L}_k(t),\tilde{H}_t]=\omega_k(t)\tilde{L}_k(t)$, $\tilde{L}_k(t)=e^{\beta\omega_k(t)/2}\tilde{L}_{k'}(t)^\dagger$, and the operator $\tilde{\pi}_t$ is the instantaneous equilibrium state of Eq.~\eqref{eq:Lind.eq.int.pic}, i.e., $\sum_k\mca{D}[\tilde{L}_k(t)]\tilde{\pi}_t=0$.
Here, $\omega_k(t)$ is the energy change associated with the jump operator $L_k(t)$, and $L_{k'}(t)$ denotes the reversed jump operator corresponding to $L_k(t)$.

Next, we show that the Lindblad master equation [Eq.~\eqref{eq:Lind.eq.int.pic}] can be written as
\begin{equation}
\dot{\tilde{\varrho}}_t=\mca{O}_{\tilde{\varrho}_t}(t,-\ln\tilde{\varrho}_t+\ln\tilde{\pi}_t),\label{eq:Lind.sup.ope}
\end{equation}
where the time-dependent super-operator $\mca{O}_\phi(t,\nu)$ is defined by
\begin{equation}
\mca{O}_{\phi}(t,\nu)\coloneqq\frac{1}{2}\sum_{k}e^{-\beta\omega_k(t)/2}[\tilde{L}_k(t),\sop{\phi}_{\beta\omega_k(t)}([\tilde{L}_k(t)^\dagger,\nu])],
\end{equation}
and the tilted operator is defined as $\sop{\phi}_\theta(X)\coloneqq e^{-\theta/2}\int_0^1e^{s\theta}\phi^sX\phi^{1-s}ds$.
In other words, we need only show that
\begin{equation}
\mca{O}_{\tilde{\varrho}_t}(t,-\ln\tilde{\varrho}_t+\ln\tilde{\pi}_t)=\sum_k\mca{D}[\tilde{L}_k(t)]\tilde{\varrho}_t.\label{eq:operator.O}
\end{equation}
To this end, we note that $\sop{\phi}_{\theta}([X,\ln\phi]-\theta X)=e^{-\theta/2}X\phi-e^{\theta/2}\phi X$ \cite{Maas.2011.JFA,Vu.2021.PRL} for an arbitrary operator $X$ and $\theta\in\mbb{R}$.
Using the relation $[\tilde{L}_k(t)^\dagger,\tilde{H}_t]=-\omega_k(t)\tilde{L}_k(t)^\dagger$, we immediately obtain
\begin{equation}
\begin{aligned}[b]
\sop{\tilde{\varrho}_t}_{\beta\omega_k(t)}([\tilde{L}_k(t)^\dagger,-\ln\tilde{\varrho}_t+\ln\tilde{\pi}_t])&=\sop{\tilde{\varrho}_t}_{\beta\omega_k(t)}([\tilde{L}_k(t)^\dagger,-\ln\tilde{\varrho}_t-\beta\tilde{H}_t])\\
&=-\sop{\tilde{\varrho}_t}_{\beta\omega_k(t)}([\tilde{L}_k(t)^\dagger,\ln\tilde{\varrho}_t]+\beta[\tilde{L}_k(t)^\dagger,\tilde{H}_t])\\
&=-\sop{\tilde{\varrho}_t}_{\beta\omega_k(t)}([\tilde{L}_k(t)^\dagger,\ln\tilde{\varrho}_t]-\beta\omega_k(t)\tilde{L}_k(t)^\dagger)\\
&=e^{\beta\omega_k(t)/2}\tilde{\varrho}_t\tilde{L}_k(t)^\dagger-e^{-\beta\omega_k(t)/2}\tilde{L}_k(t)^\dagger\tilde{\varrho}_t.
\end{aligned}
\end{equation}
Then, we can verify Eq.~\eqref{eq:operator.O} as follows:
\begin{equation}
\begin{aligned}[b]
\mca{O}_{\tilde{\varrho}_t}(t,-\ln\tilde{\varrho}_t+\ln\tilde{\pi}_t)&=\frac{1}{2}\sum_{k}e^{-\beta\omega_k(t)/2}[\tilde{L}_k(t),\sop{\tilde{\varrho}_t}_{\beta\omega_k(t)}([\tilde{L}_k(t)^\dagger,-\ln\tilde{\varrho}_t+\ln\tilde{\pi}_t])]\\
&=\frac{1}{2}\sum_{k}e^{-\beta\omega_k(t)/2}[\tilde{L}_k(t),e^{\beta\omega_k(t)/2}\tilde{\varrho}_t\tilde{L}_k(t)^\dagger-e^{-\beta\omega_k(t)/2}\tilde{L}_k(t)^\dagger\tilde{\varrho}_t]\\
&=\frac{1}{2}\sum_{k}\bras{-e^{-\beta\omega_k(t)}\tilde{L}_k(t)\tilde{L}_k(t)^\dagger\tilde{\varrho}_t+\tilde{L}_k(t)\tilde{\varrho}_t\tilde{L}_k(t)^\dagger+e^{-\beta\omega_k(t)}\tilde{L}_k(t)^\dagger\tilde{\varrho}_t\tilde{L}_k(t)-\tilde{\varrho}_t\tilde{L}_k(t)^\dagger\tilde{L}_k(t)}\\
&=\frac{1}{2}\sum_{k}\brab{\bras{\tilde{L}_k(t)\tilde{\varrho}_t\tilde{L}_k(t)^\dagger-\tilde{\varrho}_t\tilde{L}_k(t)^\dagger\tilde{L}_k(t)}+\bras{\tilde{L}_{k'}(t)\tilde{\varrho}_t\tilde{L}_{k'}(t)^\dagger-\tilde{L}_{k'}(t)^\dagger\tilde{L}_{k'}(t)\tilde{\varrho}_t}}\\
&=\sum_{k}\bras{\tilde{L}_k(t)\tilde{\varrho}_t\tilde{L}_k(t)^\dagger-\frac{1}{2}\brab{\tilde{L}_k(t)^\dagger\tilde{L}_k(t),\tilde{\varrho}_t}}\\
&=\sum_k\mca{D}[\tilde{L}_k(t)]\tilde{\varrho}_t.
\end{aligned}
\end{equation}
Hereinafter, we define the inner product $\avg{\cdot,\cdot}$ as $\avg{X,Y}\coloneqq\Tr{\Xdg Y}$ for arbitrary operators $X$ and $Y$.
Then, the irreversible entropy production can be calculated as
\begin{equation}
\begin{aligned}[b]
\Sigma_\tau&=\int_0^\tau\bras{-\Tr{\dot{\varrho}_t\ln\varrho_t}-\beta\Tr{H_t\dot{\rho}_t}}dt\\
&=\int_0^\tau\avg{-\ln\varrho_t+\ln\pi_t,\dot{\varrho}_t}dt\\
&=\int_0^\tau\avg{U_t^\dagger(-\ln\varrho_t+\ln\pi_t)U_t,U_t^\dagger\dot{\varrho}_tU_t}dt\\
&=\int_0^\tau\avg{U_t^\dagger(-\ln\varrho_t+\ln\pi_t)U_t,\dot{\tilde{\varrho}}_t-iU_t^\dagger[H_t,\varrho_t]U_t}dt\\
&=\int_0^\tau\avg{U_t^\dagger(-\ln\varrho_t+\ln\pi_t)U_t,\dot{\tilde{\varrho}}_t}dt\\
&=\int_0^\tau\avg{-\ln\tilde{\varrho}_t+\ln\tilde{\pi}_t,\dot{\tilde{\varrho}}_t}dt\\
&=\int_0^\tau\avg{-\ln\tilde{\varrho}_t+\ln\tilde{\pi}_t,\mca{O}_{\tilde{\varrho}_t}(t,-\ln\tilde{\varrho}_t+\ln\tilde{\pi}_t)}dt.
\end{aligned}
\end{equation}

\section{Lower bound in terms of the trace norm [Ineq.~(\DistanceBound) in the main text]}
\subsection{Derivation of the bound}

Here we provide a detailed derivation of the lower bound in terms of the trace norm in the main text.
Before going into details of the derivation, let us state some useful propositions that will be used during the derivation.

\begin{proposition}\label{prop:cs.ine}
Given a density matrix $\phi$ and $t\ge 0$, the following inequality holds for arbitrary operators $\xi$ and $\nu$,
\begin{equation}
|\avg{\xi,\mca{O}_\phi(t,\nu)}|^2\le \avg{\xi,\mca{O}_\phi(t,\xi)}\avg{\nu,\mca{O}_\phi(t,\nu)}.\label{eq:Cauchy.ine}
\end{equation}
\end{proposition}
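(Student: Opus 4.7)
The plan is to recognize the bilinear map $\Phi(\xi,\nu)\coloneqq\avg{\xi,\mca{O}_\phi(t,\nu)}$ as a positive semi-definite (PSD) Hermitian sesquilinear form on the space of operators; the claimed inequality \eqref{eq:Cauchy.ine} is then just the Cauchy--Schwarz inequality for such a form.

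My first step is to rewrite $\Phi$ so that its inner-product structure becomes visible. Using cyclicity of the trace one has $\Tr{\xi^\dagger[\tilde{L}_k(t),Y]}=\Tr{[\xi^\dagger,\tilde{L}_k(t)]\,Y}$, together with the elementary identity $[\xi^\dagger,\tilde{L}_k(t)]=[\tilde{L}_k(t)^\dagger,\xi]^\dagger$. Applying these to the definition of $\mca{O}_\phi$ moves the outer commutator onto the $\xi$ slot and yields
\begin{equation}
\Phi(\xi,\nu)=\frac{1}{2}\sum_k e^{-\beta\omega_k(t)/2}\,\avg{[\tilde{L}_k(t)^\dagger,\xi],\,\sop{\phi}_{\beta\omega_k(t)}([\tilde{L}_k(t)^\dagger,\nu])}.
\end{equation}

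The heart of the proof is then an auxiliary lemma: for every density matrix $\phi$ and every real $\theta$, the form $(X,Y)\mapsto\avg{X,\sop{\phi}_\theta(Y)}$ is PSD and Hermitian. Sesquilinearity is automatic from linearity of $\sop{\phi}_\theta$. Hermiticity, $\overline{\avg{Y,\sop{\phi}_\theta(X)}}=\avg{X,\sop{\phi}_\theta(Y)}$, follows from self-adjointness of $\phi^s$ combined with cyclicity of the trace applied inside the $s$-integral. Positivity is the crucial point: the integrand factors as a Frobenius norm,
\begin{equation}
\Tr{X^\dagger\phi^s X\phi^{1-s}}=\bigl\|\phi^{s/2}X\phi^{(1-s)/2}\bigr\|_{\rm F}^2\ge 0,
\end{equation}
and integration against the strictly positive weight $e^{-\theta/2}e^{s\theta}$ on $s\in[0,1]$ preserves non-negativity.

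Given this lemma, each summand of $\Phi$ is a non-negative scalar times the pullback of a PSD Hermitian sesquilinear form under the linear map $Z\mapsto[\tilde{L}_k(t)^\dagger,Z]$, which preserves the PSD-Hermitian property, and summing over $k$ preserves it too. The Cauchy--Schwarz inequality for any PSD Hermitian sesquilinear form then gives $|\Phi(\xi,\nu)|^2\le\Phi(\xi,\xi)\Phi(\nu,\nu)$, which is precisely \eqref{eq:Cauchy.ine}. The only step requiring a genuine observation is the Frobenius-norm rewriting used to establish positivity of $\avg{\cdot,\sop{\phi}_\theta(\cdot)}$; the remaining manipulations are routine commutator and trace-cyclicity bookkeeping.
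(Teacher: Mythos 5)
Your proof is correct, and its overall architecture is the same as the paper's: recognize $\avg{\xi,\mca{O}_\phi(t,\nu)}$ as a Hermitian, positive-semidefinite sesquilinear form on operators and invoke the Cauchy--Schwarz inequality. Where you differ is in how positivity and conjugate symmetry are verified. The paper diagonalizes $\phi$ and expresses $\sop{\phi}_\theta$ in the eigenbasis with matrix elements given by the logarithmic mean $\Phi(e^{\theta/2}\phi_n,e^{-\theta/2}\phi_m)$, so that $\avg{\nu,[X,\sop{\phi}_{\theta}([\Xdg,\nu])]}=\sum_{n,m}\Phi(e^{\theta/2}\phi_n,e^{-\theta/2}\phi_m)|\Br{\phi_n}[\Xdg,\nu]\Kt{\phi_m}|^2$ is manifestly nonnegative, with conjugate symmetry read off from the same index expansion. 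You stay basis-free: Hermiticity follows from $\overline{\Tr{A}}=\Tr{A^\dagger}$ and trace cyclicity inside the $s$-integral, and positivity from the pointwise factorization $\Tr{X^\dagger\phi^sX\phi^{1-s}}=\|\phi^{s/2}X\phi^{(1-s)/2}\|_{\rm F}^2\ge 0$ integrated against the positive weight $e^{(s-1/2)\theta}$; your preliminary step of moving the outer commutator onto the $\xi$ slot via $[\xi^\dagger,\tilde{L}_k(t)]=[\tilde{L}_k(t)^\dagger,\xi]^\dagger$ is exactly the manipulation the paper performs later in Eq.~\eqref{eq:trace.bound.2}. Both routes are valid. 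Yours is arguably cleaner and more self-contained for this proposition; the paper's spectral representation Eq.~\eqref{eq:tilted.operator} earns its keep because it is reused in Proposition~\ref{prop:sop.bound} through the bound $\Phi(x,y)\le(x+y)/2$. One shared caveat, affecting both arguments equally and harmlessly: for rank-deficient $\phi$ the powers $\phi^s$ (respectively, the logarithmic mean at vanishing eigenvalues) must be interpreted on the support of $\phi$, which is fine since Cauchy--Schwarz requires only positive \emph{semi}-definiteness of the form.
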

\begin{proof}
For any density matrix $\phi=\sum_n\phi_n\Mat{\phi_n}$, where $\sum_n\phi_n=1$ and $\{\Kt{\phi_n}\}$ are orthonormal eigenvectors, we can express the tilted operator as
\begin{equation}
\sop{\phi}_\theta(X)=e^{-\theta/2}\int_0^1e^{s\theta}\phi^sX\phi^{1-s}ds=\sum_{n,m}\Phi(e^{\theta/2}\phi_n,e^{-\theta/2}\phi_m)\Br{\phi_n}X\Kt{\phi_m}\Kt{\phi_n}\Br{\phi_m}.\label{eq:tilted.operator}
\end{equation}
Here, $\Phi(x,y)$ is the logarithmic mean of two positive numbers $x$ and $y$, given by $\Phi(x,y)=(x-y)/[\ln(x)-\ln(y)]$ for $x\neq y$ and $\Phi(x,x)=x$.

First, we show that the inner product $\avg{\cdot,\mca{O}_\phi(t,\cdot)}$ satisfies the conjugate-symmetry condition $\avg{\xi,\mca{O}_\phi(t,\nu)}=\avg{\nu,\mca{O}_\phi(t,\xi)}^*$ for arbitrary operators $\nu$ and $\xi$.
Here, $*$ denotes the complex conjugate.
For an arbitrary operator $X$ and a real number $\theta\in\mbb{R}$, we have
\begin{equation}\label{eq:cs.ine1}
\begin{aligned}[b]
\avg{\xi,[X,\sop{\phi}_{\theta}([\Xdg,\nu])]}&=\Tr{\xi^\dagger[X,\sop{\phi}_{\theta}([\Xdg,\nu])]}\\
&=\Tr{[\xi^\dagger,X]\sop{\phi}_{\theta}([\Xdg,\nu])}\\
&=\sum_{n,m}\Phi(e^{\theta/2}\phi_n,e^{-\theta/2}\phi_m)\Br{\phi_n}[\Xdg,\nu]\Kt{\phi_m}\Br{\phi_m}[\xi^\dagger,X]\Kt{\phi_n},
\end{aligned}
\end{equation}
where we have used Eq.~\eqref{eq:tilted.operator} in Eq.~\eqref{eq:cs.ine1}.
Swapping $\xi$ and $\nu$, we obtain
\begin{equation}\label{eq:cs.ine2}
\begin{aligned}[b]
\avg{\nu,[X,\sop{\phi}_{\theta}([\Xdg,\xi])]}^*&=\sum_{n,m}\Phi(e^{\theta/2}\phi_n,e^{-\theta/2}\phi_m)\Br{\phi_n}[\Xdg,\xi]\Kt{\phi_m}^*\Br{\phi_m}[\nu^\dagger,X]\Kt{\phi_n}^*\\
&=\sum_{n,m}\Phi(e^{\theta/2}\phi_n,e^{-\theta/2}\phi_m)\Br{\phi_m}[\xi^\dagger,X]\Kt{\phi_n}\Br{\phi_n}[\Xdg,\nu]\Kt{\phi_m}\\
&=\avg{\xi,[X,\sop{\phi}_{\theta}([\Xdg,\nu])]}.
\end{aligned}
\end{equation}
As $\mca{O}_\phi(t,\nu)=\frac{1}{2}\sum_{k}e^{-\beta\omega_k(t)/2}[\tilde{L}_k(t),\sop{\phi}_{\beta\omega_k(t)}([\tilde{L}_k(t)^\dagger,\nu])]$, Eq.~\eqref{eq:cs.ine2} implies that
\begin{equation}
\avg{\nu,\mca{O}_\phi(t,\xi)}^*=\avg{\xi,\mca{O}_\phi(t,\nu)}.\label{eq:consym1}
\end{equation}

Next, we prove that the inner product $\avg{\cdot,\mca{O}_\phi(t,\cdot)}$ is positive-definite, i.e., $\avg{\nu,\mca{O}_\phi(t,\nu)}\ge 0$ for an arbitrary operator $\nu$.
Indeed, from Eq.~\eqref{eq:cs.ine1}, we can show that
\begin{equation}
\avg{\nu,[X,\sop{\phi}_{\theta}([\Xdg,\nu])]}=\sum_{n,m}\Phi(e^{\theta/2}\phi_n,e^{-\theta/2}\phi_m)|\Br{\phi_n}[\Xdg,\nu]\Kt{\phi_m}|^2\ge 0.
\end{equation}
Therefore, $\avg{\nu,\mca{O}_\phi(t,\nu)}\ge 0$.

Given the above properties, we can conclude that the binary operation $\langle\!\langle{\xi,\nu}\rangle\!\rangle\coloneqq\avg{\xi,\mca{O}_\phi(t,\nu)}$ is an inner product between two linear operators since it satisfies the linearity, conjugate-symmetry, and positive-definite conditions. Consequently, the Cauchy--Schwarz inequality can be applied for any two operators $\xi$ and $\nu$,
\begin{equation}
|\langle\!\langle{\xi,\nu}\rangle\!\rangle|^2\le\langle\!\langle{\xi,\xi}\rangle\!\rangle\langle\!\langle{\nu,\nu}\rangle\!\rangle,
\end{equation}
from which the inequality Eq.~\eqref{eq:Cauchy.ine} is immediately proved.

\end{proof}

\begin{proposition}\label{prop:sop.bound}
For an arbitrary operator $Y$, real number $\theta$, and density matrix $\phi$, the following inequality holds,
\begin{equation}
\avg{Y,\sop{\phi}_\theta(Y)}\le\frac{1}{2}\bra{e^{\theta/2}\Tr{\phi YY^\dagger}+e^{-\theta/2}\Tr{\phi Y^\dagger Y}}\le\frac{1}{2}(e^{\theta/2}+e^{-\theta/2})\|Y\|_\infty^2,
\end{equation}
where $\|Y\|_\infty$ denotes the spectral norm of the operator $Y$.
\end{proposition}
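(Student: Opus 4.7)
My plan is to work through the diagonal representation of the tilted operator already established in the proof of Proposition~\ref{prop:cs.ine}, apply the classical logarithmic-mean versus arithmetic-mean inequality, and then finish with elementary trace bounds. The target has two inequalities, and the two rely on essentially independent tricks.

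First, I would use the spectral decomposition $\phi=\sum_n\phi_n\Mat{\phi_n}$ and insert the expression
\begin{equation}
\sop{\phi}_\theta(Y)=\sum_{n,m}\Phi(e^{\theta/2}\phi_n,e^{-\theta/2}\phi_m)\Br{\phi_n}Y\Kt{\phi_m}\Kt{\phi_n}\Br{\phi_m}
\end{equation}
directly into $\avg{Y,\sop{\phi}_\theta(Y)}=\Tr{Y^\dagger\sop{\phi}_\theta(Y)}$. After evaluating the trace, I should obtain a double sum of the form $\sum_{n,m}\Phi(e^{\theta/2}\phi_n,e^{-\theta/2}\phi_m)|\Br{\phi_n}Y\Kt{\phi_m}|^2$, which is manifestly nonnegative (this is the same manipulation used for the positive-definiteness step in Proposition~\ref{prop:cs.ine}).

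Next, I would invoke the standard inequality $\Phi(x,y)\le(x+y)/2$ between the logarithmic and arithmetic means of positive numbers. Applying it termwise with $x=e^{\theta/2}\phi_n$ and $y=e^{-\theta/2}\phi_m$ splits the sum into two pieces. One piece, after summing over $m$ using $\sum_m\Kt{\phi_m}\Br{\phi_m}=\mbb{I}$, collapses to $e^{\theta/2}\sum_n\phi_n\Br{\phi_n}YY^\dagger\Kt{\phi_n}=e^{\theta/2}\Tr{\phi YY^\dagger}$; the other, by summing over $n$ and using cyclicity of the trace, collapses to $e^{-\theta/2}\Tr{\phi Y^\dagger Y}$. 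This yields the first inequality. The main obstacle here is just keeping the indices straight and correctly identifying which sum gives which of $YY^\dagger$ versus $Y^\dagger Y$; it is not conceptually deep but is the one place an error is easy to make.

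Finally, for the second inequality I would use the elementary bound $\Tr{\phi A}\le\|A\|_\infty\Tr{\phi}=\|A\|_\infty$ valid for any positive operator $A$ and any density matrix $\phi$, together with $\|YY^\dagger\|_\infty=\|Y^\dagger Y\|_\infty=\|Y\|_\infty^2$. Applying this to both $A=YY^\dagger$ and $A=Y^\dagger Y$ and substituting into the right-hand side of the first inequality immediately produces $\tfrac{1}{2}(e^{\theta/2}+e^{-\theta/2})\|Y\|_\infty^2$, completing the chain.
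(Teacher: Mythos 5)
Your proposal is correct and follows essentially the same route as the paper's proof: expand $\avg{Y,\sop{\phi}_\theta(Y)}$ in the eigenbasis of $\phi$ via Eq.~\eqref{eq:tilted.operator}, apply $\Phi(x,y)\le(x+y)/2$ termwise, collapse the two resulting sums with the resolution of identity to get $e^{\theta/2}\Tr{\phi YY^\dagger}$ and $e^{-\theta/2}\Tr{\phi Y^\dagger Y}$, and finish with $\Tr{\phi A}\le\|A\|_\infty$. Your identification of which sum yields $YY^\dagger$ versus $Y^\dagger Y$ matches the paper's computation, so there is nothing to correct.
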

\begin{proof}
Using Eq.~\eqref{eq:tilted.operator}, we have
\begin{equation}
\avg{Y,\sop{\phi}_\theta(Y)}=\sum_{n,m}\Phi(e^{\theta/2}\phi_n,e^{-\theta/2}\phi_m)|\Br{\phi_n}Y\Kt{\phi_m}|^2.
\end{equation}
Applying the inequality $\Phi(x,y)\le(x+y)/2$ and the relation $\sum_{n}\Mat{\phi_n}=\mbb{I}$, we obtain
\begin{equation}\label{eq:bound.tilted.op}
\begin{aligned}[b]
\avg{Y,\sop{\phi}_\theta(Y)}&\le\frac{1}{2}\sum_{n,m}\bra{e^{\theta/2}\phi_n+e^{-\theta/2}\phi_m}|\Br{\phi_n}Y\Kt{\phi_m}|^2\\
&=\frac{1}{2}\sum_{n,m}e^{\theta/2}\phi_n\Br{\phi_n}Y\Kt{\phi_m}\Br{\phi_m}Y^\dagger\Kt{\phi_n}+\frac{1}{2}\sum_{m,n}e^{-\theta/2}\phi_m\Br{\phi_m}Y^\dagger\Kt{\phi_n}\Br{\phi_n}Y\Kt{\phi_m}\\
&=\frac{1}{2}\sum_{n}e^{\theta/2}\phi_n\Br{\phi_n}YY^\dagger\Kt{\phi_n}+\frac{1}{2}\sum_{m}e^{-\theta/2}\phi_m\Br{\phi_m}Y^\dagger Y\Kt{\phi_m}\\
&=\frac{1}{2}\bra{e^{\theta/2}\Tr{\phi YY^\dagger}+e^{-\theta/2}\Tr{\phi Y^\dagger Y}}\\
&\le\frac{1}{2}(e^{\theta/2}+e^{-\theta/2})\|Y\|_\infty^2.
\end{aligned}
\end{equation}
Here we have used $\Tr{\phi YY^\dagger}\le\|Y\|_\infty^2$ and $\Tr{\phi Y^\dagger Y}\le\|Y\|_\infty^2$ in the last inequality.
\end{proof}

\begin{proposition}\label{prop:mat.ine}
For any density matrix $\varrho$ and self-adjoint operator $X$ satisfying $X^2=\mbb{I}$ and $[X,\varrho]=0$, the following inequality always holds for any operator $L$,
\begin{equation}\label{eq:mat.ine1}
\Tr{\varrho[L,X][L,X]^\dagger}\le 4\Tr{\varrho L\Ldg }.
\end{equation}
\end{proposition}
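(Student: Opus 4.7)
My plan is to expand the left-hand side and show that it reduces to an expression in which only one nontrivial term is not manifestly bounded by $\Tr{\varrho LL^\dagger}$, then control that term by a Cauchy--Schwarz step. Writing $[L,X]=LX-XL$ and, since $X^\dagger=X$, $[L,X]^\dagger=XL^\dagger-L^\dagger X$, I would expand $[L,X][L,X]^\dagger$ into four products. Using $X^2=\mbb{I}$ and the cyclicity of the trace together with the commutation hypothesis $[X,\varrho]=0$, two of the four resulting traces simplify to $\Tr{\varrho LL^\dagger}$, and the remaining two are both equal to $\Tr{\varrho LXL^\dagger X}$ (one by directly moving $X$ past $\varrho$ and cycling). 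The net identity is
\begin{equation}
\Tr{\varrho [L,X][L,X]^\dagger}=2\Tr{\varrho L\Ldg}-2\Tr{\varrho LX\Ldg X}.
\end{equation}

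Next I would bound the cross term $|\Tr{\varrho LX\Ldg X}|$ by $\Tr{\varrho L\Ldg}$. Because $[X,\varrho]=0$ implies $[X,\varrho^{1/2}]=0$ by the functional calculus, I can factor $\varrho=\varrho^{1/2}\varrho^{1/2}$ and split $\Tr{\varrho LX\Ldg X}=\Tr{AB}$ with $A\coloneqq\varrho^{1/2}L$ and $B\coloneqq X\Ldg X\varrho^{1/2}$. The Hilbert--Schmidt Cauchy--Schwarz inequality $|\Tr{AB}|^2\le\Tr{AA^\dagger}\Tr{B^\dagger B}$ gives $\Tr{AA^\dagger}=\Tr{\varrho L\Ldg}$, and for the other factor, the two $X$'s inside $B^\dagger B$ combine via $X^2=\mbb{I}$ and the $X$--$\varrho^{1/2}$ commutation to yield $\Tr{B^\dagger B}=\Tr{\varrho L\Ldg}$ as well, so
\begin{equation}
|\Tr{\varrho LX\Ldg X}|\le\Tr{\varrho L\Ldg}.
\end{equation}

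Combining the two steps gives $\Tr{\varrho [L,X][L,X]^\dagger}\le 2\Tr{\varrho L\Ldg}+2\Tr{\varrho L\Ldg}=4\Tr{\varrho L\Ldg}$, which is the desired bound. I do not foresee a real obstacle: the only thing to watch is the bookkeeping of the $X^\dagger=X$ and $X^2=\mbb{I}$ substitutions together with the movement of $X$ past $\varrho^{1/2}$, and the key analytic input is simply the trace Cauchy--Schwarz inequality applied to the cross term.
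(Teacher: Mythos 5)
Your proof is correct, and it takes a somewhat different route from the paper's. The paper never computes the trace exactly: it works at the operator level, observing that $(LX+XL)(LX+XL)^\dagger\succeq 0$ and hence $[L,X][L,X]^\dagger\preceq 2(LX\Xdg\Ldg+XL\Ldg\Xdg)$, and only then traces against $\varrho$, using $\Tr{AB}\ge 0$ for positive semidefinite $A,B$ together with $X^2=\mbb{I}$ and $[X,\varrho]=0$ to evaluate the right-hand side as $4\Tr{\varrho L\Ldg}$. You instead derive the exact identity $\Tr{\varrho[L,X][L,X]^\dagger}=2\Tr{\varrho L\Ldg}-2\Tr{\varrho LX\Ldg X}$ and control the cross term via the Hilbert--Schmidt Cauchy--Schwarz inequality after splitting $\varrho=\varrho^{1/2}\varrho^{1/2}$ (the step $[X,\varrho]=0\Rightarrow[X,\varrho^{1/2}]=0$ is legitimate by functional calculus). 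The two arguments are close cousins --- tracing the paper's operator inequality against $\varrho$ recovers exactly the one-sided bound $-\Tr{\varrho LX\Ldg X}\le\Tr{\varrho L\Ldg}$ that your Cauchy--Schwarz step delivers (in absolute value) --- but the mechanisms differ: the paper's version is slightly leaner, needing no square root of $\varrho$ and no trace Cauchy--Schwarz, while yours makes the slack in the inequality explicit through the exact identity, which is marginally more informative about when the bound is tight. Both are complete and correct.
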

\begin{proof}
Since $AA^\dagger$ is positive semi-definite for any operator $A$, we thus have
\begin{equation}
(LX+XL)(LX+XL)^\dagger\succeq 0\Rightarrow -LX\Ldg \Xdg-XL\Xdg\Ldg\preceq LX\Xdg\Ldg+XL\Ldg\Xdg,
\end{equation}
where $A\preceq B$ means that the matrix $B-A$ is positive semi-definite.
Therefore,
\begin{equation}\label{eq:mat.ine2}
\begin{aligned}[b]
[L,X][L,X]^\dagger &=(LX-XL)(\Xdg \Ldg-\Ldg \Xdg)\\
&=LX\Xdg \Ldg+XL\Ldg \Xdg-LX\Ldg \Xdg-XL\Xdg \Ldg\\
&\preceq 2(LX\Xdg \Ldg+XL\Ldg \Xdg).
\end{aligned}
\end{equation}
Since $\Tr{AB}\ge 0$ for any self-adjoint operators $A,B\succeq 0$, Eq.~\eqref{eq:mat.ine1} can be proved from Eq.~\eqref{eq:mat.ine2} as
\begin{equation}
\begin{aligned}[b]
\Tr{\varrho[L,X][L,X]^\dagger}\le 2\Tr{\varrho(LX\Xdg \Ldg+XL\Ldg \Xdg)}=4\Tr{\varrho LL^\dagger}.
\end{aligned}
\end{equation}
\end{proof}

We are now ready to prove the Ineq.~(\DistanceBound) in the main text.
Let $\varrho_t=\sum_n\mu_n(t)\Mat{\mu_n(t)}$ be the spectral decomposition of the density matrix $\varrho_t$, then the corresponding density matrix in the interaction picture reads $\tilde{\varrho}_t=\sum_n\mu_n(t)\Mat{\tilde{\mu}_n(t)}$.
Since $\varrho_0=\mbb{I}/d=\sum_n\mu_n(0)\Mat{\mu_n(\tau)}$, the trace norm can be written as $\|\varrho_0-\varrho_\tau\|_1=\sum_n|\mu_n(0)-\mu_n(\tau)|$.
We define $\tilde{X}_t\coloneqq\sum_n{\rm sign}[\mu_n(\tau)-\mu_n(0)]\Mat{\tilde{\mu}_n(t)}$, where ${\rm sign}(x)=1$ for $x\ge 0$ and ${\rm sign}(x)=-1$ otherwise.
It can be verified that $\tilde{X}_t$ is self-adjoint and satisfies $\tilde{X}_t^2=\mbb{I}$ and $[\tilde{X}_t,\tilde{\varrho}_t]=0$.
Noting that $\Tr{\tilde{X}_t\dot{\tilde{\varrho}}_t}=\sum_n{\rm sign}[\mu_n(\tau)-\mu_n(0)]\dot{\mu}_n(t)$, we have
\begin{equation}
\begin{aligned}[b]
\|\varrho_0-\varrho_\tau\|_1&=\int_0^\tau\Tr{\tilde{X}_t\dot{\tilde{\varrho}}_t}dt\\
&=\int_0^\tau\avg{\tilde{X}_t,\mca{O}_{\tilde{\varrho}_t}(t,-\ln\tilde{\varrho}_t+\ln\tilde{\pi}_t)}dt\\
&\le\bra{\int_0^\tau\avg{\tilde{X}_t,\mca{O}_{\tilde{\varrho}_t}(t,\tilde{X}_t)}dt}^{1/2}\bra{\int_0^\tau\avg{-\ln\tilde{\varrho}_t+\ln\tilde{\pi}_t,\mca{O}_{\tilde{\varrho}_t}(t,-\ln\tilde{\varrho}_t+\ln\tilde{\pi}_t)}dt}^{1/2}\\
&=\bra{\int_0^\tau\avg{\tilde{X}_t,\mca{O}_{\tilde{\varrho}_t}(t,\tilde{X}_t)}dt}^{1/2}{\Sigma_\tau}^{1/2}.\label{eq:trace.bound.1}
\end{aligned}
\end{equation}
The first term in the last line of Eq.~\eqref{eq:trace.bound.1} can be rewritten as
\begin{equation}
\begin{aligned}[b]
\avg{\tilde{X}_t,\mca{O}_{\tilde{\varrho}_t}(t,\tilde{X}_t)}&=\frac{1}{2}\sum_{k}e^{-\beta\omega_k(t)/2}\avg{\tilde{X}_t,[\tilde{L}_k(t),\sop{\tilde{\varrho}_t}_{\beta\omega_k(t)}([\tilde{L}_k(t)^\dagger,\tilde{X}_t])]}\\
&=\frac{1}{2}\sum_{k}e^{-\beta\omega_k(t)/2}\Tr{[\tilde{X}_t^\dagger,\tilde{L}_k(t)]\sop{\tilde{\varrho}_t}_{\beta\omega_k(t)}([\tilde{L}_k(t)^\dagger,\tilde{X}_t])}\\
&=\frac{1}{2}\sum_{k}e^{-\beta\omega_k(t)/2}\avg{[\tilde{L}_k(t)^\dagger,\tilde{X}_t],\sop{\tilde{\varrho}_t}_{\beta\omega_k(t)}([\tilde{L}_k(t)^\dagger,\tilde{X}_t])}.
\end{aligned}
\end{equation}
Applying Propositions \ref{prop:sop.bound} and \ref{prop:mat.ine}, we have
\begin{equation}\label{eq:trace.bound.2}
\begin{aligned}[b]
&\int_0^\tau\avg{\tilde{X}_t,\mca{O}_{\tilde{\varrho}_t}(t,\tilde{X}_t)}dt\\
&=\int_0^\tau\frac{1}{2}\sum_{k}e^{-\beta\omega_k(t)/2}\avg{[\tilde{L}_k(t)^\dagger,\tilde{X}_t],\sop{\tilde{\varrho}_t}_{\beta\omega_k(t)}([\tilde{L}_k(t)^\dagger,\tilde{X}_t])}dt\\
&\le\int_0^\tau\frac{1}{4}\sum_{k}e^{-\beta\omega_k(t)/2}(e^{\beta\omega_k(t)/2}\Tr{\tilde{\varrho}_t[\tilde{L}_k(t)^\dagger,\tilde{X}_t][\tilde{L}_k(t)^\dagger,\tilde{X}_t]^\dagger}+e^{-\beta\omega_k(t)/2}\Tr{\tilde{\varrho}_t[\tilde{L}_k(t)^\dagger,\tilde{X}_t]^\dagger[\tilde{L}_k(t)^\dagger,\tilde{X}_t]})dt\\
&\le\int_0^\tau\sum_{k}\bras{\Tr{\tilde{\varrho}_t\tilde{L}_k(t)^\dagger\tilde{L}_k(t)}+e^{-\beta\omega_k(t)}\Tr{\tilde{\varrho}_t\tilde{L}_k(t)\tilde{L}_k(t)^\dagger}}dt\\
&=2\int_0^\tau\sum_{k}\Tr{\tilde{\varrho}_t\tilde{L}_k(t)^\dagger\tilde{L}_k(t)}dt\\
&=2\int_0^\tau\sum_{k}\Tr{\varrho_tL_k(t)^\dagger L_k(t)}dt.
\end{aligned}
\end{equation}
From Eqs.~\eqref{eq:trace.bound.1} and \eqref{eq:trace.bound.2}, we obtain a lower bound on the entropy production as
\begin{equation}
\Sigma_\tau\ge\frac{\|{\varrho}_0-{\varrho}_\tau\|_1^2}{2\tau\overline{\gamma}_\tau},\label{eq:trace.bound.3}
\end{equation}
where $\overline{\gamma}_\tau\coloneqq\tau^{-1}\int_0^\tau\sum_{k}\Tr{\varrho_tL_k(t)^\dagger L_k(t)}dt$ is the time average number of jumps and also known as the dynamical activity of the system \cite{Garrahan.2017.PRE,Shiraishi.2018.PRL,Vu.2019.PRE.UnderdampedTUR,Maes.2020.PR}.
Equation~\eqref{eq:trace.bound.3} reduces to the first inequality in Ineq.~(\DistanceBound) in the main text,
\begin{equation}
\beta Q\ge \Delta S+\frac{\|\varrho_0-\varrho_\tau\|_1^2}{2\tau\overline{\gamma}_\tau}.\label{eq:trace.bound.4}
\end{equation}
It is worth noting that the coefficient $\overline{\gamma}_\tau$ can be further bounded from above by the relaxation rates as
\begin{equation}
\overline{\gamma}_\tau\le\tau^{-1}\int_0^\tau\sum_{k}\|L_k(t)\|_\infty^2dt.
\end{equation}
Finally, we further bound the dissipated heat in terms of the reset error $\epsilon=\|\varrho_\tau-\Mat{0}\|_1$.
Applying the triangle inequality for the trace norm, we have
\begin{equation}
\begin{aligned}[b]
\|\varrho_0-\varrho_\tau\|_1&\ge\max\{\|\varrho_0-\Mat{0}\|_1-\|\varrho_\tau-\Mat{0}\|_1,\|\Mat{0}-\varrho_\tau\|_1-\|\Mat{0}-\varrho_0\|_1\}\\
&=|\|\varrho_0-\Mat{0}\|_1-\|\varrho_\tau-\Mat{0}\|_1|\\
&=|2(1-1/d)-\epsilon|.
\end{aligned}
\end{equation}
Here, we have used $\|\varrho_0-\Mat{0}\|_1=2(1-1/d)$ in the last line.
Consequently, we obtain the lower bound in terms of the reset error $\epsilon$ as
\begin{equation}
\beta Q\ge \Delta S+\frac{\|\varrho_0-\varrho_\tau\|_1^2}{2\tau\overline{\gamma}_\tau}\ge \Delta S+\frac{[2(1-1/d)-\epsilon]^2}{2\tau\overline{\gamma}_\tau}.\label{eq:trace.bound.5}
\end{equation}

\subsection{Behavior of $\overline{\gamma}_\tau$ in the long-time limit}
Here we show that the time-averaged dynamical activity $\overline{\gamma}_\tau$ becomes time-independent in the long-time limit $\tau\to\infty$.
In other words, we prove that $\overline{\gamma}_\tau$ converges to a constant as the duration of the protocol $\tau$ increases.
For convenience, we denote the Hamiltonian and jump operators by $H(\lambda_t)$ and $L_k(\lambda_t)$ to explicitly express the dependence on the control protocol $\lambda_t$.
The control protocol $\lambda_t$ can be parameterized as $\lambda_t=\hat{\lambda}_{t/\tau}$, where $\{\hat{\lambda}_t\}_{0\le t\le 1}$ is a fixed reference protocol.
It is convenient to define time-rescaled quantities
\begin{equation}
\hat{\varrho}_t=\varrho_{\tau t},~\hat{\mca{L}}_t=\mca{L}_{\tau t}.
\end{equation}
The Lindblad master equation can be rewritten as $\dot{\hat{\varrho}}_t=\tau\hat{\mca{L}}_t$.
It should be noted that the super-operator $\hat{\mca{L}}_t$ is independent of $\tau$, while the time-rescaled solution $\hat{\varrho}_t$ is not.
As $\tau\to\infty$, the density matrix of the system remains close to the instantaneous equilibrium state.
Therefore, $\hat{\varrho}_t$ can be expanded in terms of $1/\tau$ as \cite{Cavina.2017.PRL}
\begin{equation}
\hat\varrho_t=\hat\pi_t+\frac{1}{\tau}\hat\varrho_t^{(1)}+\frac{1}{\tau^2}\hat\varrho_t^{(2)}+\dots=\hat\pi_t+\tau^{-1}\mca{O}(1).
\end{equation}
Here, $\hat\pi_t=e^{-\beta H(\hat{\lambda}_t)}/\Tr{e^{-\beta H(\hat{\lambda}_t)}}$ is the instantaneous equilibrium state that is independent of $\tau$, and $\hat{\varrho}_t^{(n)}$ are traceless operators.
Using this perturbation expansion, we can calculate $\overline{\gamma}_\tau$ as follows:
\begin{equation}\label{eq:gamma.inf}
\begin{aligned}[b]
\overline{\gamma}_\tau&=\tau^{-1}\int_0^\tau\sum_k\Tr{L_k(\lambda_t)\varrho_tL_k(\lambda_t)^\dagger}dt\\
&=\int_0^1\sum_k\Tr{L_k(\hat{\lambda}_t)\hat{\varrho}_tL_k(\hat{\lambda}_t)^\dagger}dt\\
&=\int_0^1\sum_k\Tr{L_k(\hat{\lambda}_t)\hat{\pi}_tL_k(\hat{\lambda}_t)^\dagger}dt+\tau^{-1}\mca{O}(1)\\
&\eqqcolon\overline{\gamma}_\infty+\tau^{-1}\mca{O}(1).
\end{aligned}
\end{equation}
As $\tau\to\infty$, the second term in Eq.~\eqref{eq:gamma.inf} vanishes, and $\overline{\gamma}_\tau$ thus converges to the first term, $\overline{\gamma}_\infty$, which is a constant independent of $\tau$.

\section{Lower bound in terms of coherence [Ineq.~(\CoherenceBound) in the main text]}
Here we present the derivation of the lower bound in terms of quantum coherence in the main text.

\subsection{Derivation for the $d=2$ case}
To best convey the idea of the derivation, we first present the proof for the $d=2$ case.
The proof for the generic case will be presented later in an analogous way.
A two-level system can be described with two jump operators $L_1(t)=\sqrt{\gamma_1(t)}\dMat{0_t}{1_t}$ and $L_2(t)=\sqrt{\gamma_2(t)}\dMat{1_t}{0_t}$, where $\gamma_i(t)>0$ denotes the decay rate.
Let $\varrho_t=\sum_{m,n}c_{mn}(t)\dMat{m_t}{n_t}$, where $\{\Kt{n_t}\}$ is the energy eigenbasis of the Hamiltonian $H_t$ with the corresponding energy levels $\{\epsilon_n(t)\}$ (i.e., $H_t\Kt{n_t}=\epsilon_n(t)\Kt{n_t}$).
The corresponding density matrix in the interaction picture can be written as $\tilde{\varrho}_t=\sum_{m,n}c_{mn}(t)\dMat{\tilde{m}_t}{\tilde{n}_t}$, where $\Kt{\tilde{n}_t}\coloneqq U_t^\dagger\Kt{n_t}$.
The following term quantifies the quantum coherence generated during the erasure process:
\begin{equation}
\msc{C}_\tau\coloneqq\int_0^\tau C_{\ell_1}(\varrho_t)dt,
\end{equation}
where $C_{\ell_1}(\varrho_t)=\sum_{m\neq n}|c_{mn}(t)|$ is the $\ell_1$-norm coherence with respect to the energy eigenbasis of $H_t$.

Now, define $\tilde{X}_{01}(t)\coloneqq\dMat{\tilde{0}_t}{\tilde{1}_t}$ and $\tilde{X}_{10}(t)\coloneqq\dMat{\tilde{1}_t}{\tilde{0}_t}$, we can calculate
\begin{equation}\label{eq:2d.tmp1}
\begin{aligned}[b]
\Tr{\tilde{X}_{01}(t)^\dagger\dot{\tilde{\varrho}}_t}&=\davgObs{\tilde{0}_t}{\dot{\tilde{\varrho}}_t}{\tilde{1}_t}\\
&=\Br{\tilde{0}_t}\sum_{k}\bra{\tilde{L}_k(t)\tilde{\varrho}_t\tilde{L}_k(t)^\dagger-\frac{1}{2}\{\tilde{L}_k(t)^\dagger\tilde{L}_k(t),\tilde{\varrho}_t\}}\Kt{\tilde{1}_t}\\
&=-\frac{\gamma_1(t)+\gamma_2(t)}{2}c_{01}(t)=-\frac{\gamma_t}{2}c_{01}(t),
\end{aligned}
\end{equation}
where $\gamma_t\coloneqq\gamma_1(t)+\gamma_2(t)=\sum_k\|L_k(t)\|_\infty^2$.
Analogously, we also have $\Tr{\tilde{X}_{10}(t)^\dagger\dot{\tilde{\varrho}}_t}=-\gamma_tc_{10}(t)/2$.
Define $\tilde{X}_t\coloneqq \gamma_t^{-1}\brab{f[c_{01}(t)]^*\tilde{X}_{01}(t)+f[c_{10}(t)]^*\tilde{X}_{10}(t)}$, which is a self-adjoint operator, we obtain the relation
\begin{equation}
|\Tr{\tilde{X}_t^\dagger\dot{\tilde{\varrho}}_t}|=\frac{1}{2}\bras{|c_{01}(t)|+|c_{10}(t)|}=\frac{1}{2}C_{\ell_1}(\varrho_t).
\end{equation}
Here, $f(x)=x^*/|x|$ if $x\neq 0$ and $f(x)=1$ otherwise.
It can also be easily calculated that $\|[\tilde{L}_k(t),\tilde{X}_t]\|_\infty=\gamma_t^{-1}\|\tilde{L}_k(t)\|_\infty$.
Applying the Cauchy--Schwarz inequality (proved in Proposition \ref{prop:cs.ine}), we have
\begin{equation}\label{eq:2d.tmp2}
\begin{aligned}[b]
\frac{1}{2}\msc{C}_\tau&=\frac{1}{2}\int_0^\tau C_{\ell_1}(\varrho_t)dt\\
&=\int_0^\tau|\Tr{\tilde{X}_t^\dagger\dot{\tilde{\varrho}}_t}|dt\\
&=\int_0^\tau\left|\avg{\tilde{X}_t,\mca{O}_{\tilde{\varrho}_t}(t,-\ln\tilde{\varrho}_t+\ln\tilde{\pi}_t)}\right|dt\\
&\le\bra{\int_0^\tau\avg{\tilde{X}_t,\mca{O}_{\tilde{\varrho}_t}(t,\tilde{X}_t)}dt}^{1/2}\bra{\int_0^\tau\avg{-\ln\tilde{\varrho}_t+\ln\tilde{\pi}_t,\mca{O}_{\tilde{\varrho}_t}(t,-\ln\tilde{\varrho}_t+\ln\tilde{\pi}_t)}dt}^{1/2}\\
&=\bra{\int_0^\tau\avg{\tilde{X}_t,\mca{O}_{\tilde{\varrho}_t}(t,\tilde{X}_t)}dt}^{1/2}{\Sigma_\tau}^{1/2}.
\end{aligned}
\end{equation}
The first term in the last line of Eq.~\eqref{eq:2d.tmp2} can be rewritten as
\begin{equation}\label{eq:2d.tmp3}
\begin{aligned}[b]
\avg{\tilde{X}_t,\mca{O}_{\tilde{\varrho}_t}(t,\tilde{X}_t)}&=\frac{1}{2}\sum_{k}e^{-\beta\omega_k(t)/2}\avg{[\tilde{L}_k(t)^\dagger,\tilde{X}_t],\sop{\tilde{\varrho}_t}_{\beta\omega_k(t)}([\tilde{L}_k(t)^\dagger,\tilde{X}_t])}.
\end{aligned}
\end{equation}
Applying Proposition \ref{prop:sop.bound} with $Y=[\tilde{L}_k(t)^\dagger,\tilde{X}_t]$ and $\theta=\beta\omega_k(t)$, we obtain
\begin{equation}\label{eq:2d.tmp4}
\avg{[\tilde{L}_k(t)^\dagger,\tilde{X}_t],\sop{\tilde{\varrho}_t}_{\beta\omega_k(t)}([\tilde{L}_k(t)^\dagger,\tilde{X}_t])}\le\frac{1}{2}(e^{-\beta\omega_k(t)/2}+e^{\beta\omega_k(t)/2})\|[\tilde{L}_k(t)^\dagger,\tilde{X}_t]\|_\infty^2.
\end{equation}
Inserting Eq.~\eqref{eq:2d.tmp4} into Eq.~\eqref{eq:2d.tmp3}, we then have
\begin{equation}
\avg{\tilde{X}_t,\mca{O}_{\tilde{\varrho}_t}(t,\tilde{X}_t)}\le\frac{1}{4}\sum_{k}e^{-\beta\omega_k(t)/2}(e^{-\beta\omega_k(t)/2}+e^{\beta\omega_k(t)/2})\|[\tilde{L}_k(t)^\dagger,\tilde{X}_t]\|_\infty^2=\frac{1}{2}\sum_{k}\|[\tilde{L}_k(t),\tilde{X}_t]\|_\infty^2,\label{eq:2d.tmp5}
\end{equation}
where we have used $\tilde{L}_k(t)^\dagger=e^{\beta\omega_k(t)/2}\tilde{L}_{k'}(t)$, $\tilde{X}_t$ is self-adjoint, and $\|A^\dagger\|_\infty=\|A\|_\infty$ for any operator $A$.
From Eqs.~\eqref{eq:2d.tmp2} and \eqref{eq:2d.tmp5}, we obtain
\begin{equation}
\frac{1}{2}\msc{C}_\tau\le\bra{\frac{1}{2}\int_0^\tau\sum_{k}\|[\tilde{L}_k(t),\tilde{X}_t]\|_\infty^2dt}^{1/2}{\Sigma_\tau}^{1/2}\Rightarrow\Sigma_\tau\ge\frac{\overline{\gamma}_\tau^{\rm R}\msc{C}_\tau^2}{2\tau},
\end{equation}
from which Ineq.~(\CoherenceBound) in the main text is immediately derived,
\begin{equation}
\beta Q \ge \Delta S+\frac{\overline{\gamma}_\tau^{\rm R}\msc{C}_\tau^2}{2\tau}.
\end{equation}
Here, the system-dependent quantity $\overline{\gamma}_\tau^{\rm R}$ is defined via the relation
\begin{equation}
(\overline{\gamma}_\tau^{\rm R})^{-1}\coloneqq\tau^{-1}\int_0^\tau\sum_{k}\|[\tilde{L}_k(t),\tilde{X}_t]\|_\infty^2dt=\tau^{-1}\int_0^\tau(\sum_{k}\|L_k(t)\|_\infty^2)^{-1}dt=\tau^{-1}\int_0^\tau\gamma_t^{-1}dt.
\end{equation}

\subsection{Derivation for the generic case}
We consider the generic case in which each jump operator could describe multiple transitions between energy eigenstates with the same energy change.
For simplicity, we assume that the Hamiltonian's energy levels are non-degenerate, $H_t=\sum_n\epsilon_n(t)\Mat{n_t}$ with $\epsilon_n(t)\neq\epsilon_m(t)$ for all $m\neq n$.
In the case of energy degeneracy, essentially the same result can be obtained by using quantum coherence between energy eigenstates with different energies.
Note that the set of all ordered pairs $(m,n)$ with $m\neq n$ can be divided into several subsets $\{{S}_i\}$ such that all pairs $(m,n)$ in a subset have the same value of the energy change $\epsilon_m(t)-\epsilon_n(t)$.
Then, each jump operator $L_{k}(t)$ can be written in the general form
\begin{equation}
L_{k}(t)=\sum_{(m,n)\in{S}_{i_k}}\Mat{n_t}L_k\Mat{m_t},
\end{equation}
where $i_k$ is some index and $L_k$ is an operator.
It is straightforward to show that the jump operators in the interaction picture read
\begin{equation*}
\tilde{L}_{k}(t)=\sum_{(m,n)\in{S}_{i_k}}\davgObs{n_t}{L_k}{m_t}\dMat{\tilde{n}_t}{\tilde{m}_t}\quad\textrm{and}\quad\tilde{L}_{k}(t)^\dagger\tilde{L}_{k}(t)=\sum_{(m,n)\in{S}_{i_k}}|\davgObs{n_t}{L_k}{m_t}|^2\Mat{\tilde{m}_t}.
\end{equation*}
Define $\tilde{X}_{pq}(t)\coloneqq\dMat{\tilde{p}_t}{\tilde{q}_t}$. Then, for any subset $S_i$ and any pair $(p,q)\in{S}_i$, we have
\begin{equation}\label{eq:tmp.gcc}
\begin{aligned}[b]
&\Tr{\tilde{X}_{pq}(t)^\dagger\dot{\tilde{\varrho}}_t}\\
&=\davgObs{\tilde{p}_t}{\dot{\tilde{\varrho}}_t}{\tilde{q}_t}\\
&=\sum_{k}\Br{\tilde{p}_t}\bra{\tilde{L}_{k}(t)\tilde{\varrho}_t\tilde{L}_{k}(t)^\dagger-\frac{1}{2}\{\tilde{L}_{k}(t)^\dagger\tilde{L}_{k}(t),\tilde{\varrho}_t\}}\Kt{\tilde{q}_t}\\
&=\sum_{k}\Big[{\sum_{(m,n),(m',n')\in{S}_{i_k}}\delta_{pn}\delta_{qn'}\davgObs{n_t}{L_k}{m_t}\davgObs{m_t'}{L_k^\dagger}{n_t'}\davgObs{\tilde{m}_t}{\tilde{\varrho}_t}{\tilde{m}_t'}-\frac{1}{2}\sum_{(m,n)\in{S}_{i_k}}(\delta_{pm}+\delta_{qm})|\davgObs{n_t}{L_k}{m_t}|^2\davgObs{\tilde{p}_t}{\tilde{\varrho}_t}{\tilde{q}_t}}\Big]\\
&=-\frac{1}{2}\sum_{(m,m')\in{S}_i}\gamma_{mm'}^{pq}(t)c_{mm'}(t),
\end{aligned}
\end{equation}
where we have used $\epsilon_{m}(t)-\epsilon_{n}(t)=\epsilon_{m'}(t)-\epsilon_{n'}(t)\Rightarrow\epsilon_{m}(t)-\epsilon_{m'}(t)=\epsilon_{n}(t)-\epsilon_{n'}(t)=\epsilon_{p}(t)-\epsilon_{q}(t)$, given that $\delta_{pn}$ and $\delta_{qn'}$ are nonzero.
The coefficients $\gamma_{mm'}^{pq}(t)$ are defined as
\begin{align}
\gamma_{mm'}^{pq}(t)&\coloneqq -2\sum_{k}\sum_{(m,n),(m',n')\in{S}_{i_k}}\delta_{pn}\delta_{qn'}\davgObs{n_t}{L_k}{m_t}\davgObs{m_t'}{L_k^\dagger}{n_t'},~\textrm{for}~(m,m')\in S_i\setminus(p,q),\\
\gamma_{pq}^{pq}(t)&\coloneqq\sum_k\sum_{(m,n)\in{S}_{i_k}}(\delta_{pm}+\delta_{qm})|\davgObs{n_t}{L_k}{m_t}|^2>0.
\end{align}
Notably, we can prove that the matrix $[\gamma_{mm'}^{pq}(t)]$ is invertible; the proof is presented in Proposition \ref{prop:nsin}.
Thus, there exist coefficients $z_{pq}(t)\in\mbb{C}$ such that
\begin{equation}
\sum_{(p,q)\in{S}_i}z_{pq}(t)\gamma_{mm'}^{pq}(t)=f[c_{mm'}(t)]~\forall (m,m')\in{S}_i,
\end{equation}
where the function $f(\cdot)$ is defined as in the $d=2$ case.
Consequently, we have
\begin{equation}
\sum_{(p,q)\in S_i}z_{pq}(t)\Tr{\tilde{X}_{pq}(t)^\dagger\dot{\tilde{\varrho}}_t}=-\frac{1}{2}\sum_{(m,m')\in S_i}|c_{mm'}(t)|.\label{eq:tmp.const.Xpq}
\end{equation}
Defining the self-adjoint operator $\tilde{X}_t\coloneqq\sum_{i}\sum_{(p,q)\in S_i}z_{pq}(t)^*\tilde{X}_{pq}(t)$ and taking the sum of both sides of Eq.~\eqref{eq:tmp.const.Xpq} over all $i$, we obtain
\begin{equation}
|\Tr{\tilde{X}_t^\dagger\dot{\tilde{\varrho}}_t}|=\frac{1}{2}\sum_{m\neq n}|c_{mn}(t)|=\frac{1}{2}C_{\ell_1}(\varrho_t).
\end{equation}
Following the same procedure as in the $d=2$ case, we obtain Ineq.~(\CoherenceBound) in the main text,
\begin{equation}
\beta Q \ge \Delta S+\frac{\overline{\gamma}_\tau^{\rm R}\msc{C}_\tau^2}{2\tau},
\end{equation}
where $(\overline{\gamma}_\tau^{\rm R})^{-1}\coloneqq\tau^{-1}\int_0^\tau\sum_{k}\|[\tilde{L}_k(t),\tilde{X}_t]\|_\infty^2dt$.
Generally, it is difficult to obtain a more analytical form of $\overline{\gamma}_\tau^{\rm R}$, except when each jump operator characterizes a single transition between two energy levels, i.e., when $L_k(t)=\sqrt{\gamma_k(t)}\dMat{k_t^+}{k_t^-}$ for all $k$; in this case, $\overline{\gamma}_\tau^{\rm R}$ can be written solely in terms of $\gamma_k(t)$.
Nevertheless, we can further bound $\overline{\gamma}_\tau^{\rm R}$ from below in terms of $\davgObs{n_t}{L_k}{m_t}$, which, however, results in a looser bound.
Using the inequalities $\|[A,B]\|_\infty\le 2\|AB\|_\infty$ and $\|AB\|_\infty\le\|A\|_\infty\|B\|_\infty$ for arbitrary operators $A$ and $B$ and noting that $\|\tilde{L}_k(t)^\dagger\|_\infty=\|L_k(t)\|_\infty$, we have
\begin{equation}
\|[\tilde{L}_k(t),\tilde{X}_t]\|_\infty^2\le 4\|L_k(t)\|_\infty^2\|\tilde{X}_t\|_\infty^2.
\end{equation}
Moreover, since $\|A\|_\infty\le\|A\|_{\rm F}\coloneqq\sqrt{\Tr{A^\dagger A}}$ and $\|A\mbm{x}\|_{\rm F}\le\|A\|_\infty\|\mbm{x}\|_{\rm F}$, we can bound $\|\tilde{X}_t\|_\infty^2$ from above as
\begin{equation}
\|\tilde{X}_t\|_\infty^2\le \|\tilde{X}_t\|_{\rm F}^2=\|\Xi_t^{-1}\mbm{f}_t\|_{\rm F}^2\le\|\Xi_t^{-1}\|_\infty^2\|\mbm{f}_t\|_{\rm F}^2,
\end{equation}
where $\Xi_t\coloneqq{\rm diag}([\gamma_{mm'}^{pq}(t)]_{S_1},\dots,[\gamma_{mm'}^{pq}(t)]_{S_N})$ is a block diagonal matrix, $\mbm{f}_t\coloneqq [[f(c_{mm'}(t))]_{S_1},\dots,[f(c_{mm'}(t))]_{S_N}]^\top$ is a vector, and $N$ is the number of subsets $S_i$.
Note that $\|\mbm{f}_t\|_{\rm F}^2=\sum_{m\neq m'}|f(c_{mm'}(t))|^2=d(d-1)$ and $\|\Xi_t^{-1}\|_\infty^2=\max_{i}\|[\gamma_{mm'}^{pq}(t)]_{S_i}^{-1}\|_\infty^2$, we arrive at a lower bound of $\overline{\gamma}_\tau^{\rm R}$ as follows:
\begin{equation}
\overline{\gamma}_\tau^{\rm R}\ge \frac{\tau}{4d(d-1)\int_0^\tau\gamma_t\max_{i}\|[\gamma_{mm'}^{pq}(t)]_{S_i}^{-1}\|_\infty^2dt}.
\end{equation}

\begin{proposition}\label{prop:nsin}
The matrix $\Gamma=[\gamma_{mm'}^{pq}(t)]_{S_i}$ with $(p,q),(m,m')\in S_i$ is invertible. Here, the subscript $(mm')$ and superscript $(pq)$ indicate row and column indices, respectively.
\end{proposition}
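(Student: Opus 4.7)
The plan is to recast invertibility of $\Gamma$ as strict positivity of an associated quadratic form. For any nonzero vector $\mathbf{z}=(z_{mm'})_{(m,m')\in S_i}$, associate the interaction-picture operator $\tilde{Y}\coloneqq\sum_{(m,m')\in S_i}z_{mm'}\dMat{\tilde{m}_t}{\tilde{m}'_t}$. Repeating the manipulation that produced Eq.~\eqref{eq:tmp.gcc}, but applied to this $\tilde{Y}$ in place of $\tilde{\varrho}_t$, yields the compact identity
\begin{equation*}
\mathbf{z}^\dagger\Gamma\mathbf{z}=-2\Tr{\tilde{Y}^\dagger\mca{L}_t^{\rm diss}(\tilde{Y})},\qquad \mca{L}_t^{\rm diss}\coloneqq\sum_k\mca{D}[\tilde{L}_k(t)].
\end{equation*}
Proving $\Gamma$ invertible thus reduces to showing $\mathrm{Re}(\mathbf{z}^\dagger\Gamma\mathbf{z})>0$ whenever $\mathbf{z}\neq 0$: any $\mathbf{z}$ with $\Gamma\mathbf{z}=0$ would give $\mathbf{z}^\dagger\Gamma\mathbf{z}=0$, forcing $\mathbf{z}=0$.

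To analyze this quadratic form, expand the dissipator term by term and use cyclicity of the trace to write
\begin{equation*}
\mathrm{Re}\bigl(-2\Tr{\tilde{Y}^\dagger\mca{L}_t^{\rm diss}(\tilde{Y})}\bigr)=\sum_k\bras{\|\tilde{L}_k(t)\tilde{Y}\|_{\rm F}^2+\|\tilde{Y}\tilde{L}_k(t)^\dagger\|_{\rm F}^2-2\mathrm{Re}\,\Tr{\tilde{Y}^\dagger\tilde{L}_k(t)\tilde{Y}\tilde{L}_k(t)^\dagger}}.
\end{equation*}
The cross term admits the Hilbert--Schmidt representation $\avg{\tilde{L}_k(t)^\dagger\tilde{Y},\tilde{Y}\tilde{L}_k(t)^\dagger}$, so Cauchy--Schwarz together with AM--GM bounds its real part. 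Pairing each jump $\tilde{L}_k(t)$ with its detailed-balance reverse $\tilde{L}_{k'}(t)=e^{\beta\omega_k(t)/2}\tilde{L}_k(t)^\dagger$ and combining the two contributions with matching weights $e^{\pm\beta\omega_k(t)/2}$ recasts the paired summand as a manifestly non-negative expression -- a weighted sum of squared Frobenius norms of commutator-like objects built from $\tilde{L}_k(t)$, $\tilde{L}_k(t)^\dagger$, and $\tilde{Y}$, in the same spirit as the sum-of-squares structure behind Proposition~\ref{prop:cs.ine}.

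The main obstacle is upgrading this non-negativity to strict positivity. Equality in the above sum of squares forces $\tilde{Y}$ to obey an intertwining relation with every pair $(\tilde{L}_k(t),\tilde{L}_k(t)^\dagger)$ simultaneously. Because the Hamiltonian spectrum is assumed non-degenerate so that $\tilde{Y}$ is supported entirely on off-diagonal matrix elements associated with the nonzero gap $\omega_i$, while the positivity hypothesis $\gamma_{pq}^{pq}(t)>0$ for every $(p,q)\in S_i$ guarantees that the jump operators act nontrivially on every level participating in the block, the only operator compatible with all these intertwining constraints is $\tilde{Y}=0$. This forces $\mathbf{z}=0$ and completes the proof of invertibility.
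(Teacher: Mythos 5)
Your reduction of invertibility to strict positivity of the quadratic form $\mathrm{Re}\,(\mathbf{z}^\dagger\Gamma\mathbf{z})=-2\,\mathrm{Re}\,\Tr{\tilde{Y}^\dagger\mca{L}_t^{\rm diss}(\tilde{Y})}$ is a legitimate reformulation, and the Cauchy--Schwarz/AM--GM estimate you sketch is essentially the quadratic-form avatar of what the paper actually proves, namely that $\Gamma$ is \emph{weakly} diagonally dominant. The problem is the last paragraph, which is where the entire content of the proposition lives: weak diagonal dominance (equivalently, non-negativity of your sum of squares) does \emph{not} imply invertibility, and your claim that non-degeneracy of the spectrum together with $\gamma_{pq}^{pq}(t)>0$ forces $\tilde{Y}=0$ in the equality case is asserted rather than proved --- and it is false. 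Concretely, take $d=4$ with energies $(0,\Delta,E,E+\Delta)$, so that the block is $S_i=\{(1,0),(3,2)\}$, and a single jump operator $L=a\,|2\rangle\langle 0|+b\,|3\rangle\langle 1|$ (all transitions carrying the same energy change $E$) together with its detailed-balance reverse. A direct computation gives
\begin{equation*}
\Gamma=\begin{pmatrix}|a|^2+|b|^2 & -2a^*b\\ -2e^{-\beta E}ab^* & e^{-\beta E}(|a|^2+|b|^2)\end{pmatrix},\qquad \det\Gamma=e^{-\beta E}\bra{|a|^2-|b|^2}^2,
\end{equation*}
which vanishes when $|a|=|b|$, even though both diagonal entries are strictly positive and the spectrum is non-degenerate. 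Your nonzero null vector corresponds to a nonzero $\tilde{Y}$ annihilated (in real part) by the quadratic form, contradicting your final claim.

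What rescues the proposition is a hypothesis your proof never touches: irreducibility of the jump network (in the example above the levels split into the disconnected components $\{0,2\}$ and $\{1,3\}$). The paper uses exactly this: after establishing weak diagonal dominance row by row, it invokes the Shivakumar--Chew criterion, which requires at least one \emph{strictly} dominant row together with a chain of nonzero entries connecting every row to it --- both supplied by irreducibility. To repair your argument you would have to carry out the equality analysis honestly: characterize the operators $\tilde{Y}$ for which every Cauchy--Schwarz and AM--GM step is tight simultaneously for all pairs $(\tilde{L}_k,\tilde{L}_{k'})$, and then show that irreducibility propagates the strict inequality available at some "boundary" transition through the whole block. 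That is a genuinely harder analytic task than the Gershgorin-type bookkeeping the paper performs, so as written the proposal does not constitute a proof.
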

\begin{proof}
First, let us state some definitions and existing results.
A matrix $A=[a_{ij}]$ is said to be \emph{weakly} diagonally dominant if $|a_{ii}|\ge\sum_{j(\neq i)}|a_{ij}|$ for every row $i$ of the matrix, where $a_{ij}$ denotes the entry in the $i$th row and $j$th column.
We also say the row $i$ of the matrix is \emph{strictly} diagonally dominant if the strict inequality holds, i.e., $|a_{ii}|>\sum_{j(\neq i)}|a_{ij}|$.
It was proved \cite{Shivakumar.1974.PAMS} that if (i) the matrix $A$ is weakly diagonally dominant and (ii) for each row $i$ there is a sequence of nonzero elements of $A$ of the form $a_{ii_1},a_{i_1i_2},\dots,a_{i_rj}$ with $j$ the index of some strictly diagonally dominant row, then the matrix $A$ is invertible.

It can be shown that the matrix $\Gamma$ is a weakly diagonally dominant matrix, i.e., the following inequality holds for all $(m,m')\in S_i$,
\begin{equation}
|\gamma_{mm'}^{mm'}(t)|\ge\sum_{(p,q)\in S_i\setminus(m,m')}|\gamma_{mm'}^{pq}(t)|.
\end{equation}
This can be proved as follows:
\begin{equation}
\begin{aligned}[b]
\sum_{(p,q)\in S_i\setminus(m,m')}|\gamma_{mm'}^{pq}(t)|&=2\sum_{(p,q)\in S_i\setminus(m,m')}|\sum_{k}\sum_{(m,n),(m',n')\in{S}_{i_k}}\delta_{pn}\delta_{qn'}\davgObs{n_t}{L_k}{m_t}\davgObs{m_t'}{L_k^\dagger}{n_t'}|\\
&\le 2\sum_{(p,q)\in S_i\setminus(m,m')}\sum_{k}\sum_{(m,n),(m',n')\in{S}_{i_k}}\delta_{pn}\delta_{qn'}|\davgObs{n_t}{L_k}{m_t}\davgObs{m_t'}{L_k^\dagger}{n_t'}|\\
&\le \sum_{(p,q)\in S_i\setminus(m,m')}\sum_{k}\sum_{(m,n),(m',n')\in{S}_{i_k}}\bra{\delta_{pn}\delta_{qn'}|\davgObs{n_t}{L_k}{m_t}|^2+\delta_{pn}\delta_{qn'}|\davgObs{m_t'}{L_k^\dagger}{n_t'}|^2}\\
&\le \sum_{k}\big[\sum_{(m,n)\in{S}_{i_k}}{|\davgObs{n_t}{L_k}{m_t}|^2+\sum_{(m',n')\in{S}_{i_k}}|\davgObs{m_t'}{L_k^\dagger}{n_t'}|^2}\big]\\
&=\sum_k\sum_{(p,q)\in{S}_{i_k}}(\delta_{mp}|\davgObs{q_t}{L_k}{p_t}|^2+\delta_{m'p}|\davgObs{q_t}{L_k}{p_t}|^2)\\
&=|\gamma_{mm'}^{mm'}(t)|.
\end{aligned}
\end{equation}
Moreover, since we consider the case where the energy eigenstates are irreducible (i.e., the system state can change from an instantaneous energy eigenstate to any other instantaneous eigenstate via a finite number of jumps), the matrix $\Gamma$ has at least one strictly diagonally dominant row and satisfies the property (ii).
Consequently, the matrix $\Gamma$ is invertible.
\end{proof}

\section{Numerical illustration with the optimal control protocol}
The density matrix can be represented using the Bloch vector, $\varrho_t=(\mbb{I}+\mbm{\sigma}\cdot\mbm{r}_t)/2$, where $\mbm{r}_t\coloneqq [r_x(t),r_y(t),r_z(t)]^\top$ is the Bloch vector and $\mbm{\sigma}\coloneqq[\sigma_x,\sigma_y,\sigma_z]^\top$ is the vector of Pauli matrices.
Note that $\|\mbm{r}_t\|^2\coloneqq r_x(t)^2+r_y(t)^2+r_z(t)^2\le 1$.
The Lindblad master equation can be translated into a differential equation describing the time evolution of the vector $\mbm{r}_t$ as
\begin{equation}
\dot{\mbm{r}}_t=\Upsilon_t\mbm{r}_t+\mbm{b}_t,\label{eq:rt.evol}
\end{equation}
where $\Upsilon_t=[\upsilon_{mn}(t)]\in\mbb{R}^{3\times 3}$ is a real matrix with
\begin{equation}
\upsilon_{mn}(t)=\frac{1}{2}\Tr{i[\sigma_m,\sigma_n]H_t+\sum_k\bras{L_k(t)\sigma_nL_k(t)^\dagger\sigma_m-\frac{1}{2}(L_k(t)^\dagger L_k(t)\sigma_n\sigma_m+L_k(t)^\dagger L_k(t)\sigma_m\sigma_n)}},
\end{equation}
and $\mbm{b}_t\coloneqq [b_x(t),b_y(t),b_z(t)]^\top$ is a real vector with $b_n(t)=\sum_k\Tr{[L_k(t),L_k(t)^\dagger]\sigma_n}/2$.

Since $F(\varrho_\tau,\Mat{0})=(1-r_z(\tau))/2$, the multi-objective functional in the optimization problem can be rewritten in terms of $\mbm{r}_t$ as
\begin{equation}
\mca{F}[\{\epsilon_t,\theta_t\}]=-\frac{1}{2}\lambda\tau\beta\int_0^\tau\Tr{(\mbm{\sigma}\cdot\dot{\mbm{r}}_t)H_t}dt-(1-\lambda)\frac{1-r_z(\tau)}{2}.
\end{equation}
The functional $\mca{F}$ has two objectives, a path cost (dissipated heat) and a terminal cost (quantum fidelity). 
These two objectives are incompatible, i.e., they cannot be simultaneously optimal.
Increasing the quantum fidelity has to pay a price in dissipation.
Conversely, reducing only dissipation could degrade the quantum fidelity between the final state and the ground state.
Therefore, optimizing the functional $\mca{F}$ for all $\lambda\in[0,1)$ is equivalent to finding the boundary of the space of all feasible protocols, where heat dissipation cannot be minimized further without degrading the quantum fidelity, or vice versa.
The protocols forming the boundary are known as the Pareto-optimal protocols, while the collection of the Pareto-optimal protocols is called the Pareto front.

The functional $\mca{F}$ is optimized under the equality constraint Eq.~\eqref{eq:rt.evol} and the inequality constraints $0.4\le\epsilon_t\le 10$ and $-\pi\le\theta_t\le\pi$.
There are two classes of methods for numerically solving the optimal protocol, indirect and direct methods.
The indirect methods are based on the calculus of variations and are difficult to deal with inequality constraints.
Therefore, we employ the latter, which transcribes the infinite-dimensional problem into a finite-dimensional, nonlinear programming problem.
Specifically, we discretize the control parameters into $N=400$ points and optimize the functional $\mca{F}$ with the help of nonlinear programming solvers \cite{Koenemann.2019}.
The discontinuity of the control parameters at times $t=0$ and $t=\tau$ is allowed, which is known as a generic feature of optimal protocols \cite{Schmiedl.2007.PRL,Then.2008.PRE,Aurell.2011.PRL,Solon.2018.PRL}.
It is worth noting that the condition for the validity of the rotating-wave approximation (i.e., $\chi_t\ll\epsilon_t$) is always satisfied for the parameters of the optimal and non-optimal controls in this work.
Here, $\chi_t\coloneqq\alpha\epsilon_t\coth(\beta\epsilon_t/2)/2$ denotes the temporal average relaxation rate.

We vary the value of $\lambda$ and obtain the Pareto front for the $\tau=5$ and $\tau=10$ cases.
The results are plotted in Fig.~\ref{fig:result0}.
As can be seen, there exists a tradeoff between dissipation and erasure fidelity.
To increase the erasure fidelity (i.e., reducing $1-F$), one has to increase heat dissipation.
The area below the Pareto front implies infeasible protocols and characterizes the values of $Q$ and $1-F$ that cannot be achieved simultaneously.
The results are also consistent with our intuition that the longer the operational time, the larger the space of feasible protocols.

\begin{figure}[!h]
\centering
\includegraphics[width=0.6\linewidth]{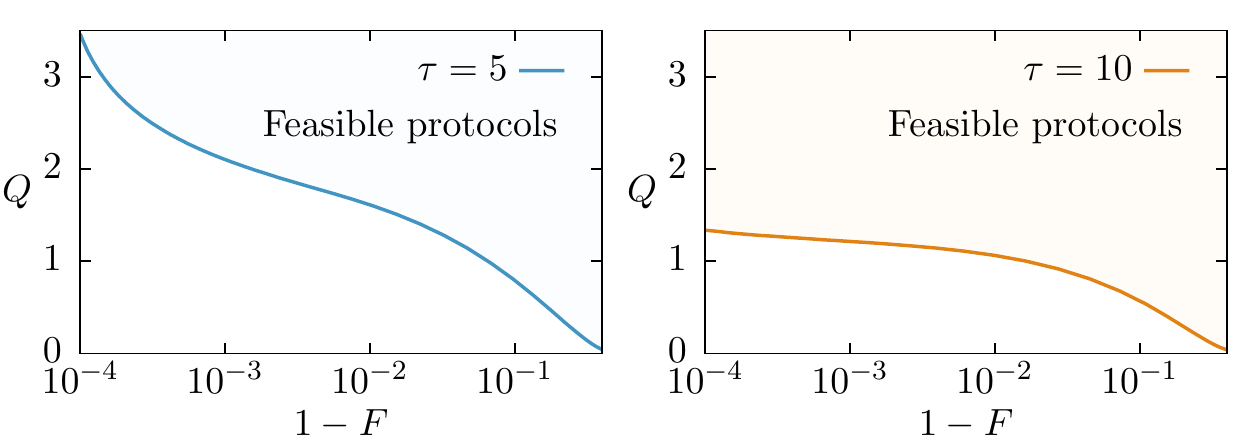}
\protect\caption{The Pareto fronts for different operational times, $\tau=5$ (left panel) and $\tau=10$ (right panel). The filled regions depict feasible protocols, whereas the area below the Pareto front (the solid line) is inaccessible. The parameters are set to $\alpha=0.2$, $\beta=1$, $\epsilon_0=0.4$, and $\epsilon_\tau=10$.}\label{fig:result0}
\end{figure}

\subsection{Effect of quantum coherence on heat dissipation when $\varrho_0\neq\mbb{I}/2$}
To further understand the effect of quantum coherence on heat dissipation, we consider the case where information is described by the state $\varrho_0=(\mbb{I}-0.5\sigma_x-0.5\sigma_y+0.5\sigma_z)/2$.
It is worth emphasizing that resetting the qubit from this initial state to the ground state unavoidably generates coherence, even with the optimal protocol.
We numerically solve the optimal control protocol with $(1-\lambda)/\lambda=10^4$ and compare the results with those obtained by the non-optimal control protocol.
Note that both protocols drive the qubit to the ground state at the same order of error.
The time variation of the control parameters, the time evolution of the qubit, the amount of quantum coherence generated in the energy eigenbasis, and the dissipated heat associated with each protocol are plotted in Fig.~\ref{fig:result1}.
As can be seen in Fig.~\ref{fig:result1}(a), the control parameters in the two protocols vary in different ways.
In particular, the coherence parameter $\theta_t$ decreases from a value close to $\pi$ toward $0$ for the optimal protocol, whereas it gradually increases from $-\pi$ toward $0$ for the non-optimal protocol.
Consequently, the qubit evolves through different paths in each protocol, as shown in Fig.~\ref{fig:result1}(b).
Figure~\ref{fig:result1}(c) shows that the non-optimal protocol produces significantly more quantum coherence than the optimal protocol does.
As a result, the average dissipated heat of the optimal protocol is much smaller than that of the non-optimal protocol, as shown in Fig.~\ref{fig:result1}(d).
This numerical evidence positively supports the claim that the creation of quantum coherence should be avoided when erasing information.

\begin{figure}[!h]
\centering
\includegraphics[width=1.0\linewidth]{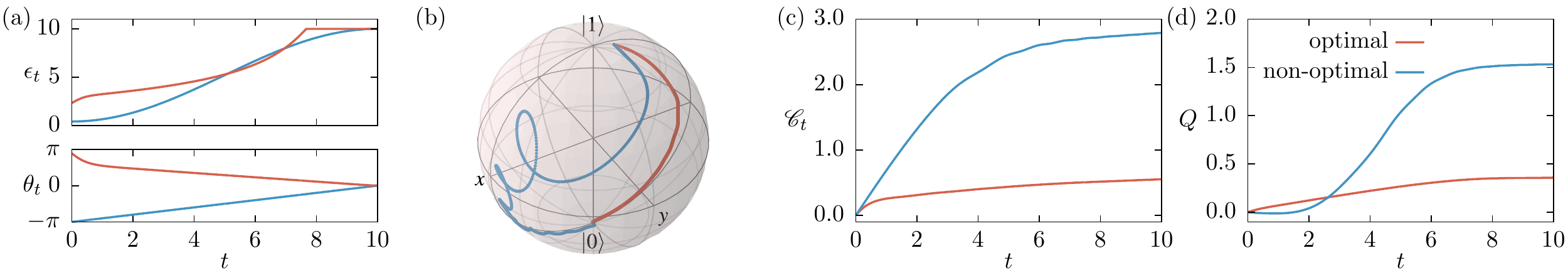}
\protect\caption{Comparison of the optimal and non-optimal control protocols. The numerical results obtained with the optimal and non-optimal protocols are depicted by the red and blue lines, respectively. The non-optimal protocol is given by $\epsilon_t=\epsilon_0+(\epsilon_\tau-\epsilon_0)\sin(\pi t/2\tau)^2$ and $\theta_t=\pi(t/\tau-1)$ \cite{Miller.2020.PRL.QLP}. (a) The time variation of control parameters. (b) A geometrical representation of the time evolution of the qubit. (c) The amount of quantum coherence produced in the energy eigenbasis at each instant of time. (d) The average dissipated heat over time for each protocol. The parameters are set to $\alpha=0.2$, $\beta=1$, $\epsilon_0=0.4$, $\epsilon_\tau=10$, and $\tau=10$.}\label{fig:result1}
\end{figure}

\subsection{Comparison of the bounds Ineqs.~(\DistanceBound) and (\CoherenceBound) when $\msc{C}_\tau\gg 1$}
Here we demonstrate the tightness of the bounds Ineqs.~(\DistanceBound) and (\CoherenceBound) in the presence of large quantum coherence.
As mentioned in the main text, the bound Ineq.~(\DistanceBound) is tighter than the bound Ineq.~(\CoherenceBound) when little or no coherence is produced.
In the following, we show that it is not the case when a large amount of quantum coherence is generated, i.e., the bound Ineq.~(\CoherenceBound) is typically stronger than the bound Ineq.~(\DistanceBound) when $\msc{C}_\tau\gg 1$.
To this end, we consider a non-optimal protocol that produces a large amount of quantum coherence, $\epsilon_t=\epsilon_0+(\epsilon_\tau-\epsilon_0)\sin(\pi t/2\tau)^2$ and $\theta_t=\pi\sin(20\pi t/\tau)^2$.
We vary the operational time $\tau$ and plot the amount of quantum coherence $\msc{C}_\tau$ and the finite-time correction terms, $\overline{\gamma}_\tau^{\rm R}\msc{C}_\tau^2/(2\tau)$ and $\|\varrho_0-\varrho_\tau\|_1^2/(2\tau\overline{\gamma}_\tau)$, as functions of $\tau$ in Fig.~\ref{fig:result2}.
Figure \ref{fig:result2}(a) shows the time variation of control parameters of the non-optimal protocol for the $\tau=50$ case.
As shown in Fig.~\ref{fig:result2}(b), $\msc{C}_\tau\gg 1$ for all $\tau$.
Simultaneously, the bound Ineq.~(\CoherenceBound) is stronger than the bound Ineq.~(\DistanceBound), which is illustrated in Fig.~\ref{fig:result2}(c).

\begin{figure}[!h]
\centering
\includegraphics[width=0.9\linewidth]{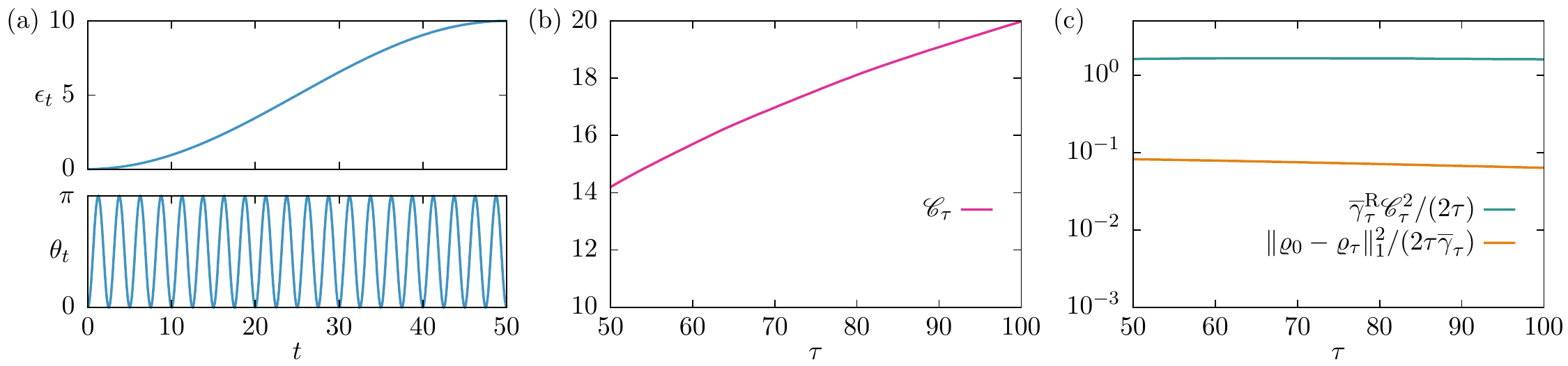}
\protect\caption{Comparison of the derived bounds Ineqs.~(\DistanceBound) and (\CoherenceBound) with a non-optimal control protocol given by $\epsilon_t=\epsilon_0+(\epsilon_\tau-\epsilon_0)\sin(\pi t/2\tau)^2$ and $\theta_t=\pi\sin(20\pi t/\tau)^2$. (a) The time variation of the control parameters for the $\tau=50$ case. (b) The amount of quantum coherence generated in the energy eigenbasis. (c) The correction terms of the derived bounds. The operational time $\tau$ is varied from $50$ to $100$, while the other parameters are set to $\alpha=0.2$, $\beta=1$, $\epsilon_0=0.4$, and $\epsilon_\tau=10$.}\label{fig:result2}
\end{figure}

%